\def\doi{9(2:12)2013}
\newcommand{\lang}{L}
\newcommand{\rank}{\ensuremath{\kappa}}
\newcommand{\cstate}{\ensuremath{c}}
\newcommand{\conc}{\ensuremath{\mathsf{cs}}}
\newcommand{\msg}{\ensuremath{\mathsf{msg}}}
\newcommand{\msgi}{\ensuremath{\mathsf{msg}^{-1}}}
\newcommand{\proc}{\ensuremath{\mathsf{proc}}}
\newcommand{\proci}{\ensuremath{\mathsf{proc}^{-1}}}
\newcommand{\id}{\ensuremath{\mathsf{id}}}
\newcommand{\nx}{\ensuremath{\mathsf{next}}}
\newcommand{\pr}{\ensuremath{\mathsf{prev}}}
\newcommand{\true}{\ensuremath{\mathsf{tt}}}
\newcommand{\N}{\mathbb{N}}
\newcommand{\cA}{\ensuremath{\mathcal{A}}}
\newcommand{\cB}{\ensuremath{\mathcal{B}}}
\newcommand{\Cc}{\ensuremath{\mathcal{C}}}
\newcommand{\cG}{\ensuremath{\mathcal{G}}}
\newcommand{\cM}{\ensuremath{\mathcal{M}}}
\newcommand{\cP}{\ensuremath{\mathcal{P}}}
\newcommand{\pth}[1]{\ensuremath{\left<#1\right>}}
\newcommand{\reach}{\ensuremath{\mathsf{reach}}}
\newcommand{\model}{\ensuremath{\mathsf{mod}}}
\newcommand{\minmodel}[1]{\ensuremath{\llbracket #1\rrbracket}}
\newcommand{\mainstates}{\ensuremath{\mathsf{ms}}}
\newcommand{\poly}[1]{\ensuremath{\mathsf{poly}(#1)}}
\newcommand{\branch}{\ensuremath{b}}
\theoremstyle{plain}
\newtheorem{myclaim}[thm]{Claim}
\begin{document}

\title[PDL with Converse and Repeat for Message-Passing Systems]{Propositional
  Dynamic Logic with Converse and Repeat for Message-Passing Systems}

\author{Roy Mennicke}
\address{Ilmenau University of Technology, Germany}
\email{roy.mennicke@tu-ilmenau.de}

\keywords{message sequence charts, alternating automata, communicating
  finite-state machines, propositional dynamic logic}
\subjclass{F.3.1}
\ACMCCS{[{\bf Theory of computation}]:  Logic---Verification by model checking}

\begin{abstract}
  The model checking problem for propositional dynamic logic (PDL)
  over message sequence charts (MSCs) and communicating finite state
  machines (CFMs) asks, given a channel bound $B$, a PDL
  formula~$\varphi$ and a CFM $\mathcal{C}$, whether every
  existentially $B$-bounded MSC $M$ accepted by $\mathcal{C}$
  satisfies $\varphi$. Recently, it was shown that this problem is
  PSPACE-complete.
  In the present work, we consider CRPDL over MSCs which is PDL
  equipped with the operators converse and repeat. The former enables
  one to walk back and forth within an MSC using a single path
  expression whereas the latter allows to express that a path
  expression can be repeated infinitely often. To solve the model
  checking problem for this logic, we define message sequence chart
  automata (MSCAs) which are multi-way alternating parity automata
  walking on MSCs. By exploiting a new concept called concatenation
  states, we are able to inductively construct, for every CRPDL
  formula $\varphi$, an MSCA precisely accepting the set of models of
  $\varphi$. As a result, we obtain that the model checking problem
  for CRPDL and CFMs is still in PSPACE.
\end{abstract}

\maketitle

\section{Introduction}

Automatic verification is the process of translating a computer system to a
mathematical model, formulating a requirements specification in a formal
language, and automatically checking the obtained model against this
specification. In the past, finite automata, Kripke structures, and Büchi
automata turned out to be suitable formalisms to model the behavior of complex
non-parallel systems. Two of the most common specification languages are the
temporal logics LTL \cite{DBLP:conf/focs/Pnueli77} and CTL \cite{DBLP:conf/lop/ClarkeE81}.
After deciding on a modeling and a specification formalism, automatic
verification melts down to the \emph{model checking problem}: Given a
model~$\cA$ with behavior $\lang(\cA)$ and a specification $\varphi$
representing the expected behavior $\lang(\varphi)$, does
$\lang(\cA)\subseteq \lang(\varphi)$ hold?

Distributed systems exchanging messages can be modeled by
communicating finite-state machines (CFMs) which were introduced in
\cite{DBLP:journals/jacm/BrandZ83}. A CFM consists of a finite number
of finite automata communicating using FIFO channels.  Each run of
such a machine can be understood as a message sequence chart
(MSC). The latter is an established ITU standard and comes with a
formal definition as well as a convenient graphical notation.  In a
simplified model, an MSC can be considered as a structure consisting
of send and receive events which are assigned to unique processes
where the events of each process are linearly ordered. For every send
event, there exists a matching receive event and vice
versa. Unfortunately, the model checking problem for CFMs is
undecidable even for very simple temporal logics -- this is a direct
consequence of the undecidability of the emptiness problem for
CFMs. One solution to this problem is to establish a bound $B$ on the
number of messages pending on a channel. The bounded model checking
problem of CFMs then reads as follows: given a channel bound $B$, a
specification~$\varphi$ and a CFM $\Cc$, does every existentially
$B$-bounded MSC~$M$ accepted by $\mathcal{C}$ satisfy $\varphi$? An
existentially $B$-bounded MSC is an MSC which admits an execution with
$B$-bounded channels.  Using this approach several results for
different temporal logics were obtained in
\cite{DBLP:conf/fsttcs/MadhusudanM01,DBLP:journals/jcss/GenestMSZ06,DBLP:journals/iandc/GenestKM06,DBLP:journals/corr/abs-1007-4764}.

In \cite{DBLP:journals/corr/abs-1007-4764}, a bidirectional
propositional dynamic logic (PDL) was proposed for the automatic
verification of distributed systems modeled by CFMs. This logic was
originally introduced by Fischer and Ladner
\cite{DBLP:journals/jcss/FischerL79} for Kripke structures and allows
to express fundamental properties in an easy and intuitive manner. PDL
for MSCs is closed under negation, it is a proper fragment of the
existential monadic second-order logic (EMSO) in terms of
expressiveness (but it is no syntactic fragment)
\cite{DBLP:journals/corr/abs-1007-4764}, and the logic $\text{TLC}^-$
considered by Peled \cite{DBLP:conf/forte/Peled00} is a fragment of
it.
PDL distinguishes between local and global formulas. The former ones are
evaluated at a specific event of an MSC whereas the latter are Boolean
combinations of local formulas quantifying existentially over all events of an
MSC. Consider for example the local formula $\alpha=p!q\wedge
\neg\pth{\proc^\ast}p?q$. An event satisfies~$\alpha$ if it is a send event of a
message from process $p$ to $q$ which is not followed by a reply message from
$q$ to~$p$. The global formula $\mathsf{E}\alpha$ expresses that there exists
such an event $v$.

By a rather involved translation of PDL formulas into CFMs, Bollig,
Kuske, and Meinecke demonstrated in
\cite{DBLP:journals/corr/abs-1007-4764} that the bounded model
checking problem for CFMs and PDL can be decided in polynomial space.
However, by means of this approach, Bollig et al.~were not able to
support the popular converse operator. The latter, introduced in
\cite{DBLP:conf/focs/Pratt76}, is an extension of PDL which allows to
walk back and forth within an MSC using a single path expression of
PDL. For example one can specify a path expression
$(\proc^{-1};\msg)^\ast$ describing ``zigzag-like'' paths going back
on a process and traversing a send event in an alternating manner. It
is an open question whether PDL formulas enriched with the converse
operator can be translated into CFMs. Bollig et al.~only managed to
provide an operator which enables path expressions to either walk
backward or forward.

In the present work, we consider CRPDL over MSCs which is PDL equipped
with the operators converse ($\_^{-1}$) and repeat ($\_^\omega$)
\cite{DBLP:conf/stoc/Streett81}. The latter allows to express that a
path expression can be repeated infinitely often. For example, an
event $v$ on process $p$ satisfies $\pth{\proc}^\omega$ if there are
infinitely many events on $p$ succeeding $v$.
We are able to demonstrate that the bounded model checking problem of
CFMs and CRPDL is in PSPACE and therefore generalize the model
checking result from \cite{DBLP:journals/corr/abs-1007-4764}. In order
to obtain this result, we define multi-way alternating parity automata
over MSCs which we call local message sequence chart automata (or
local MSCAs for short).
Local MSCAs are started at specific events of an MSC and accept sets
of pointed MSCs which are pairs of an MSC $M$ and an event~$v$
of~$M$. Using a game theoretic approach, it can be shown that local
MSCAs are closed under complementation. We demonstrate that every
local formula $\alpha$ of CRPDL can be translated in polynomial space
into an equivalent local MSCA whose size is linear in the size of
$\alpha$ --- this can be done independently from any channel bound. We
also define global MSCAs consisting of a local MSCA $\cM$ and a set of
global initial states. A global initial state is a tuple of states
$(\iota_1,\iota_2,\ldots,\iota_n)$ where $n$ is the number of
processes. If, for every process~$p$, there exists an accepting run of
$\cM$ starting in the minimal event of $p$ and the initial
state~$\iota_p$, then the global MSCA accepts the whole MSC. For every
global formula $\varphi$, we can construct in polynomial space a
global MSCA $\cG$ such that $\cG$ precisely accepts the set of models
of $\varphi$. After fixing a channel bound $B$, the automaton $\cG$ is
then transformed into a two-way alternating word automaton and, after
that, into a B\"uchi automaton recognizing the set of all $B$-bounded
linearizations of the models of $\varphi$.

In the literature, one can basically find two types of approaches to
turn a temporal formula into a Büchi automaton. On the one hand, Vardi
and others
\cite{DBLP:books/sp/cstoday95/Vardi95,DBLP:conf/mfcs/GastinO03}
transformed LTL formulas into alternating automata in one single step
and, afterwards, these alternating automata were translated into Büchi
automata. On the other hand, there were performed inductive
constructions which lead to a Büchi automaton without the need for an
intermediate step
\cite{DBLP:conf/icalp/KestenPR98,DBLP:journals/fuin/GastinK07,DBLP:journals/iandc/GastinK10,DBLP:journals/corr/abs-1007-4764}.
In the present work, we combine these two approaches to obtain a very
modular and easy to understand proof. For a given \mbox{CRPDL}
formula, we inductively construct an alternating automaton which is
later translated into a Büchi automaton. In this process, we utilize a
new concept called concatenation states.  These special states allow
the concatenation of local MSCAs.  For example, if $\cM$ is the local
MSCA obtained for the formula $\pth{\proc}\true$, then we can
concatenate two copies of $\cM$ to obtain an automaton for the formula
$\pth{\proc;\proc}\true$.

\paragraph{\textbf{Outline.}}
We proceed as follows. In Sect.~\ref{sec:preliminaries}, we define
MSCs, CRPDL, MSCAs, and give introductory examples. In
Sect.~\ref{sec:complementation}, we show that local MSCAs are
effectively closed under complementation. In
Sect.~\ref{sec:translationOfLocalFormulas}, we construct, for every
local CRPDL formula $\alpha$, a local MSCA which precisely accepts the
models of $\alpha$. In Sect.~\ref{sec:translationOfGlobalFormulas}, we
effectively show that, for every global CRPDL formula $\varphi$, the
set of models of $\varphi$ is the language of a global MSCA. In the
sections~\ref{sec:satisfiability} and \ref{sec:modelChecking}, we
prove that the bounded satisfiability problem for CRPDL and the
bounded model checking problem for CRPDL and CFMs both are
PSPACE-complete.

\bigskip

\noindent A conference version of this paper was published as
\cite{DBLP:conf/concur/Mennicke12}.

\paragraph{\textbf{Acknowledgements.}} The author likes to express his
sincere thanks to his doctoral adviser Dietrich Kuske for his guidance
and valuable advice. Furthermore, he is grateful to Benedikt Bollig
for comments leading to a considerable technical simplification. This
paper also greatly benefits from the detailed reviews and helpful
remarks of the anonymous referees.

\section{Preliminaries}\label{sec:preliminaries}

We let $\poly{n}$ denote the set of polynomial functions in one
argument. For every natural number $n\geq1$, we set
$[n]=\{1,2,\ldots,n\}$.

We fix a finite set $\mathbb{P}=\{1,2,\ldots,|\mathbb{P}|\}$ of processes. Let
$\mathsf{Ch}=\{(p,q)\in\mathbb{P}^2\mid p\neq q\}$ denote the set of
\emph{communication channels}. For all $p\in\mathbb{P}$, we define a local
alphabet $\Sigma_p=\{p!q,p?q\mid q\in\mathbb{P}\setminus\{p\}\}$ which we use
in the following way. An event labelled by $p!q$ marks the send event
of a message from process $p$ to process $q$ whereas $p?q$ is the
label of a receive event of a message sent from $q$ to $p$. We set
$\Sigma=\bigcup_{p\in\mathbb{P}}\Sigma_p$. Since $\mathbb{P}$ is finite, the
local alphabets $\Sigma_p$ and $\Sigma$ are also finite. 

\subsection{Message Sequence Charts}

Message sequence charts model the behavior of a finite set of parallel
processes communicating using FIFO channels. The following example
shows that they come with a convenient graphical representation.
\begin{figure}[t]
  \begin{center}
    \begin{tikzpicture} [auto,node distance=0.6cm,>=latex]
      \begin{scope}[rectangle,minimum width=0.55cm,font=\scriptsize]
        \node (p1) [draw] {$1$}; 
        \node (p2) [draw,right of=p1,xshift=1.55cm] {$2$};
      \end{scope}
    
      \foreach \i in {1,2} {
        \path (p\i.south) edge ([yshift=-3.7cm] p\i.south);
      }
    
      \foreach \proc/\number/\y [evaluate=\y as \yn using \y*0.6] in {
        1/1/0,1/2/1,1/3/2,1/4/3,1/5/4,1/6/5,
        2/1/0,2/2/1,2/3/2,2/4/3,2/5/4,2/6/5
      } {
        \node[draw,fill,circle,inner sep=1.5pt] (v\proc\number) [below of=p\proc,yshift=-\yn cm] {};
        \node[anchor=south west] at (v\proc\number) {};
      }
    
      \path[->]
      (v11) edge (v21)
      (v22) edge (v14)
      (v12) edge (v23)
      (v24) edge (v15)
      (v13) edge (v25)
      (v26) edge (v16);
      
        
    \end{tikzpicture}
  \end{center}
  \caption{An example of a finite MSC.}
  \label{fig:MSC}
\end{figure}
\begin{exa}\label{ex:MSC}
  Figure~\ref{fig:MSC} shows a finite MSC $M$ over the set of
  processes $\mathbb{P}=\{1,2\}$.
\end{exa}
In the graphical representation of an MSC $M$ over the set of
processes $\mathbb{P}$, there is a vertical axis for every process from
$\mathbb{P}$. On the edge for process $p\in\mathbb{P}$, the events occurring on $p$
are drawn as small black circles. Thus, a linear ordering
$\preceq_p^M$ on the set of events from process $p$ is implicitly
defined. In the formal definition of an MSC, which we give in the
following, the direct successor relation induced by the linear
ordering $\preceq_p^M$ is given by $\proc_p^M$. For technical
convenience, we force processes to contain at least one
event. Messages sent between two processes are depicted by arrows
pointing from the send event to the matching receive event.  Formally,
messages are represented by the binary relation $\msg^M$ and, for
every send event, there exists a matching receive event and vice
versa.
\begin{defi}
  A \emph{message sequence chart} (\emph{MSC}) is a
  structure $$M=\big(V^M,(\proc_p^M)_{p\in\mathbb{P}},\msg^M,\lambda^M\big)$$
  where
  \begin{iteMize}{$\bullet$}
  \item $V^M$ is a set of \emph{events}, 
  \item $\proc_p^M,\msg^M\subseteq (V^M \times V^M)$ for all $p\in\mathbb{P}$,
  \item $\lambda^M\colon V^M \to \Sigma$ is a labeling function,
  \item for all $p\in\mathbb{P}$, the relation $\proc_p^M$ is the direct
    successor relation of a linear order $\preceq_p^M$ on
    $V_p^M:=\{v\in V^M\mid \lambda(v)\in\Sigma_p\}$,
  \item $(V_p^M,\preceq_p^M)$ is non-empty and finite or isomorphic to
    $(\mathbb{N},\leq)$,
  \item for all $v,w\in V^M$, we have $(v,w)\in\msg^M$ if and only
    if there exists $(p,q)\in\mathsf{Ch}$ such that
    $\lambda^M(v)=p!q$, $\lambda^M(w)=q?p$, and
    \[|\{u\in V^M \mid \lambda^M(u) = p!q, u \preceq_p^M v\}|=|\{u\in
    V^M \mid \lambda^M(u)=q?p,u\preceq_q^M w\}|\,,\]
  \item for every $v\in V^M$ there exists $w\in V^M$ such that $(v,w)\in\msg\cup\msgi$.
  \end{iteMize}
  If $v\in V^M$, then we denote by $P_M(v)$ the process at which $v$
  is located, i.e., $P_M(v)=p$ if and only if
  $\lambda^M(v)\in\Sigma_p$. Finally, if $v\in V^M$, then the pair
  $(M,v)$ is called a \emph{pointed MSC}.
\end{defi}
\begin{defi}
  We fix the set $\mathbb{M}=\{\proc,\proci,\msg,\msgi,\id\}$ of
  directions. An MSC~$M$ induces a partial function $\eta_M:(V^M\times
  V^M) \to \mathbb{M}$. For all $v,v'\in V^M$, we define
  \[
    \eta_M(v,v')=\begin{cases}
      \proc&\text{if $(v,v')\in\proc^M$}\\
      \proci&\text{if $(v',v)\in\proc^M$}\\
      \msg&\text{if $(v,v')\in\msg^M$}\\
      \msgi&\text{if $(v',v)\in\msg^M$}\\
      \id&\text{if $v=v'$}\\
      \text{undefined}&\text{otherwise}
    \end{cases}
  \]
  where $\proc^M=\bigcup_{p\in\mathbb{P}}\proc_p^M$.
\end{defi}

\subsection{Propositional Dynamic Logic with Converse and Repeat}

In this section, we introduce a new logic called propositional dynamic logic
with converse and repeat (or \mbox{CRPDL} for short). In CRPDL, we distinguish between
\emph{local} and \emph{global} formulas.  The former ones are evaluated at
specific events of an MSC. The latter are positive Boolean combinations of
properties of the form ``there exists an event satisfying a local formula'' or
``all events satisfy a local formula''.
\begin{defi}
  \emph{Local formulas} $\alpha$ and \emph{path expressions} $\pi$ of
  CRPDL are defined by the following grammar, where $D\in\mathbb{M}$
  and $\sigma$ ranges over the alphabet $\Sigma$:
  \begin{align*}
    \alpha&::=\sigma\mid\neg\alpha\mid\pth{\pi}\alpha\mid\pth{\pi}^\omega\\
     \pi&::=D\mid\{\alpha\}\mid\pi;\pi\mid\pi+\pi\mid\pi^\ast
  \end{align*}
  Formulas of the form $\pth{\pi}\alpha$ are called \emph{path
    formulas}. The \emph{size} of a local formula $\alpha$ is the
  length of the string $\alpha$.
\end{defi}
Note that $\proci$ and $\msgi$ form the converse operator
\cite{DBLP:conf/focs/Pratt76} which allows to walk back and forth
within an MSC using a single path expression.  The formula
$\pth{\pi}^\omega$ provides the functionality of the repeat operator
\cite{DBLP:conf/stoc/Streett81}. It allows to express that a path
expression can be repeated infinitely often.

Intuitively, a path formula $\pth{\pi}\alpha$ expresses that one can
move along a path described by $\pi$ and then $\alpha$ holds. In the
following formal definition of the semantics of local formulas, we
write $\reach_M(v,\pi)$ to denote the set of events which can be
reached from $v$ using a path described by $\pi$. A formal definition
of $\reach_M(v,\pi)$ is given at the end of
Definition~\ref{def:localFormulas}.
\begin{defi}\label{def:localFormulas}
  Let $(M,v)$ be a pointed MSC, $\sigma\in\Sigma$, $D\in\mathbb{M}$, $\alpha$ be a local
  formula, $\pi,\pi_1,\pi_2$ be path expressions. We define:
  \begin{align*}
    M,v\models\sigma&
      \iff\lambda^M(v)=\sigma\\
    M,v\models\neg\alpha&
      \iff M,v\not\models\alpha\\
    M,v\models\pth{D}\alpha&
      \iff\text{there exists }v'\text{ with }\eta_M(v,v')=D
      \text{ and }M,v'\models\alpha\\
    M,v\models\pth{\{\alpha\}}\beta&
      \iff M,v\models\alpha\text{ and }M,v\models\beta\\
    M,v\models\pth{\pi_1+\pi_2}\alpha&
      \iff M,v\models\pth{\pi_1}\alpha\text{ or }M,v\models\pth{\pi_2}\alpha\\
    M,v\models\pth{\pi_1;\pi_2}\alpha&
      \iff M,v\models\pth{\pi_1}\pth{\pi_2}\alpha\\
    M,v\models\pth{\pi^\ast}\alpha&
      \iff\text{there exists an }n\geq0\text{ with }
      M,v\models(\pth{\pi})^n\alpha\\
    M,v\models\pth{\pi}^\omega&
      \iff\text{there exist infinitely many events $v_0,v_1,\ldots$ such that}\\
    &\hspace{2cm}\text{ $v_0=v$ and $v_{i+1}\in\reach_M(v_i,\pi)$ for all $i\geq0$}
  \end{align*}
  where $\reach_M(v,\pi)$ is inductively defined as follows:
  \begin{align*}
    \reach_M(v,D)&=\begin{cases}\{v'\}&\text{if $\eta_M(v,v')=D$}\\
      \emptyset&\text{otherwise}\end{cases}\\
    \reach_M(v,\{\alpha\})&=\begin{cases}\{v\}&\text{if
      }M,v\models\alpha\\\emptyset&\text{otherwise}\end{cases}\\
    \reach_M(v,\pi_1;\pi_2)&=\textstyle\bigcup_{v'\in \reach_M(v,\pi_1)}\reach_M(v',\pi_2)\\
    \reach_M(v,\pi_1+\pi_2)&=\reach_M(v,\pi_1)\cup\reach_M(v,\pi_2)\\
    \reach_M(v,\pi^\ast)&=\{v\}\cup\textstyle\bigcup_{n\geq1}\reach_M(v,\pi^n)
  \end{align*}
  By $\lang(\alpha)$ we denote the set of pointed MSCs which satisfy $\alpha$.
\end{defi}
We set $\true=\sigma\vee\neg\sigma$ for some $\sigma\in\Sigma$. If
$\alpha=\pth{\pi}\true$, then we define
\[\reach_M(v,\alpha)=\reach_M(v,\pi)\,\text{.}\]
Furthermore, we use $\alpha_1\wedge\alpha_2$ as an abbreviation for
$\pth{\{\alpha_1\}}\alpha_2$ and write $\alpha_1\vee\alpha_2$ for the
formula $\neg(\neg\alpha_1\wedge\neg\alpha_2)$. Finally, for all $q
\in \mathbb{P}$, we define $P_q=\bigvee_{p\in\mathbb{P},q\neq p}(q!p\vee q?p)$. For
every pointed MSC $(M,v)$, we have $M,v\models P_q$ if and only if
$P_M(v)=q$.
\begin{rem}\label{remark:pathFormulas}
  It can be easily seen that $M,v\models\pth{\pi}\alpha$ if and only
  if $M,v\models\pth{\pi;\{\alpha\}}\true$. Because of this fact,
  every time we are dealing with path formulas in the future, we will
  assume that $\alpha=\true$.
\end{rem}
\begin{exa}
  The existential until construct $\alpha\mathsf{EU}\beta$
  \cite{DBLP:journals/tcs/KatzP90} can be expressed by the local formula
  $\pth{(\{\alpha\};(\proc+\msg))^\ast}\beta$.
\end{exa}
We now define global formulas which are positive Boolean combinations
of properties of the form ``there exists an event satisfying a local
formula $\alpha$'' or ``all events satisfy a local formula $\alpha$''.
\begin{defi}
  The syntax of \emph{global formulas} is given by the grammar
  \[
    \varphi::=\mathsf{E}\alpha\mid\mathsf{A}\alpha\mid\varphi\vee\varphi
    \mid\varphi\wedge\varphi
  \]
  where $\alpha$ ranges over the set of local formulas. Their semantics is as
  follows: If $M$ is an MSC, $\alpha$ is a local formula, and
  $\varphi_1,\varphi_2$ are global formulas, then
  \begin{align*}
    M\models\mathsf{E}\alpha&\iff\text{there exists $v\in V^M$ with $M,v\models\alpha$}\,\text{,}\\
    M\models\mathsf{A}\alpha&\iff M,v\models\alpha\text{ for all }v\in V^M \,\text{,}\\
    M\models\varphi_1\vee\varphi_2&\iff M\models\varphi_1\text{ or
    }M\models\varphi_2 \,\text{, and}\\
    M\models\varphi_1\wedge\varphi_2&\iff M\models\varphi_1\text{ and
    }M\models\varphi_2 \,\text{.}
  \end{align*}
  We define the \emph{size} of a global formula $\varphi$ to be the
  length of the string $\varphi$. By $\lang(\varphi)$, we denote the
  set of MSCs $M$ with $M\models\varphi$.
\end{defi}
Note that even though there are no negation operators allowed in global
formulas, the expressible properties are still closed under negation. This is
because conjunction and disjunction operators as well as existential and
universal quantification are available. 
\begin{exa}[\cite{DBLP:journals/corr/abs-1007-4764}]\label{ex:CRPDL}
  Let $\beta_{p}=\pth{\proc^\ast;\msg;\proc^\ast;\msg}P_p$. If $(M,v)$
  is a pointed MSC such that $M,v\models\beta_p$, then process $p$ can
  be reached from $v$ with exactly two messages. If $M$ is the MSC
  from Fig.~\ref{fig:MSC}, then $M,v\models\beta_1$ if and only if $v$
  is one of the first three events on process $1$. The global formula
  $\varphi_p=\mathsf{A}\beta_p$ states that $\beta_p$ holds for every
  event of an MSC $M$ (which in particular implies that $M$ is
  infinite).
\end{exa}
\begin{exa}\label{ex:CRPDL2}
  An MSC $M$ satisfies
  $\mathsf{E}\bigwedge_{p\in\mathbb{P}}(\pth{(\proc+\msg+\proci+\msgi)^\ast}P_p)$
  if and only if the graph $(V^M, \proc^M \cup \msg^M \cup
  (\proc^M)^{-1} \cup (\msg^M)^{-1})$ is connected.
\end{exa}
\begin{exa}
  Now, let $\pi_p=((\proc+\msg)^\ast;\{P_p\})$ for every
  $p\in\mathbb{P}$.  Imagine that $M$ is an MSC which models the
  circulation of a single token granting access to a shared
  resource. Then
  $M\models\mathsf{E}\pth{\pi_1;\pi_2;\ldots;\pi_{|\mathbb{P}|}}^\omega$ if
  and only if no process ever gets excluded from using the shared
  resource.
\end{exa}

\subsection{Message Sequence Chart Automata (MSCA)}

In this section, we give the definition of MSCAs which basically are
multi-way alternating parity automata walking forth and back on the
process and message edges of MSCs.
We first define local MSCAs which are started at individual events of
an MSC.  They also come with a so called concatenation state. This
type of state is used to concatenate local MSCAs in order to obtain
more complex local MSCAs. Using this technique, we will show in a subsequent
section that every local formula of CRPDL can be transformed into a
local MSCA.
\begin{defi}
  If $X$ is a non-empty set, then $\cB^+(X)$ denotes the \emph{set of
    all positive Boolean expressions over $X$} together with the
  expression $\bot$.  The latter expression is always evaluated to
  false. We say that $Y\subseteq X$ is a \emph{model} of
  $E\in\cB^+(X)$ and write $Y\models E$ if~$E$ is evaluated to true
  when assigning true to every element contained in $Y$ and assigning
  false to all other elements from $X\setminus Y$.
  The set $Y\subseteq X$ is a \emph{minimal} model of $E$ if $Y\models
  E$ and $Z\not\models E$ for all $Z\subsetneq Y$.
  We denote the set of all models of $E$ by $\model(E)$ whereas we
  write~$\minmodel{E}$ for the set of all minimal models of $E$.
\end{defi}
For instance, $\{a,b,c\}$, $\{a,b\}$, $\{a,c\}$, $\{b,c\}$, and
$\{a\}$ are all models of the positive Boolean expression
$a\lor(b\land c)\in\cB^+(\{a,b,c\})$. However, only $\{a\}$, and
$\{b,c\}$ are minimal models.
\begin{defi}
  A \emph{local message sequence chart automaton (local MSCA)} is a
  quintuple $\cM=(S,\delta,\iota,\cstate,\rank)$ where
  \begin{iteMize}{$\bullet$}
    \item $S$ is a finite set of states,
    \item $\delta\colon (S\times\Sigma)\to\cB^+(\mathbb{M}\times
    S)$ is a transition function,
    \item $\iota\in S$ is an initial state,
    \item $\cstate\in S$ is a concatenation state, and
    \item $\rank:S\to\mathbb{N}$ is a ranking function.
  \end{iteMize}
  The \emph{size} of $\cM$ is $|S| + |\delta|$. If we do not pay
  attention to the concatenation state $\cstate$, then we sometimes
  write $(S,\delta,\iota,\rank)$ instead of
  $(S,\delta,\iota,\cstate,\rank)$.  If $s\in S$, $\sigma\in\Sigma$,
  and $\tau\in\minmodel{\delta(s,\sigma)}$, then $\tau$ is called a
  \emph{transition}.
\end{defi}
For example, the transition $\tau = \{(\proc, s_1), (\msg, s_2)\}$
which is a minimal model of the expression $(\proc, s_1) \land ((\msg
, s_2) \lor (\msgi, s_2))$ can be interpreted in the following way:
Let us assume that $\cM$ is in state $s\in S$ at an event $v$. If it
performs the transition $\tau$, then it changes, in parallel, from the
state $s$ into the states $s_1$ and $s_2$, i.e., the run splits. In
the case of state $s_1$, it moves to the event succeeding the event
$v$ on the current process. For $s_2$, the automaton walks along a
message edge to the receive event of the message sent in $v$.
Hence, the conjunctive connectives implement \emph{universal
  branching} whereas the disjunctive connectives realize
\emph{existential branching} and nondeterminism, respectively.  As a
consequence, local MSCAs are alternating automata and their runs may
split. Therefore, in order to be able to define runs of local MSCAs,
we first introduce labelled trees.

Later, in the construction of local MSCAs from local formulas, the
concatenation state~$\cstate$ of $\cM$ will be used to concatenate
local MSCAs for simple local formulas in order to obtain automata
which are equivalent to more complex formulas.
\begin{exa}\label{ex:MSCA}
  Let $p\in\mathbb{P}$ be fixed. Consider the local MSCA $\cM = (S, \delta,
  s_1, \rank)$ which is depicted in
  Fig.~\ref{fig:exampleLocalMSCA}. Its set of states $S$ consists of
  the three states $s_1$, $s_2$, and $s_3$ where $s_1$ is the initial
  state. Each state is depicted by a circle. The label of the circles
  also tells us the rank of each state. For example, $s_1\mid 1$
  expresses that $\rank(s_1) = 1$. Furthermore, we have $\rank(s_2)=1$ and
  $\rank(s_3)=0$. Transitions are depicted by arrows. For instance,
  the arrow from $s_1$ to $s_2$ labelled by $\Sigma$ and $\msg$ says
  that the automaton can make a transition from $s_1$ to $s_2$ by
  following a message edge and going to the matching receive event,
  respectively. We write $\Sigma$ because this transition can be
  executed no matter what the label of the current event
  is. Alternatively, the automaton can stay in state $s_1$ by going to
  the successor of the current event --- this is expressed by the loop
  at $s_1$. More formally, for all $\sigma \in \Sigma$, we have
  $\delta(s_1,\sigma)=(\proc,s_1)\lor(\msg,s_2)$, $\delta(s_3,\sigma)
  = \bot$, and
  \[\delta(s_2,\sigma)=\begin{cases}
    (\id,s_3)&\text{if $\sigma=q!p$ where $q\in\mathbb{P}\setminus\{p\}$}\\
    (\proc,s_2)&\text{otherwise.}
  \end{cases}\]
\end{exa}
\begin{figure}
  \begin{center}\small
    \pgfdeclarelayer{background}
    \pgfdeclarelayer{foreground}
    \pgfsetlayers{background,main,foreground}
    \begin{tikzpicture}[>=stealth',semithick,,shorten >=1pt,shorten <=1pt,
      on grid,node distance=3cm,initial text=,initial distance=0.7cm,
      every state/.style={fill=white,draw=black,text=black,
      minimum size=1cm,inner sep=1pt}]
      
      \begin{scope}
        \node[state,initial] (s1) {$s_1\mid1$};
        \node[state,right of=s1] (s2) {$s_2\mid 1$};
        \node[state,right of=s2,xshift=1.5cm] (s3) {$s_3\mid 0$};
      \end{scope}
      
      \path[->] (s1) edge (s2);
      \path[->] (s1) edge [loop above] node {$\Sigma$, $\proc$} ();
      \path[->] (s2) edge (s3);
      \path[->] (s2) edge [loop above] node {$\Sigma$, $\proc$} ();
      
      \node[anchor=south west] at ([xshift=0.4cm,yshift=0.1cm] s1)
      {$\Sigma$};

      \node[anchor=south east] at ([xshift=-0.4444cm,yshift=0.1cm] s2)
      {$\msg$};

      \node[anchor=south west] at ([xshift=0.4cm,yshift=0.1cm] s2)
      {$\{q!p\mid q\neq p\}$};

      \node[anchor=south east] at ([xshift=-0.4444cm,yshift=0.1cm] s3)
      {$\id$};
      
    \end{tikzpicture}
  \end{center}
  \caption{The local MSCA $\cM$ from Example \ref{ex:MSCA}.}
  \label{fig:exampleLocalMSCA}
\end{figure}
Note that the above example makes use of existential branching only
whereas the MSCA of the next example also implements universal
branching.
\begin{exa}\label{ex:MSCA2}
  Consider the local MSCA from Fig.~\ref{fig:exampleLocalMSCA2}. Note
  that universal branching is depicted by forked arrows. We have
  $\cM'=(S\cup\{t_1,t_2\},\delta\cup\delta',t_1,\rank\cup\rank')$
  where $\cM=(S,\delta,s_1,\rank)$ is the local MSCA from
  Example~\ref{ex:MSCA}, $\rank'(t_1)=1$, $\rank'(t_2)=0$, and
  $\delta(t_1,\sigma)=(\id,t_2)\land(\id,s_1)$ and
  $\delta(t_2,\sigma)=(\proc,t_1)$ for all $\sigma\in\Sigma$.
\end{exa}
\begin{figure}
  \begin{center}\small
    \pgfdeclarelayer{background}
    \pgfdeclarelayer{foreground}
    \pgfsetlayers{background,main,foreground}
    \begin{tikzpicture}[>=stealth',semithick,,shorten >=1pt,shorten <=1pt,
      on grid,node distance=3cm,initial text=,initial distance=0.7cm,
      every state/.style={fill=white,draw=black,text=black,
      minimum size=1cm,inner sep=1pt}]
      
      \node[state,initial] (t1) {$t_1\mid 1$};
      \node[state,below of=t1] (t2) {$t_2\mid 0$};
      \node[state,right of=t1,yshift=-1.5cm] (s1) {$s_1\mid 1$};
      \node[state,right of=s1] (s2) {$s_2\mid 1$};
      \node[state,right of=s2,xshift=1.5cm] (s3) {$s_3\mid 0$};

      \node[outer sep=0pt,inner sep=0pt] (helper) at
      ([xshift=-2.3cm,yshift=0.5cm] s1) {};
      
      \path[->] (s1) edge (s2);
      \path[->] (s1) edge [loop above] node {$\Sigma$, $\proc$} ();
      \path[->] (s2) edge (s3);
      \path[->] (s2) edge [loop above] node {$\Sigma$, $\proc$} ();

      \path[shorten >=-1pt] (t1) edge (helper);

      \path[shorten <=-1pt,->] (helper) edge[bend right=20] (s1);
      \path[shorten <=-1pt,->] (helper) edge[bend left] (t2);

      \path[->] (t2) edge[bend left] (t1);

      \node[anchor=south west] at ([xshift=0.5cm,yshift=0.1cm] s1) {$\Sigma$};
      \node[anchor=south east] at ([xshift=-0.5cm,yshift=0.1cm] s2) {$\msg$};
      \node[anchor=south west] at ([xshift=0.5cm,yshift=0.1cm] s2)
      {$\{q!p\mid q\neq p\}$};
      \node[anchor=south east] at ([xshift=-0.5cm,yshift=0.1cm] s3) {$\id$};

      \node[anchor=north west] at ([xshift=0.5cm,yshift=-0.2cm] t1) {$\Sigma$};
      \node[anchor=south east] at ([xshift=-0.5cm,yshift=0.1cm] s1) {$\id$};
      \node[anchor=south west] at ([xshift=0.5cm,yshift=0.1cm] t2) {$\id$};

      \node[anchor=north east] at ([xshift=-0.5cm,yshift=-0.2cm] t1) {$\proc$};
      \node[anchor=south east] at ([xshift=-0.5cm,yshift=0.1cm] t2) {$\Sigma$};
    \end{tikzpicture}
  \end{center}
  \caption{The local MSCA $\cM'$ from Example \ref{ex:MSCA2}.}
  \label{fig:exampleLocalMSCA2}
\end{figure}
\begin{defi}
  A \emph{tree} is a directed, connected, cycle-free graph $(C,E)$
  with the set of nodes $C$ and the set of edges $E$ such that there
  exists exactly one node with no incoming edges (which is called
  \emph{root}) and all other nodes have exactly one incoming edge.
\end{defi}
We now define so-called $S$-labelled trees over pointed MSCs where $S$
is an arbitrary set. Later, the set $S$ will be the set of states of a
local MSCA.
\begin{defi}\label{def:labelledTree}
  Let $S$ be an arbitrary set, $M$ be an MSC, and $v\in M$. An
  \emph{$S$-labelled tree over $(M,v)$} is a quintuple
  $\rho=(C,E,r,\mu,\nu)$ where
  \begin{enumerate}[(1)]
    \item $(C,E)$ is a tree with root $r$,
    \item $\mu:C\to S$ is a labeling function,
    \item $\nu:C\to V^M$ is a positioning function with $\nu(r)=v$,
    \item $\mu(y_1)\neq\mu(y_2)$ or $\nu(y_1)\neq\nu(y_2)$ for all
      $(x,y_1),(x,y_2)\in E$ with $y_1\neq y_2$, and
    \item $\eta_M(\nu(x),\nu(y))$ is defined for all $(x,y)\in E$.
  \end{enumerate}
  The elements of $C$ are called \emph{configurations}. If $x\in C$,
  then $E_\rho(x)=\{y\in C\mid (x,y)\in E\}$ denotes the set of the
  direct successor configurations of $x$ in $\rho$.  For convenience,
  we identify~$\mu$ with its natural extension, i.e.,
  $\mu(x_1x_2x_3\ldots)=\mu(x_1)\mu(x_2)\mu(x_3)\ldots\in S^\ast\cup
  S^\omega$.
\end{defi}
We use $S$-labelled trees to define runs of local MSCAs.  The
condition (4) has no influence on the expressiveness of local MSCAs
but simplifies the proofs in
Section~\ref{sec:translationOfLocalFormulas}. Intuitively, it prevents
a local MSCA from doing unnecessary work. By item (5), we ensure that
an MSCA cannot jump within an MSC but must move along process or
message edges.
\begin{defi}
  Let $S$ be a set, $(M,v)$ be a pointed MSC, and
  $\rho=(C,E,r,\mu,\nu)$ be an $S$-labelled tree over $(M,v)$. A
  \emph{path} in $\rho$ of length $n\in\N\cup\{\omega\}$ is a sequence
  $x_1x_2x_3\ldots\in C^n$ such that $x_{i+1}\in E_\rho(x_i)$ for all
  $1\leq i<n$. It is a \emph{branch of $\rho$} if $x_1=r$ and
  $E_\rho(x_n)=\emptyset$ (provided that $n\in\N$).
\end{defi}
That means every branch of $\rho$ begins in the root of $\rho$ and
either leads to some leaf of $\rho$ or is infinite.
\begin{defi}
  If $C'\subseteq C$ such that $(C',E\cap(C'\times C'))$ is a tree
  with root $r'$, we denote by $\rho\restriction C'$ the
  \emph{restriction of $\rho$ to $C'$}, i.e., the $S$-labelled tree
  $(C',E',r',\mu',\nu')$ where $E'=E\cap(C'\times C')$,
  $\mu'=\mu\restriction C'$, and $\nu'=\nu\restriction C'$.
\end{defi}
We want the runs of local MSCAs to be maximal. That means that, during
a run, a local MSCA is forced to execute a transition if it is able to
do so. If the MSCA is unable to proceed, we say that it is stuck.
\begin{defi}\label{def:stuck}
  Let $M$ be an MSC and $\cM=(S,\delta,\iota,\rank)$ be a local
  MSCA. The automaton~$\cM$ is \emph{stuck} at $v\in V^M$ in the state
  $s\in S$ if for every transition
  $\tau\in\minmodel{\delta(s,\lambda^M(v))}$ there exists a movement
  $(D,s')\in\tau$ such that there exists no event $v'\in V^M$ with
  $\eta_M(v,v')=D$.
\end{defi}
We are now prepared to define runs of local MSCAs.
\begin{defi}\label{def:run}
  Let $\cM=(S,\delta,\iota,\rank)$ be a local MSCA and
  $\rho=(C,E,r,\mu,\nu)$ be an $S$-labelled tree over a pointed MSC
  $(M,v)$.  We define $\mathsf{tr}_\rho:C\to 2^{\mathbb{M}\times S}$
  to be the function which maps every $x\in C$ to the set
  \[\left\{ \big( \eta_M( \nu(x), \nu(x')), \mu(x') \big) \mid x' \in
    E_\rho(x) \right\}\,.\]
  The tree $\rho$ is a \emph{run} of $\cM$ on $(M,v)$ if
  $\mu(r)=\iota$ and, for all $x\in C$, the \emph{run condition} is
  fulfilled, i.e.,
  \begin{iteMize}{$\bullet$}
  \item if $E_\rho(x)\neq\emptyset$, then
    $\mathsf{tr}_\rho(x)\in\minmodel{\delta(\mu(x),\lambda^M(\nu(x)))}$,
    and
  \item if $E_\rho(x)=\emptyset$, then $\cM$ is stuck at the event
    $\nu(x)$ in state $\mu(x)$.
  \end{iteMize}
\end{defi}
\begin{defi}\label{def:accepting}
  Let $(s_i)_{i\geq1}\in S^\ast\cup S^\omega$ be a sequence of
  states. By $\mathsf{inf}((s_i)_{i\geq1})$, we denote the set of
  states occurring infinitely often in $(s_i)_{i\geq1}$. If
  $(s_i)_{i\geq1}$ is finite, then it is \emph{accepting} if it ends
  in a state $s$ whose rank $\rank(s)$ is even.  If it is infinite, it
  is accepting if the minimum of the ranks of all states occurring
  infinitely often is even, i.e., $\min\{\rank(s)\mid
  s\in\mathsf{inf}((s_i)_{i\geq1})\}$ is even.
  
  If $\rho$ is a run of $\cM$, and $\branch$ is a branch of $\rho$, then
  $\branch$ is \emph{accepting} if its label $\mu(\branch)$ is
  accepting. A run $\rho$ of $\cM$ is \emph{accepting} if every branch
  of $\rho$ is accepting.
  By $\lang(\cM)$, we denote the set of all pointed MSCs $(M,v)$ for
  which there exists an accepting run of $\cM$. Furthermore, for
  all~$p \in \mathbb{P}$, $\lang_p(\cM)$ is the set of MSCs $M$ with
  $(M,v)\in\lang(\cM)$ where $v$ is the minimal element from $V_p^M$
  with respect to $\preceq^M_p$.
\end{defi}
\begin{exa}
  Let $\cM$ and $\cM'$ be the MSCAs from the Examples \ref{ex:MSCA}
  and \ref{ex:MSCA2}, respectively. It can be easily checked that,
  for every pointed MSC $(M,v)$, we have $(M,v)\in\lang(\cM)$ if and
  only if $M,v\models\beta_p$ where
  $\beta_p=\pth{\proc^\ast;\msg;\proc^\ast;\msg}P_p$ is the formula
  from Example~\ref{ex:CRPDL}.
  In contrast, a pointed MSC $(M,v)$ is accepted by $\cM'$ if and only
  if $M,v'\models\beta_p$ for all $v'\in V^M$ with $v\preceq_p^M v'$.
\end{exa}
We also introduce the notion of global MSCAs which come as a local
MSCA together with a set of global initial states.
\begin{defi}
  A \emph{global message sequence chart automaton} (global MSCA) is a
  tuple $\cG=(\cM,I)$ where $\cM=(S,\delta,\iota,\rank)$ is a local
  MSCA and $I\subseteq S^{|\mathbb{P}|}$ is a set of global initial states.
  The language of $\cG$ is defined
  by
  \[\lang(\cG)=\bigcup_{(s_1,\ldots,s_{|\mathbb{P}|})\in I }\bigcap_{p\in\mathbb{P}}
    \lang_p(S,\delta,s_p,\rank)\,.\]
  The size of $\cG$ is the size of $\cM$.
\end{defi}
Intuitively, an MSC $M$ is accepted by $\cG$ if and only if there exists a
global initial state $(s_1,s_2,\ldots,s_{|\mathbb{P}|})\in I$ such that, for every
$p\in\mathbb{P}$, the local MSCA $\cM$ accepts $(M,v_p)$ when started in the
state $s_p$ where $v_p$ is the minimal event on process $p$.
\begin{exa}
  Let $\cG=(\cM',\{(t_1,\ldots,t_1)\})$ be the global MSCA where
  $\cM'$ is the local MSCA from Example~\ref{ex:MSCA2}. We have
  $M\in\lang(\cG)$ if and only if $M\models\varphi_p$ where
  $\varphi_p$ is the global formula from Example~\ref{ex:CRPDL}.
\end{exa}

\section{Closure under Complementation}
\label{sec:complementation}

If $\cM$ is a local MSCA, then a local MSCA $\cM^\#$ recognizing the
complement of $\lang(\cM)$ can be easily obtained. Basically one just
needs to exchange $\land$ and $\lor$ in the image of the transition
function of $\cM$ and update the ranking function.

To make this more precise, let us first define the dual expression
$\widetilde{E}$ of a positive Boolean expression $E$.
\begin{defi}
  Let $X$ be a set and $E\in\cB^+(X)$. Then the \emph{dual expression
  $\widetilde{E}$ of $E$} denotes the positive Boolean expression obtained by
  exchanging $\land$ and $\lor$ in $E$.
\end{defi}
Let us state the following two easy lemmas on positive Boolean expressions and
their dual counterparts.
\begin{lem}\label{lemma:dualExpression}
  Let $X$ be a set and $E\in\cB^+(X)$. Then, for all $Y\in\model(E)$ and
  $Z\in\model(\widetilde{E})$, we have $Y\cap Z\neq\emptyset$.
\end{lem}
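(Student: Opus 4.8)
The plan is to prove the statement by structural induction on the positive Boolean expression $E$, exploiting the fact that forming the dual $\widetilde{E}$ exchanges the roles of conjunction and disjunction while leaving the atoms untouched.

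For the base cases, first observe that if $E=\bot$ then $\model(E)=\emptyset$, so there is no $Y\in\model(E)$ and the claim holds vacuously (and symmetrically if $\widetilde{E}=\bot$). If $E=x$ is a single atom $x\in X$, then $\widetilde{E}=x$ as well; every $Y\in\model(E)$ satisfies $x\in Y$ and every $Z\in\model(\widetilde{E})$ satisfies $x\in Z$, so $x\in Y\cap Z$ and hence $Y\cap Z\neq\emptyset$.

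For the inductive step I distinguish the two connectives. If $E=E_1\land E_2$, then $\widetilde{E}=\widetilde{E_1}\lor\widetilde{E_2}$. Any $Y\models E$ satisfies both $Y\models E_1$ and $Y\models E_2$, while any $Z\models\widetilde{E}$ satisfies $Z\models\widetilde{E_1}$ or $Z\models\widetilde{E_2}$; choosing an index $i\in\{1,2\}$ for which $Z\models\widetilde{E_i}$ and pairing it with $Y\models E_i$, the induction hypothesis applied to $(E_i,\widetilde{E_i})$ yields $Y\cap Z\neq\emptyset$. The case $E=E_1\lor E_2$ is symmetric: here $\widetilde{E}=\widetilde{E_1}\land\widetilde{E_2}$, so now $Z$ satisfies both duals while $Y$ satisfies at least one of $E_1,E_2$; selecting the matching index and invoking the induction hypothesis again gives the conclusion.

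The only point requiring attention is the bookkeeping of the quantifier asymmetry between the two sides: in each inductive case exactly one of $Y$ and $Z$ is guaranteed to satisfy \emph{both} subexpressions while the other need satisfy only \emph{one} of them, and it is precisely the duality swap of $\land$ and $\lor$ that makes the ``both'' side and the ``one'' side align on a common index $i$, so that the induction hypothesis applies to the pair $(E_i,\widetilde{E_i})$. I expect no genuine obstacle beyond this; in particular the argument never appeals to minimality of models (the set $\minmodel{E}$ plays no role), only to the way $\model(\cdot)$ respects the Boolean structure of $\land$ and $\lor$.
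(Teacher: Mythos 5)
Your proof is correct and follows essentially the same route as the paper: structural induction on $E$, where in the $\land$ case the model $Y$ satisfies both conjuncts and the dual model $Z$ satisfies at least one disjunct of $\widetilde{E_1}\lor\widetilde{E_2}$, so the induction hypothesis applies to the matching index (and symmetrically for $\lor$). Your explicit treatment of the $\bot$ case is a small addition the paper omits, but it changes nothing substantive since $\model(\bot)=\emptyset$ makes the claim vacuous there.
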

\begin{proof}
  If $E=a$ for some $a\in X$, then the lemma easily follows. For the
  induction step, let us assume that $E=E_1\land E_2$ such that, for
  all $i\in[2]$, $Y\in\model(E_i)$, and $Z\in\model(\widetilde{E}_i)$,
  we have $Y\cap Z\neq\emptyset$. If $Y\in\model(E)$ and
  $Z\in\model(\widetilde{E})$, then, without loss of generality,
  $Y\models E_1$ and $Z\models\widetilde{E}_1$. It follows from the
  induction hypothesis that $Y\cap Z\neq\emptyset$.
  The case $E=E_1\lor E_2$ is shown analogously.
\end{proof}
\begin{lem}\label{lemma:dualTransition}
  Let $X$ be a set and $E\in\cB^+(X)$. If $Z\subseteq X$ such that $Z\cap
  Y\neq\emptyset$ for all $Y\in\model(E)$, then
  $Z\in\model(\widetilde{E})$.
\end{lem}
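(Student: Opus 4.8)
The plan is to prove the contrapositive of Lemma~\ref{lemma:dualTransition} together with an induction on the structure of $E$, mirroring the inductive proof of Lemma~\ref{lemma:dualExpression}. First I would reformulate the statement: I want to show that if $Z \not\in \model(\widetilde{E})$, then there exists some $Y \in \model(E)$ with $Z \cap Y = \emptyset$. Equivalently, setting $W = X \setminus Z$ (the complement of $Z$), the condition ``$Z \cap Y = \emptyset$ for some $Y \in \model(E)$'' becomes ``$W \supseteq Y$ for some $Y \in \model(E)$'', i.e.\ ``$W \in \model(E)$'' since $\model(E)$ is upward closed (any superset of a model is a model). So the key observation I would isolate at the outset is the \emph{duality between complementation and the two expressions}: for every $W \subseteq X$, we have $W \in \model(E)$ if and only if $X \setminus W \not\in \model(\widetilde{E})$. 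Once this duality is established, the lemma follows immediately: if $Z \cap Y \neq \emptyset$ for all $Y \in \model(E)$, then in particular $W = X \setminus Z$ is not a superset of any model of $E$, so $W \not\in \model(E)$, hence by the duality $Z = X \setminus W \in \model(\widetilde{E})$.

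The heart of the argument is therefore the duality $W \models E \iff (X\setminus W) \not\models \widetilde E$, which I would prove by structural induction on $E$. In the base case $E = a$ with $a \in X$, we have $W \models a$ iff $a \in W$ iff $a \notin X\setminus W$ iff $(X\setminus W)\not\models a = \widetilde a$. For the base case $E=\bot$, note $\widetilde\bot = \bot$ and $W\not\models\bot$ for every $W$, which matches since $X\setminus W\not\models\bot$ always holds as well. For $E = E_1 \land E_2$, we have $\widetilde{E} = \widetilde{E_1} \lor \widetilde{E_2}$, and
\[
  W \models E_1 \land E_2
  \iff W \models E_1 \text{ and } W \models E_2\,.
\]
By the induction hypothesis this is equivalent to $(X\setminus W) \not\models \widetilde{E_1}$ and $(X\setminus W) \not\models \widetilde{E_2}$, which is exactly the negation of $(X\setminus W) \models \widetilde{E_1} \lor \widetilde{E_2}$, i.e.\ $(X\setminus W)\not\models\widetilde E$. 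The case $E = E_1 \lor E_2$ is symmetric, using $\widetilde{E} = \widetilde{E_1} \land \widetilde{E_2}$ and De~Morgan in the same fashion. This completes the induction.

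The only subtle point I anticipate, and the step I would treat most carefully, is the passage via the complement $W = X\setminus Z$ and the use of upward closure of $\model(E)$. Concretely, I must justify that ``$Z \cap Y = \emptyset$ for some $Y \in \model(E)$'' is genuinely equivalent to ``$X\setminus Z \in \model(E)$'': the direction from a witnessing $Y$ to $X\setminus Z$ uses that $Y \subseteq X\setminus Z$ and that supersets of models are models, while the converse is trivial by taking $Y = X\setminus Z$ itself. This upward closure is an elementary property of positive Boolean expressions (assigning true to more elements of a monotone expression can only keep it true), but since it is the linchpin connecting the set-theoretic hypothesis to the duality lemma, I would state it explicitly rather than leave it implicit. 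Apart from that, the whole argument is a routine structural induction, so I expect no real obstacle beyond bookkeeping; the modest challenge is simply choosing the cleanest equivalent formulation so that the induction goes through without case explosion.
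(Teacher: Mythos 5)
Your core argument is correct and genuinely differs in organization from the paper's. The paper proves the implication directly by structural induction on $E$: in the $\lor$ case it observes that every model of $E_i$ is a model of $E$ and applies the induction hypothesis to both disjuncts; in the $\land$ case it argues by contradiction, extracting $Y_1\in\model(E_1)$ and $Y_2\in\model(E_2)$ disjoint from $Z$ and noting that $Y_1\cup Y_2\models E$. You instead prove a stronger statement --- the complementation duality $W\models E\iff X\setminus W\not\models\widetilde{E}$ --- by a clean induction, and then derive the lemma in one line. (In fact the direction you need does not even require the upward-closure property you single out: if $X\setminus Z$ were a model of $E$, it would itself be a model disjoint from $Z$, contradicting the hypothesis directly.) Your route buys more than the paper's: it is an equivalence, from which Lemma~\ref{lemma:dualExpression} also follows immediately, whereas the paper proves the two lemmas by separate inductions.

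However, your handling of the base case $E=\bot$ is wrong. The duality fails there: $W\models\bot$ is always false, while $X\setminus W\not\models\widetilde{\bot}=\bot$ is always true, so the claimed equivalence reads ``false iff true''. This cannot be repaired, because the lemma itself is false for $E=\bot$: in that case $\model(E)=\emptyset$, so the hypothesis ``$Z\cap Y\neq\emptyset$ for all $Y\in\model(E)$'' holds vacuously for every $Z$, yet no $Z$ is a model of $\widetilde{\bot}=\bot$. The statement is thus implicitly restricted to $\bot$-free expressions, and indeed the paper's induction silently omits $\bot$, treating only atoms, $\land$, and $\lor$. The fix for your write-up is simply to delete the $\bot$ case and state the duality for $\bot$-free positive expressions; everything else stands.
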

\begin{proof}
  If $E=a$ for some $a\in X$, then the lemma easily follows. For the
  induction step, let $E_1,E_2\in\cB^+(X)$ such that, for
  all $i\in[2]$, the following holds: if $Z\subseteq X$ and $Z\cap
  Y\neq\emptyset$ for all $Y\in\model(E_i)$, then
  $Z\in\model(\widetilde{E}_i)$.

  For the case $E=E_1\lor E_2$, let $Z\subseteq X$
  such that $Z\cap Y\neq\emptyset$ for all $Y\in\model(E)$. If
  $i\in[2]$ and $Y\in\model(E_i)$, then $Y\models E$. Hence, $Z\cap
  Y\neq\emptyset$ for all $i\in[2]$ and $Y\in\model(E_i)$. From our
  induction hypothesis it follows that $Z\models\widetilde{E}_1$ and
  $Z\models\widetilde{E}_2$ and, therefore, $Z\models\widetilde{E}$.
  Now, let us consider the case $E=E_1\land E_2$. Towards a
  contradiction, suppose that there exists a $Z\subseteq X$ such that
  $Z\cap Y\neq\emptyset$ for all $Y\in\model(E)$ and
  $Z\not\models\widetilde{E}$. Since
  $\widetilde{E}=\widetilde{E}_1\lor\widetilde{E}_2$, we have
  $Z\not\models\widetilde{E}_1$ and
  $Z\not\models\widetilde{E}_2$. From our induction hypothesis it
  follows that there exist $Y_1\in\model(E_1)$ and $Y_2\in\model(E_2)$
  with $Z\cap Y_1=Z\cap Y_2=\emptyset$. Since we also have $Y_1\cup
  Y_2\models E$, this is a contradiction to our definition of $Z$.
\end{proof}
We are now prepared to dualize local MSCAs.
\begin{defi}\label{def:dual_msca}
  Let $\cM=(S,\delta,\iota,\cstate,\rank)$ be a local MSCA.  The
  \emph{dual MSCA} $\cM^\#$ is the local MSCA
  $(S,\delta^\#,\iota,\cstate,\rank^\#)$ where
  \begin{iteMize}{$\bullet$}
    \item $\rank^\#(s)=\rank(s)+1$ for all $s\in S$ and
    \item $\delta^\#(s,\sigma)=\widetilde{\delta(s,\sigma)}$ for all
      $s\in S$ and $\sigma\in\Sigma$.
  \end{iteMize}
\end{defi}
\begin{rem}\label{remark:stateSequenceAcceptance}
  Let $\cM=(S,\Delta,\iota,\rank)$ be a local MSCA and
  $(s_i)_{i\geq1}\in S^\infty$ be a sequence of states. Because of our
  definition of $\rank^\#$, a state $s\in S$ has an even rank in $\cM$
  if and only if it has an odd rank in $\cM^\#$.  It follows that
  $(s_i)_{i\geq1}$ is accepting in $\cM$ if and only if it is not
  accepting in $\cM^\#$.
\end{rem}
If $(M,v)$ is a pointed MSC, $\rho$ is a run of $\cM$ on $(M,v)$, and
$\rho^\#$ is a run of $\cM^\#$ on $(M,v)$, then one can observe that
$\rho$ contains a branch $x_1x_2x_3\ldots$ and $\rho^\#$ contains a
branch $x_1'x_2'x_3'\ldots$ such that
$\mu(x_1x_2x_3\ldots)=\mu(x_1'x_2'x_3'\ldots)$, i.e. they are labelled
by the same sequence of states. Because of the fact stated in
Remark~\ref{remark:stateSequenceAcceptance}, $x_1x_2x_3\ldots$ is
accepting in~$\cM$ if and only if $x_1'x_2'x_3'\ldots$ is not
accepting in~$\cM^\#$.
By means of this observation, a result on parity games, and the ideas
presented in \cite{DBLP:journals/tcs/MullerS87}, we prove the
following theorem:
\begin{thm}\label{theorem:closedUnderComplement}
  If $\cM$ is a local MSCA and $(M,v)$ is a pointed MSC, then
  $$(M,v)\in\lang(\cM)\iff(M,v)\notin\lang(\cM^\#)\,\text{.}$$
\end{thm}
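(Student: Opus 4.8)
The plan is to prove the theorem via a parity game argument, which is the standard route for establishing closure under complementation for alternating parity automata.

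**Setting up the game.** First I would define, for a fixed pointed MSC $(M,v)$ and a local MSCA $\cM=(S,\delta,\iota,\rank)$, a two-player parity game $\cG(\cM,M,v)$ between the two players "Automaton" (seeking to build an accepting run) and "Pathfinder" (seeking to exhibit a rejecting branch). A position records a current event $w\in V^M$ together with a current state $s\in S$. From a position $(w,s)$, Automaton chooses a transition $\tau\in\minmodel{\delta(s,\lambda^M(w))}$ that is executable at $w$ (i.e. for every $(D,s')\in\tau$ there is some $w'$ with $\eta_M(w,w')=D$); if $\cM$ is stuck at $(w,s)$, then Automaton loses immediately (this models a non-extendable non-accepting leaf) unless $\rank(s)$ is even, in which case play terminates with the rank parity deciding the winner. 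After Automaton picks $\tau$, Pathfinder chooses one movement $(D,s')\in\tau$ and the corresponding successor event $w'$ with $\eta_M(w,w')=D$, moving play to $(w',s')$. The priority of a position $(w,s)$ is $\rank(s)$, and infinite plays are won according to the parity condition with the minimum-rank convention of Definition~\ref{def:accepting}. The key bookkeeping lemma is that Automaton has a winning strategy from the initial position $(v,\iota)$ if and only if $(M,v)\in\lang(\cM)$: a winning strategy for Automaton unfolds into an accepting run (every branch corresponds to a play, which Pathfinder cannot win), and conversely an accepting run gives Automaton a strategy defeating every Pathfinder choice.

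**Relating $\cM$ and $\cM^\#$ via duality.** The heart of the argument is that the game $\cG(\cM^\#,M,v)$ is precisely the \emph{dual} game: by Lemmas~\ref{lemma:dualExpression} and~\ref{lemma:dualTransition}, minimal models of $\widetilde{\delta(s,\sigma)}$ are exactly the minimal "blocking sets" of $\delta(s,\sigma)$, so the roles of the two players swap — what Automaton chooses in $\cG(\cM,M,v)$ (a minimal model $\tau$) corresponds to Pathfinder's choice of one element from each original model, and vice versa. Combined with Remark~\ref{remark:stateSequenceAcceptance} (every state flips rank parity, so a state sequence is accepting in $\cM$ iff it is rejecting in $\cM^\#$), this shows that $\cG(\cM^\#,M,v)$ is the game obtained from $\cG(\cM,M,v)$ by interchanging the two players and complementing the winning condition. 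Hence Automaton wins $\cG(\cM^\#,M,v)$ if and only if Pathfinder wins $\cG(\cM,M,v)$.

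**Invoking determinacy.** Finally I would invoke the determinacy of parity games (the "result on parity games" the authors allude to): exactly one of the two players has a winning strategy from the initial position. Chaining the equivalences yields
\[
  (M,v)\in\lang(\cM)
  \iff \text{Automaton wins }\cG(\cM,M,v)
  \iff \text{Pathfinder wins }\cG(\cM,M,v),
\]
and the latter, by the duality of the two games, holds iff Automaton does \emph{not} win $\cG(\cM^\#,M,v)$, i.e. iff $(M,v)\notin\lang(\cM^\#)$, which is the claim.

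**Main obstacle.** The routine parts are the game construction and the duality bookkeeping, which follow Muller and Schupp~\cite{DBLP:journals/tcs/MullerS87}. The delicate point I expect to require care is the correct treatment of \emph{stuck} positions and finite branches: the acceptance condition in Definition~\ref{def:accepting} treats finite branches (ending where $\cM$ is stuck) by the parity of the terminal rank, so the game must terminate plays at stuck positions with the winner determined by that rank's parity, and one must verify that this terminal convention is itself correctly complemented by the rank shift $\rank^\#=\rank+1$. Getting the "maximality of runs" interacting correctly with the executability condition on transitions — so that Automaton being forced to move exactly mirrors the run condition — is the subtle technical kernel of the proof.
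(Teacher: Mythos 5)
Your proposal is correct and follows essentially the same route as the paper's proof: the paper likewise defines the two-player game, shows acceptance is equivalent to Automaton having a winning strategy (Lemmas~\ref{lemma:ifAccRunThenAutomatonWins} and~\ref{lemma:ifAutomatonWinsThenAccRun}), establishes the player swap between $G(\cM,M,v)$ and $G(\cM^\#,M,v)$ via the dual-expression Lemmas~\ref{lemma:dualExpression} and~\ref{lemma:dualTransition} (Lemmas~\ref{lemma:ifAutomatonThenPathfinder} and~\ref{lemma:ifPathfinderThenAutomaton}), and concludes with memoryless determinacy (Prop.~\ref{prop:gamesAreDetermined}). The only blemish is a slip in your final displayed chain: ``Automaton wins $\cG(\cM,M,v)$ iff Pathfinder wins $\cG(\cM,M,v)$'' is false as written; the chain should run $(M,v)\in\lang(\cM)$ iff Automaton wins $\cG(\cM,M,v)$ iff (by determinacy) Pathfinder does \emph{not} win $\cG(\cM,M,v)$ iff (by your duality statement, contraposed) Automaton does \emph{not} win $\cG(\cM^\#,M,v)$ iff $(M,v)\notin\lang(\cM^\#)$, so that determinacy and duality are each invoked exactly once and in that order.
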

The rest of this section prepares the proof of the above
theorem. The actual proof can be found on page
\pageref{proof:closedUnderComplement}.
\begin{defi}
  Let $\cM=(S,\delta,\iota,\cstate,\rank)$ be a local MSCA and $(M,v)$
  be a pointed MSC. With $\cM$ and the pointed MSC $(M,v)$, we
  associate a \emph{game} $G(\cM,M,v)$ played by the two players
  \emph{Automaton} and \emph{Pathfinder} in the arena
  $(C_A,C_P,E_A,E_P)$ where $C_A = V^M \times S$, $C_P = V^M \times
  2^{\mathbb{M}\times S}$, $E_A\subseteq C_A\times C_P$, $E_P\subseteq
  C_P\times C_A$,
  \begin{multline*}
    \big((v,s),(v,\tau)\big)\in E_A\iff\tau\in\minmodel{\delta(s,\lambda^M(v))}
    \text{ and, for all $(D,s')\in\tau$, there exists}\\
    \text{an event $v'\in V^M$ such that $\eta_M(v,v')=D$}\,\text{,}
  \end{multline*}
  and
  \[\big((v,\tau),(v',s)\big)\in E_P\iff\text{there exists
    $D\in\mathbb{M}$ such that $(D,s)\in\tau$ and
  $\eta_M(v,v')=D$}\,\text{.}\]
  $C_A$ is the set of \emph{game positions} of the player Automaton.
  Analogously, at a position from $C_P$ it is Pathfinder's turn. The game
  position $(v,\iota)$ is called the \emph{initial position}.
\end{defi}
A \emph{play} of $G(\cM,M,v)$ starts at the initial position
$(v,\iota)$ from the set $C_A$, i.e., the player Automaton has to move
first. He chooses a transition $\tau$ from
$\minmodel{\delta(\iota,\lambda^M(v))}$ resulting in a game position
$(v,\tau)\in C_P$. Now, it is Pathfinder's turn who has to pick a
movement $(D,s)$ from $\tau$. This leads to the game position
$(v',s)\in C_A$ with $\eta_M(v,v')=D$. After that, Automaton has to
move next and so on. More formally, we define:
\begin{defi}
  Let $\cM=(S,\delta,\iota,\rank)$ be a local MSCA and $(M,v)$ be a
  pointed MSC. A \emph{partial play} $\xi$ of $G(\cM,M,v)$ is a
  sequence of one of the following two forms:
  \begin{enumerate}[(1)]
  \item $\xi=\big((v_i,s_i)(v_i,\tau_i)\big)_{1\leq i\leq n}\in
    (C_A C_P)^n$ where
    \begin{iteMize}{$\bullet$}
    \item $n\geq 1$
    \item $(v_1,s_1)=(v,\iota)$
    \item $((v_i,s_i),(v_i,\tau_i))\in E_A$ for all $1\leq i\leq n$
    \item $((v_i,\tau_i),(v_{i+1},s_{i+1}))\in E_P$ for all $1\leq
      i<n$
    \end{iteMize}
  \item $\xi=\big((v_i,s_i)(v_i,\tau_i)\big) _{1\leq i< n}
    (v_n,s_n)\in (C_P C_A)^{n-1} C_A$ where
    \begin{iteMize}{$\bullet$}
    \item $n\geq 1$
    \item $(v_1,s_1)=(v,\iota)$
    \item $((v_i,s_i),(v_i,\tau_i))\in E_A$ for all $1\leq i<n$
    \item $((v_i,\tau_i),(v_{i+1},s_{i+1}))\in E_P$ for all $1\leq
      i<n$
    \end{iteMize}
  \end{enumerate}
  The sequence $(s_i)_{1\leq i\leq n}\in S^n$ is called the \emph{label}
  of $\xi$. By $\xi\restriction C_A$ we denote the sequence
  $(v_1,s_1)(v_2,s_2)\ldots$ which is obtained by restricting $\xi$ to
  the positions from $C_A$.

  The sequence
  $\xi=(v_1,s_1)\big((v_i,\tau_i),(v_{i+1},s_{i+1})\big)_{1\leq i< n}$
  with $n\in(\mathbb{N}\setminus\{0\})\cup\{\infty\}$ is a \emph{play}
  of $G(\cM,M,v)$ if the following conditions are fulfilled:
  \begin{iteMize}{$\bullet$}
  \item $\big((v_i,s_i)(v_i,\tau_i)\big) _{1\leq i\leq j}
    (v_{j+1},s_{j+1})\in (C_A C_P)^j C_A$ is a partial
    play for all $0\leq j<n$
  \item if $n\in\mathbb{N}$, then there does not exit a $(v,\tau)\in
    C_P$ such that $\big((v_n,s_n),(v,\tau)\big)\in E_A$, i.e., the
    player Automaton cannot move any more.
  \end{iteMize}
  If $\xi$ is a play, then the label of $\xi$ is the sequence
  $s_1s_2s_3\ldots\in S^\infty$. Automaton is declared the
  \emph{winner} of the play $\xi$ if the label of $\xi$ is accepting
  in $\cM$. Otherwise, the play is won by Pathfinder.
\end{defi}
We define (memoryless) (winning) strategies in the usual way:
\begin{defi}
  A strategy of player Automaton in the game $G(\cM,M,v)$ is a total
  function $f:((C_A C_P)^\ast C_A) \to C_P$. A (partial)
  play $\xi=(v_1,s_1)\big((v_i,\tau_i)(v_{i+1},s_{i+1})\big)_{1\leq
    i<n}$ is called a \emph{(partial) $f$-play} if
  $(v_i,\tau_i)=f\big((v_1,s_1)(v_1,\tau_1)\ldots(v_i,s_i)\big)$ for
  every $1\leq i<n$. The strategy $f$ is called \emph{memoryless} if
  $f(\xi_1)=f(\xi_2)$ for all $(v,s)\in C_A$ and
  $\xi_1,\xi_2\in(C_A C_P)^\ast\{(v,s)\}$.  Furthermore, $f$ is
  a \emph{winning strategy} if every $f$-play of $G(\cM,M,v)$ is won
  by the player Automaton --- no matter what the moves of Pathfinder
  are.
\end{defi}

A (memoryless) (winning) strategy for player Pathfinder is defined
analogously. Note that we can consider a memoryless strategy $f$ as a
function $f:C_A\to C_P$.
Even though we require a strategy to be a total function, we often
define a concrete strategy only partially and assume that all other
(uninteresting) values are mapped to a fixed game position from~$C_P$.

In the following, let $\cM=(S,\delta,\iota,\rank)$ be an MSCA and
$(M,v)$ be a pointed MSC. Furthermore, let $(C_A,C_P,E_A,E_P)$ be the
arena of $G(\cM,M,v)$ and $(C_A,C_P^\#,E_A^\#,E_P^\#)$ be the arena of
$G(\cM^\#,M,v)$.
Firstly, let us state the fact that parity games enjoy memoryless
determinacy.
\begin{prop}[\cite{DBLP:conf/dagstuhl/Kusters01}]\label{prop:gamesAreDetermined}
  From any game position in $G(\cM,M,v)$, either Automaton or
  Pathfinder has a memoryless winning strategy.
\end{prop}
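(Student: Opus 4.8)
The plan is to exhibit $G(\cM,M,v)$ as a genuine parity game and then invoke the classical memoryless determinacy theorem for such games \cite{DBLP:conf/dagstuhl/Kusters01}. Recall that the arena $(C_A,C_P,E_A,E_P)$ is already bipartite, with Automaton moving from positions in $C_A$ and Pathfinder from positions in $C_P$, and that the winner of a play is decided purely by the parity acceptance condition applied to the label $s_1 s_2 s_3 \ldots \in S^\infty$. To cast this as a priority game, I would equip every position with a natural-number priority as follows: a position $(v,s)\in C_A$ receives priority $\rank(s)$, while a position $(v,\tau)\in C_P$ receives a fixed priority $m$ strictly larger than $\max_{s\in S}\rank(s)$. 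Since every infinite play alternates between $C_A$ and $C_P$ and hence visits infinitely many Automaton positions, the minimum priority occurring infinitely often coincides with $\min\{\rank(s)\mid s\in\mathsf{inf}((s_i)_{i\geq1})\}$; the dummy priority $m$ on Pathfinder positions is never the minimum and therefore never affects the outcome. Thus for infinite plays the priority condition matches Definition~\ref{def:accepting} exactly.

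The first real obstacle is that $G(\cM,M,v)$ admits finite plays: a play terminates precisely when Automaton is stuck, i.e.\ reaches a position $(v_n,s_n)$ from which $E_A$ offers no move, and such a play is won by Automaton if and only if $\rank(s_n)$ is even. Standard parity-game statements assume every vertex has a successor, so I would first transform the arena into one without dead-ends in an outcome-preserving way. Concretely, at each terminal Automaton position $(v_n,s_n)$ I would add a self-loop and keep its priority $\rank(s_n)$. The unique infinite continuation then repeats $(v_n,s_n)$ forever, so its minimum infinitely-occurring priority is $\rank(s_n)$, which is even iff Automaton should win the original finite play. Hence finite and infinite plays are handled uniformly, and the modified game $\widehat{G}$ is a parity game on a total arena whose winning regions agree positionwise with those of $G(\cM,M,v)$.

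With $\widehat{G}$ set up, I would apply the memoryless determinacy theorem for parity games: from every position, exactly one of the two players has a memoryless winning strategy. Since the self-loops added at terminal positions are forced moves carrying no choice, any memoryless winning strategy in $\widehat{G}$ restricts to a memoryless winning strategy in $G(\cM,M,v)$, and the winning regions transfer back unchanged. This yields the claim that from any game position in $G(\cM,M,v)$ either Automaton or Pathfinder has a memoryless winning strategy. The main care required is the bookkeeping in the two preceding paragraphs --- verifying that the assigned priorities reproduce the min-parity condition of Definition~\ref{def:accepting} on infinite plays and that the self-loop gadget reproduces the finite-play convention --- after which the conclusion is immediate from the cited result.
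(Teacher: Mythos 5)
Your proposal is correct, and it takes the same route as the paper: the paper offers no proof at all for this proposition, treating it as an instance of the cited memoryless determinacy theorem for parity games \cite{DBLP:conf/dagstuhl/Kusters01}. Your reduction --- assigning rank-based priorities to Automaton positions, a harmless dummy priority to Pathfinder positions, and a self-loop gadget to reconcile the finite-play convention (Automaton wins a terminal play iff the last rank is even) with total parity-game arenas --- is exactly the routine bookkeeping the paper leaves implicit in that citation.
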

We now establish a connection between accepting runs of $\cM$ and winning
strategies of the player Automaton.
\begin{lem}\label{lemma:ifAccRunThenAutomatonWins}
  If $(M,v)$ is accepted by $\cM$, then Automaton has a winning
  strategy in the game $G(\cM,M,v)$.
\end{lem}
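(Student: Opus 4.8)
The plan is to turn an accepting run of $\cM$ into a winning strategy for Automaton by letting him simply \emph{follow the run}. So fix an accepting run $\rho=(C,E,r,\mu,\nu)$ of $\cM$ on $(M,v)$. I would define a (history-dependent) strategy $f$ together with, for every partial $f$-play, a corresponding path in $\rho$ beginning at the root. Concretely, I maintain the invariant that after a partial play ending in the Automaton position $(v_i,s_i)$ there is a configuration $x_i\in C$ with $\nu(x_i)=v_i$ and $\mu(x_i)=s_i$, the sequence $x_1x_2\ldots x_i$ being a path from $r=x_1$. At such a position I let Automaton play the transition $\mathsf{tr}_\rho(x_i)$.

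The first thing to check is that this is a legal move, i.e.\ $\big((v_i,s_i),(v_i,\mathsf{tr}_\rho(x_i))\big)\in E_A$. If $x_i$ is not a leaf, the run condition gives $\mathsf{tr}_\rho(x_i)\in\minmodel{\delta(s_i,\lambda^M(v_i))}$, and by the very definition of $\mathsf{tr}_\rho$ every movement $(D,s')\in\mathsf{tr}_\rho(x_i)$ arises from some child $x'\in E_\rho(x_i)$ with $\eta_M(v_i,\nu(x'))=D$, so the required event $v'=\nu(x')$ exists; hence the transition is executable. If $x_i$ is a leaf, the run condition says that $\cM$ is stuck at $v_i$ in state $s_i$, which is exactly the situation in which Automaton has no move and the play terminates; here the value of $f$ is irrelevant.

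Next I would verify that each move of Pathfinder advances the corresponding path by exactly one configuration. When Pathfinder chooses a movement $(D,s_{i+1})\in\mathsf{tr}_\rho(x_i)$ and thereby the Automaton position $(v_{i+1},s_{i+1})$ with $\eta_M(v_i,v_{i+1})=D$, this movement originates from a child $x'$ of $x_i$ with $\eta_M(\nu(x_i),\nu(x'))=D$ and $\mu(x')=s_{i+1}$. Since a direction $D\in\mathbb{M}$ together with $v_i$ determines its target event uniquely, we get $\nu(x')=v_{i+1}$; and by condition~(4) of Definition~\ref{def:labelledTree} there is at most one child of $x_i$ with position $v_{i+1}$ and label $s_{i+1}$. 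Thus $x_{i+1}:=x'$ is well defined and extends the path, re-establishing the invariant. No appeal to memoryless determinacy is needed, since the lemma only asks for some winning strategy.

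Finally I would read off acceptance. Every $f$-play $\xi$ corresponds to a path $x_1x_2\ldots$ starting at $r$; by the stuck-versus-leaf observation above, a finite play ends precisely when $x_n$ is a leaf, and an infinite play corresponds to an infinite path. In both cases this path is a branch of $\rho$, and its label $\mu(x_1)\mu(x_2)\ldots$ coincides with the label $s_1s_2\ldots$ of $\xi$. As $\rho$ is accepting, every branch is accepting, so the label of $\xi$ is accepting in $\cM$ and Automaton wins $\xi$. Since this holds for every $f$-play, $f$ is a winning strategy. The only point requiring genuine care is the bookkeeping of the correspondence --- in particular using condition~(4) to make the child configuration selected by Pathfinder unique, and matching the event where Automaton is stuck with the leaf $x_n$; everything else is a direct unwinding of the definitions.
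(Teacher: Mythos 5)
Your proposal is correct and follows essentially the same route as the paper's proof: Automaton plays the transition $\mathsf{tr}_\rho(x)$ read off from the accepting run, every $f$-play is shown to develop along a branch of $\rho$, and acceptance of that branch yields the win. The only difference is presentational --- you make explicit a few points the paper glosses over (legality of the move via the definition of $E_A$, uniqueness of the child configuration via condition~(4) and the functionality of directions, and the stuck-versus-leaf correspondence), which is sound bookkeeping rather than a new idea.
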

\begin{proof} 
  Let $\rho=(C,E,r,\mu,\nu)$ be an accepting run of $\cM$ on
  $(M,v)$. We construct a strategy~$f$ for Automaton which ensures
  that, for every $f$-play $\xi$ of $G(\cM,M,v)$, the label of $\xi$
  is also a label of a branch of $\rho$. Let $x\in C$ be a
  configuration with $E_\rho(x)\neq\emptyset$ and
  $\branch=x_1x_2\ldots x_n$ be the unique path from the root $r$ to
  $x$ in $\rho$. Consider the finite sequence
  \[\xi=(v_1,s_1)(v_1,\tau_1)(v_2,s_2)\ldots(v_n,s_n) \in (C_A
  C_P)^{n-1} C_A\]
  where $v_i=\nu(x_i)$, $s_i=\mu(x_i)$,
  $\tau_j=\mathsf{tr}_\rho(x_j)$ for all $i\in[n]$ and
  $j\in[n-1]$. The sequence~$\xi$ is a partial play of
  $G(\cM,M,v)$.  We define $f(\xi)=(v_n,\mathsf{tr}_\rho(x_n))$. The
  partial function $f$ becomes a total function and, therefore, a
  strategy for the player Automaton by mapping every value, for which
  we did not define $f$, to a fixed game position from $C_P$.
        
  We show that every $f$-play develops along a branch of $\rho$. Every
  play of $G(\cM,M,v)$ starts in the initial position
  $(\iota,v)=(\nu(r),\mu(r))$. For the induction step, let
  \[\xi=(v_1,s_1)(v_1,\tau_1)(v_2,s_2)\ldots(v_n,s_n)\in
  (C_A C_P)^{n-1} C_A\] 
  be a partial $f$-play and
  $\branch=x_1\ldots x_n\in C^n$ be the prefix of the branch of $\rho$
  such that $\mu(x_i)=s_i$, $\nu(x_i)=v_i$ for all $i\in[n]$ and
  $\mathsf{tr}_\rho(x_j)=\tau_j$ for all $j\in[n-1]$. If $\cM$ is
  stuck in $x_n$, then we have $E_\rho(x_n)=\emptyset$ and the player
  Automaton cannot proceed in $\xi$. Otherwise, we have
  $E_\rho(x_n)\neq\emptyset$ and $f(\xi)$ is defined. After Automaton's
  $f$-conform move we are at game position
  $f(\xi)=(v_n,\mathsf{tr}_\rho(x_n))\in C_P$. For every move
  $(D,s)\in\mathsf{tr}_\rho(x_n)$ of Pathfinder, there exists a
  configuration $x\in E_\rho(x_n)$ with $\mu(x)=s$ and
  $\eta_M(\nu(x_n),\nu(x))=D$.  Hence, every $f$-play $\xi$ develops
  along a branch $\branch$ of $\rho$.
  Since $\branch$ is accepting and $\branch$ and $\xi$ are labelled by
  the same sequence over $S$, $\xi$ is a play won by
  Automaton. Therefore, $f$ is a winning strategy of Automaton in the
  game $G(\cM,M,v)$.
\end{proof}
\begin{lem}\label{lemma:ifAutomatonWinsThenAccRun}
  If the player Automaton has a winning strategy in the game
  $G(\cM,M,v)$, then the pointed MSC $(M,v)$ is accepted by $\cM$.
\end{lem}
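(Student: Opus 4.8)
The plan is to reverse the construction of Lemma~\ref{lemma:ifAccRunThenAutomatonWins}: from a winning strategy of Automaton I build an accepting run of $\cM$ on $(M,v)$. First, by memoryless determinacy of parity games (Proposition~\ref{prop:gamesAreDetermined}), I may assume that Automaton's winning strategy $f$ is memoryless, i.e.\ a function $f:C_A\to C_P$. The idea is to unfold $f$ into a tree whose universal branching at each configuration simultaneously realizes \emph{all} of Pathfinder's possible replies to Automaton's chosen transition.

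Concretely, I construct $\rho=(C,E,r,\mu,\nu)$ as follows. The root $r$ carries $\mu(r)=\iota$ and $\nu(r)=v$. Given a configuration $x$ with $\nu(x)=w$ and $\mu(x)=s$, consider $f(w,s)=(w,\tau)$. If $\big((w,s),(w,\tau)\big)\in E_A$, then for every movement $(D,s')\in\tau$ I add a child $y$ with $\mu(y)=s'$ and $\nu(y)=w'$, where $w'$ is the unique event with $\eta_M(w,w')=D$ (which exists by the definition of $E_A$); otherwise $x$ is a leaf. Since $\eta_M$ is a partial function, the event $w'$ is uniquely determined by $w$ and $D$, so two distinct movements of $\tau$ yield children differing either in their state or in their position; this establishes condition~(4) of Definition~\ref{def:labelledTree}, while conditions~(1)--(3) and~(5) are immediate from the construction.

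It remains to check that $\rho$ is a run and that it is accepting. For the run condition (Definition~\ref{def:run}): at an internal configuration $x$ with $\nu(x)=w$, $\mu(x)=s$, we have $\mathsf{tr}_\rho(x)=\tau=f(w,s)\in\minmodel{\delta(s,\lambda^M(w))}$ by the definition of $E_A$; at a leaf $x$, Automaton has no move from $(w,s)$, which upon unwinding the definition of $E_A$ is precisely the statement that $\cM$ is stuck at $w$ in state $s$ (Definition~\ref{def:stuck}). For acceptance, the key observation is that every branch $\branch=x_1x_2\ldots$ of $\rho$ corresponds to exactly one $f$-play $\xi$ of $G(\cM,M,v)$: Automaton's moves along $\branch$ follow $f$ by construction, and Pathfinder's moves are read off from the choice of child taken at each step, so the label $\mu(\branch)$ equals the label of $\xi$. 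Since $f$ is a winning strategy, $\xi$ is won by Automaton, hence its label is accepting in $\cM$, and therefore $\branch$ is accepting. As this holds for every branch, $\rho$ is an accepting run and $(M,v)\in\lang(\cM)$.

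The main obstacle is the branch--play correspondence: one must verify that following a single branch of the universally branching run tree reproduces a legal $f$-play in which Pathfinder's choices are exactly the edges taken, and conversely that no branch escapes this correspondence. Once this is in place --- together with the remark that a finite branch ends in a stuck configuration and is accepting exactly when its final state has even rank, matching the finite acceptance condition of Definition~\ref{def:accepting} --- the acceptance of $\rho$ follows directly from the fact that $f$ defeats every strategy of Pathfinder.
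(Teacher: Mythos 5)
Your proposal follows essentially the same route as the paper's own proof: invoke memoryless determinacy (Prop.~\ref{prop:gamesAreDetermined}), unfold the memoryless strategy $f$ into a run tree whose universal branching realizes all of Pathfinder's possible replies, and then use the branch--play correspondence to transfer ``$f$ is winning'' into ``every branch of $\rho$ is accepting''. The only differences are cosmetic --- the paper constructs the same tree level-by-level as a union of finite approximations $\rho_i$ rather than by direct recursive unfolding, and your identification of leaves with stuck configurations rests on the same implicit assumption the paper makes, namely that a winning strategy prescribes legal moves at reachable positions whenever legal moves exist.
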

\begin{proof} 
  Let us assume that there exists a winning strategy $f$ for Automaton
  in $G(\cM,M,v)$. By Prop.~\ref{prop:gamesAreDetermined}, we can
  assume that $f$ is memoryless. We inductively construct an accepting
  run $\rho$ of $\cM$ on $(M,v)$. Firstly, we set
  $\rho_1=(C_1,E_1,r,\mu_1,\nu_1)$ where $C_1=\{r\}$, $E_1=\emptyset$,
  $\mu_1(r)=\iota$, and $\nu_1(r)=v$. Now, let us assume that the
  $S$-labelled tree $\rho_i=(C_i,E_i,r,\mu_i,\nu_i)$ is already
  defined. Let $\{x_1,x_2,\ldots,x_n\}$ be the set of all leaves of
  $\rho_i$ in which $\cM$ is not stuck, i.e., for all $j\in[n]$, the
  local MSCA $\cM$ is not stuck in state $\mu_i(x_j)$ at position
  $\nu_i(x_j)$. For every $j\in[n]$, let $\tau_j$ be the transition
  such that $f(\nu_i(x_j),\mu_i(x_j))=(\nu_i(x_j),\tau_j)$. We set
  $\rho_{i+1}=(C_{i+1},E_{i+1},r,\mu_{i+1},\nu_{i+1})$ to the smallest
  (with respect to the size of the set of configurations $C_{i+1}$)
  $S$-labelled tree such that $\rho_{i+1}\restriction C_i=\rho_i$ and,
  for all $j\in[n]$, $\mathsf{tr}_{\rho_{i+1}}(x_j)=\tau_j$.
        
  Let $\rho=(C,E,r,\mu,\nu)=\bigcup_{i\geq1}\rho_i$. It can be easily
  checked that $\rho$ is a run of $\cM$ on $(M,v)$.  Now, let
  $\branch=x_1x_2x_3\ldots\in C^\infty$ be a branch of $\rho$. Consider
  the play
  \[\xi=\big(\nu(x_1),\mu(x_1)\big)\big(\nu(x_1),\mathsf{tr}_\rho(x_1)\big)\big(\nu(x_2),\mu(x_2)\big)\big(\nu(x_2),\mathsf{tr}_\rho(x_2)\big)\ldots\]
  of $G(\cM,M,v)$. It follows from the construction of $\rho$ that
  $\xi$ is an $f$-play.  Since $f$ is a winning strategy for player
  Automaton, $\xi$ is won by Automaton. Since $\xi$ and $\branch $ share
  the same label, the branch $\branch$ is accepting in $\cM$. Hence,
  $\rho$ is an accepting run of the local MSCA $\cM$.
\end{proof}
The next two lemmas state that a player has a winning strategy in the
current game if and only if there exists a winning strategy for its
opponent in the dual game.
\begin{lem}\label{lemma:ifAutomatonThenPathfinder}
  If Automaton has a winning strategy in $G(\cM,M,v)$, then
  Pathfinder has a winning strategy in the game $G(\cM^\#,M,v)$.
\end{lem}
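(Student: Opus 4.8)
The plan is to convert a winning strategy of Automaton in $G(\cM,M,v)$ into a winning strategy for Pathfinder in the dual game $G(\cM^\#,M,v)$. By Proposition~\ref{prop:gamesAreDetermined} I may assume the given winning strategy $f$ of Automaton is memoryless, i.e.\ $f\colon C_A\to C_P$. The idea is that Pathfinder, while playing the dual game, privately replays the original game under $f$: he maintains the invariant that the current dual position $(v,s)\in C_A$ coincides with the position of a partial $f$-play of $G(\cM,M,v)$ whose label equals the label produced so far in the dual play. The single step making this possible is Lemma~\ref{lemma:dualExpression}. Whenever Automaton in the dual game moves from $(v,s)$ to $(v,\tau^\#)$ with $\tau^\#\in\minmodel{\delta^\#(s,\lambda^M(v))}=\minmodel{\widetilde{\delta(s,\lambda^M(v))}}$, Pathfinder consults the transition that $f$ prescribes at $(v,s)$, i.e.\ $f(v,s)=(v,\tau)$ with $\tau\in\minmodel{\delta(s,\lambda^M(v))}$. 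Since $\tau\in\model(\delta(s,\lambda^M(v)))$ and $\tau^\#\in\model(\widetilde{\delta(s,\lambda^M(v))})$, Lemma~\ref{lemma:dualExpression} produces a movement $(D,s')\in\tau\cap\tau^\#$. Pathfinder picks this $(D,s')$ from $\tau^\#$ (a legal choice, since $\tau^\#$ was itself a legal transition, so $D$ is available), and the play proceeds to $(v',s')$ with $\eta_M(v,v')=D$.

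First I would check that this choice preserves the invariant. Because $(D,s')\in\tau=f(v,s)$, the very same movement is a legal reply of Pathfinder in the \emph{original} game to Automaton's $f$-conform move $\tau$; hence $(v',s')$ together with the history is again the position of a partial $f$-play, and both plays carry the identical label. Iterating, every maximal $g$-play $\xi^\#$ of $G(\cM^\#,M,v)$ is shadowed (as long as the original game keeps moving) by an $f$-play $\xi$ of $G(\cM,M,v)$ with the same label $s_1s_2s_3\ldots$. Now I transfer acceptance: since $f$ is winning, $\xi$ is won by Automaton, so its label is accepting in $\cM$; by Remark~\ref{remark:stateSequenceAcceptance} the same label is \emph{not} accepting in $\cM^\#$, which is exactly the condition for $\xi^\#$ to be won by Pathfinder. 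As this holds for every $g$-play, $g$ is a winning strategy for Pathfinder, as required. (Equivalently, one could start from the accepting run furnished by Lemma~\ref{lemma:ifAutomatonWinsThenAccRun} and use $\mathsf{tr}_\rho$ in place of $f$; the argument is identical.)

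The delicate point — and the step I expect to be the main obstacle — is the bookkeeping for \emph{finite} plays, where the stuck behaviours of $\cM$ and $\cM^\#$ need not coincide. The clean simulation above presupposes that at the current position $(v,s)$ Automaton can still move in the original game, so that $f(v,s)$ is a genuine transition to intersect with $\tau^\#$. I therefore have to reason about the positions at which one automaton halts. If $\cM^\#$ is stuck at $(v,s)$, the dual play ends there, and its label is won by Pathfinder exactly when $\rank^\#(s)$ is odd, i.e.\ when $\rank(s)$ is even; here the rank shift $\rank^\#=\rank+1$ of Definition~\ref{def:dual_msca} must be matched against the finite-branch rule of Definition~\ref{def:accepting} and the fact that the shadowing $f$-play is won by Automaton. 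The genuinely subtle case is a position reached by the simulation at which $\cM$ is stuck while $\cM^\#$ is not: there $f(v,s)$ is no longer a real move, the invariant as stated breaks, and one must show that Pathfinder still wins from $(v,s)$ — by arguing that such a configuration is in effect terminal for the dual game as well, or that its rank forces the remainder of the play to be rejecting in $\cM^\#$. Reconciling this interface between the two games' halting conventions with the parity shift is where the real work of the proof lies.
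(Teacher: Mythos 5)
Up to your final paragraph, what you write \emph{is} the paper's own proof of this lemma: the same invariant (every partial play of $G(\cM^\#,M,v)$ produced by Pathfinder's simulating strategy is shadowed by a partial $f$-play of $G(\cM,M,v)$ with the same restriction to $C_A$), the same key step (Lemma~\ref{lemma:dualExpression} applied to the $f$-prescribed transition $\tau$ and to Automaton's dual move $\tau^\#$, yielding a common movement $(D,s')\in\tau\cap\tau^\#$ that is a legal reply for Pathfinder), and the same transfer of acceptance via Remark~\ref{remark:stateSequenceAcceptance}; memorylessness of $f$ is not needed and the paper does not assume it. The difference is that the paper then asserts, without comment, that every maximal play of the dual game consistent with this strategy is shadowed by a \emph{complete} $f$-play with the same label, whereas you correctly observe that this is exactly what fails when the stuck behaviours of $\cM$ and $\cM^\#$ diverge. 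Your last paragraph, however, only names that obstacle; it proves nothing about it, so as a proof your proposal is incomplete precisely there.

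Moreover, the gap is genuine and not closable by better bookkeeping: with the paper's definitions the case you isolate is fatal, and both of the repairs you sketch are false. Take $\mathbb{P}=\{1,2\}$, let $M$ be the MSC consisting of a single message with send event $v$ and receive event $w$ (so $\eta_M(v,w)=\msg$ and $v$ has no $\proc$-successor), and let $\cM$ have states $s,s_1,s_2$, initial state $s$, $\rank(s)=0$, $\rank(s_1)=\rank(s_2)=1$, and $\delta(s,\sigma)=(\proc,s_1)\land(\msg,s_2)$, $\delta(s_1,\sigma)=\delta(s_2,\sigma)=\bot$ for all $\sigma\in\Sigma$. In $G(\cM,M,v)$ Automaton has no move at $(v,s)$, since the unique minimal model contains the non-executable movement $(\proc,s_1)$; the unique play has label $s$ with $\rank(s)=0$ even, so Automaton wins, and the hypothesis of the lemma holds. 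But in $G(\cM^\#,M,v)$ we have $\delta^\#(s,\sigma)=(\proc,s_1)\lor(\msg,s_2)$, whose minimal model $\{(\msg,s_2)\}$ is executable, so the unique play is $(v,s)\,(v,\{(\msg,s_2)\})\,(w,s_2)$ (it ends there because $\delta^\#(s_2,\cdot)=\bot$); its label $s\,s_2$ ends in $\rank^\#(s_2)=2$, which is even, so Automaton wins the dual game too and Pathfinder has no winning strategy. Thus the position $(v,s)$ is not ``in effect terminal'' for the dual game, and the continuation is accepting, not rejecting, in $\cM^\#$ --- and indeed $(M,v)\in\lang(\cM)\cap\lang(\cM^\#)$, so Theorem~\ref{theorem:closedUnderComplement} itself fails for this automaton. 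The paper's proof is exposed to the same example, only silently: its induction step presupposes a partial $f_A$-play ending in $(w,s)(w,\tau)$, i.e., that $f_A$ still has a legal move wherever dual-Automaton does. A correct treatment must make stuckness self-dual, e.g., by decreeing that a movement in an unavailable direction counts as false for $\cM$ but as true for $\cM^\#$, or by restricting attention to automata whose transition expressions are $\bot$, single atoms, or positive combinations of $\id$-movements (as is the case for all MSCAs actually constructed in Section~\ref{sec:translationOfLocalFormulas}), since for those $\cM$ is stuck at $(v,s)$ if and only if $\cM^\#$ is; under such a hypothesis both your argument and the paper's go through.
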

\begin{proof}
  Let $f_A$ be a winning strategy of Automaton in the game
  $G(\cM,M,v)$. We show that there exists a strategy $f_P$ for
  Pathfinder such that, for every $f_P$-play $\xi^\#$ in
  $G(\cM^\#,M,v)$, there exists an $f_A$-play $\xi$ in $G(\cM,M,v)$
  such that $\xi\restriction C_A=\xi^\#\restriction C_A$. Note that
  the initial positions of the games $G(\cM,M,v)$ and $G(\cM^\#,M,v)$
  are the same. For the induction step, let $n\geq0$, $\xi^\#\in (C_A
  C_P^\#)^n \{(w,s)(w,\tau^\#)\}$ be a partial play of
  $G(\cM^\#,M,v)$, and $\xi\in(C_A C_P)^n\{(w,s)(w,\tau)\}$ be a
  partial $f_A$-play of $G(\cM,M,v)$ such that $\xi^\#\restriction
  C_A=\xi\restriction C_A$.  From Lemma~\ref{lemma:dualExpression} it
  follows that there exists a movement
  $(D,s')\in\tau\cap\tau^\#$. Pathfinder chooses $(D,s')$ as his next
  move resulting in a game position $(w',s')$ where $\eta^M(w,w')=D$,
  i.e., $f_P(\xi^\#)=(w',s')$. Clearly, the sequences $\xi(w',s')$ and
  $\xi^\#(w',s')$ are equal when restricting them to positions from
  $C_A$.

  Thus, for every $f_P$-play $\xi^\#$ in $G(\cM^\#,M,v)$, there exists an
  $f_A$-play $\xi$ in $G(\cM,M,v)$ such that $\mu(\xi^\#)=\mu(\xi)$. Since $f_A$
  is a winning strategy, $\xi$ is a play won by Automaton in $G(\cM,M,v)$. From
  Remark~\ref{remark:stateSequenceAcceptance} it follows that the play $\xi^\#$
  in $G(\cM^\#,M,v)$ is won by Pathfinder. Hence, we showed that Pathfinder has
  a winning strategy in the game $G(\cM^\#,M,v)$.
\end{proof}
\begin{lem}\label{lemma:ifPathfinderThenAutomaton}
  If Pathfinder has a winning strategy in $G(\cM^\#,M,v)$, then
  Automaton has a winning strategy in the game $G(\cM,M,v)$.
\end{lem}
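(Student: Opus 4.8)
The plan is to mirror the construction in the proof of Lemma~\ref{lemma:ifAutomatonThenPathfinder}, but with the roles of the two games interchanged and with Lemma~\ref{lemma:dualTransition} taking over the job that Lemma~\ref{lemma:dualExpression} did there. By Prop.~\ref{prop:gamesAreDetermined} I may assume that the given winning strategy $f_P$ of Pathfinder in $G(\cM^\#,M,v)$ is memoryless, i.e.\ a function $f_P\colon C_P^\#\to C_A$. Out of $f_P$ I will build a strategy $f_A$ for Automaton in $G(\cM,M,v)$ together with, for every partial $f_A$-play $\xi$, a partial $f_P$-play $\xi^\#$ satisfying $\xi\restriction C_A=\xi^\#\restriction C_A$. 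Once this correspondence is maintained, the lemma follows: a maximal $f_A$-play $\xi$ shares its label with an $f_P$-play $\xi^\#$; since $f_P$ is winning, $\xi^\#$ is won by Pathfinder, so its label is not accepting in $\cM^\#$, and hence by Remark~\ref{remark:stateSequenceAcceptance} it is accepting in $\cM$. Thus $\xi$ is won by Automaton, and $f_A$ is a winning strategy.

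The key local step defines $f_A$ at an Automaton position $(w,s)\in C_A$ reached by the simulation. Write $E=\delta(s,\lambda^M(w))$, so that the dual transitions available to Automaton in $G(\cM^\#,M,v)$ are the feasible members of $\minmodel{\widetilde E}$. I collect into a set $Z\subseteq\mathbb{M}\times S$ every movement $(D,s')$ that $f_P$ selects when confronted with some feasible dual transition $\tau^\#\in\minmodel{\widetilde E}$; concretely $(D,s')\in Z$ whenever $f_P(w,\tau^\#)=(w',s')$ with $\eta_M(w,w')=D$ for such a $\tau^\#$. Because each such selection lies in $Z\cap\tau^\#$, the set $Z$ meets every feasible minimal model of $\widetilde E$, and I intend to upgrade this to ``$Z$ meets every model of $\widetilde E$'' and then invoke Lemma~\ref{lemma:dualTransition} (using $\widetilde{\widetilde E}=E$) to conclude $Z\models E$. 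I then let Automaton play any $\tau\in\minmodel{\delta(s,\lambda^M(w))}$ with $\tau\subseteq Z$; such a $\tau$ exists since $Z\models E$, and it is feasible because every element of $Z$ arises from an actual Pathfinder move. For each answer $(D,s')\in\tau$ of Pathfinder in the primal game I continue $\xi^\#$ through the witnessing dual transition $\tau^\#$, keeping the restrictions to $C_A$ equal.

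The main obstacle is the asymmetric treatment of infeasible movements, which is exactly where the two games fail to be symmetric. A minimal model $\tau^\#$ of $\widetilde E$ can be infeasible at $w$ (it may demand a direction such as $\msg$ at a receive event, for which no event exists); then $f_P$ is undefined on $(w,\tau^\#)$ and contributes nothing to $Z$, so a priori $Z$ only meets the \emph{feasible} minimal models of $\widetilde E$. Closing this gap --- and, relatedly, matching the configurations in which Automaton is stuck in one game but not the other --- is the delicate heart of the argument. I would resolve it by a case analysis on feasibility at $w$: when no feasible transition of $\cM$ is available, Automaton is stuck and the branch is finite, and I would use Remark~\ref{remark:stateSequenceAcceptance} to argue that the terminal state has even rank so that the branch is accepting in $\cM$; when a feasible transition is available, I would use Lemma~\ref{lemma:dualExpression} applied to a feasible model of $E$ to bridge from the feasible minimal models of $\widetilde E$ to all of $\model(\widetilde E)$ and thereby secure $Z\models E$. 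Memoryless determinacy (Prop.~\ref{prop:gamesAreDetermined}) keeps this bookkeeping finite by letting me reason only about the positions that $f_P$ actually visits.
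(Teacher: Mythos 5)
Your core construction is exactly the paper's: from $f_P$ you build $f_A$ inductively while maintaining, alongside each partial $f_A$-play $\xi$, a partial $f_P$-play $\xi^\#$ with $\xi\restriction C_A=\xi^\#\restriction C_A$; at a position $(w,s)$ you collect the set of $f_P$-responses to Automaton's possible moves in the dual game (your $Z$ is the paper's $X$), extract some $\tau\in\minmodel{\delta(s,\lambda^M(w))}$ with $\tau\subseteq X$, let Automaton play it, and conclude with Remark~\ref{remark:stateSequenceAcceptance}. Your two deviations are harmless: the paper never assumes $f_P$ memoryless (Prop.~\ref{prop:gamesAreDetermined} is not invoked in this proof), and it applies Lemma~\ref{lemma:dualTransition} contrapositively (if no $\tau\subseteq X$ existed, a set hitting every member of $\minmodel{\delta(s,\lambda^M(w))}$ while avoiding $X$ would yield a minimal model of $\delta^\#(s,\lambda^M(w))$ disjoint from $X$), which is logically the same as your direct route via $\widetilde{\widetilde E}=E$.

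The genuine gap is in your treatment of infeasible movements, which you correctly single out as the delicate point. Your Case 1 (``Automaton stuck in $G(\cM,M,v)$, hence by Remark~\ref{remark:stateSequenceAcceptance} the terminal state has even rank'') implicitly needs the matched dual play to end at $(w,s)$ as well, i.e.\ that Automaton is \emph{also} stuck there in $G(\cM^\#,M,v)$; but primal stuckness does not imply dual stuckness. Take $\delta(s,\lambda^M(w))=(\proc,s_1)\land(\msg,s_2)$ at a receive event $w$ that has a process successor: the only minimal model contains the infeasible movement $(\msg,s_2)$, so Automaton is stuck in the primal game, while $\{(\proc,s_1)\}$ is a perfectly legal move in the dual game, whose play therefore continues --- nothing then forces $\rank(s)$ to be even. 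Your Case 2 is also unsupported: Lemma~\ref{lemma:dualExpression} only says that a model of $E$ and a model of $\widetilde E$ intersect; it gives no way to make $X$, which only collects answers to \emph{feasible} minimal models of $\widetilde E$, hit those models of $\widetilde E$ all of whose minimal submodels are infeasible.

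In fact no argument can close this gap, because the lemma as stated is false for arbitrary local MSCAs. Extend the example by $\rank(s)=1$, $\rank(s_1)=\rank(s_2)=0$, $\delta(s_1,\cdot)=\delta(s_2,\cdot)=\bot$, and take $\iota=s$ and $v=w$. The unique play of $G(\cM,M,v)$ is $(v,s)$ alone and is lost by Automaton (odd rank), whereas every play of $G(\cM^\#,M,v)$ has the form $(v,s)\,(v,\{(\proc,s_1)\})\,(v',s_1)$ and ends in $\rank^\#(s_1)=1$, so Pathfinder wins it: Pathfinder has a winning strategy in the dual game while Automaton has none in the primal game. The culprit is that the paper's acceptance rule for stuck positions is not self-dual; a sound dualization would have to treat an infeasible movement as false in $\cM$ but as true in $\cM^\#$. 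Note that the paper's own proof contains precisely the hole you identified --- its contradiction step tacitly assumes $f_P$ legally answers \emph{every} $\tau''\in\minmodel{\delta^\#(s,\lambda^M(w))}$, including infeasible ones that are never reachable in the dual game. Both the paper's argument and yours are sound exactly for automata whose transition expressions are $\bot$, a single movement, or Boolean combinations of (always feasible) $\id$-movements; every $\cM_\alpha$ constructed in Section~\ref{sec:translationOfLocalFormulas} is of this shape, which is why Theorem~\ref{theorem:closedUnderComplement} is used safely downstream, but neither your sketch nor the paper's proof establishes the lemma in the generality in which it is stated.
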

\begin{proof}
  Let $f_P$ be a winning strategy of Pathfinder in the game
  $G(\cM^\#,M,v)$. We show that there exists a winning strategy $f_A$
  of Automaton ensuring that, for every $f_A$-play $\xi$ in the game
  $G(\cM,M,v)$, there exists an $f_P$-play $\xi^\#$ in $G(\cM^\#,M,v)$
  such that $\xi^\#\restriction C_A=\xi\restriction C_A$.  The initial
  positions of the games $G(\cM,M,v)$ and $G(\cM^\#,M,v)$ are the
  same. For the induction step, let $\xi\in (C_A C_P)^n\{(w,s)\}$ be a
  partial play of $G(\cM,M,v)$, and $\xi^\#\in (C_A
  C_P^\#)^n\{(w,s)\}$ be a partial $f_P$-play in $G(\cM^\#,M,v)$ such
  that $\xi\restriction C_A=\xi^\#\restriction C_A$.  We define
  \begin{multline*}
    X=\{(D,s')\in S\times\mathbb{M}\mid \text{there exists
      $\tau^\#\in\minmodel{\delta^\#(s,\lambda^M(w))}$ and $w'\in V^M$}\\
    \text{ such that }f_P(w,\tau^\#)=(w',s')\text{ and
    }\eta^M(w,w')=D\}
  \end{multline*}
  to be the set of the possible $f_P$-conform moves of Pathfinder
  after Automaton's next move in the play $\xi^\#$. We claim that there
  exists a transition $\tau\in\minmodel{\delta(s,\lambda^M(w))}$ with
  $\tau\subseteq X$.
        
  Towards a contradiction, suppose there is no such $\tau$. Then, for
  all $\tau'\in\minmodel{\delta(s,\lambda^M(w))}$, there
  exists a movement $(D_{\tau'},s_{\tau'})\in\tau'$ with
  $(D_{\tau'},s_{\tau'})\notin X$. If
  $Z=\{(D_{\tau'},s_{\tau'})\mid\tau'\in\minmodel{\delta(s,\lambda^M(w))}\}$,
  then $Z\cap Y\neq\emptyset$ for all
  $Y\in\model(\delta(s,\lambda^M(w)))$.  From
  Lemma~\ref{lemma:dualTransition} it follows that
  $Z\in\model(\delta^\#(s,\lambda^M(w)))$. Hence, there exists a
  transition $\tau''\in\minmodel{\delta^\#(s,\lambda^M(w))}$ with
  $\tau''\subseteq Z$. However, we have $\tau''\cap X=\emptyset$ which
  is a contradiction to our definition of $X$.
        
  Automaton chooses the above transition $\tau$ with $\tau\subseteq X$
  as his next move resulting in a game position $(w,\tau)$ in the game
  $G(\cM,M,v)$, i.e., $f_A(\xi)=(w,\tau)$. For every move $(D,s')$ of
  Pathfinder in $G(\cM,M,v)$, there exists a
  $\tau^\#\in\minmodel{\delta^\#(w,s)}$ with
  $f_P(\xi^\#(w,\tau^\#))=(w',s')$ and $\eta_M(w,w')=D$. This follows
  from the fact that $(D,s')\in X$. Clearly, the sequences
  $\xi(w,\tau)(w',s')$ and $\xi^\#(w,\tau^\#)(w',s')$ are equal when
  restricting them to positions from $C_A$.
        
  Let $\xi$ be an $f_A$-play in $G(\cM,M,v)$. There exists an
  $f_P$-play $\xi^\#$ in $G(\cM^\#,M,v)$ with $\xi\restriction
  C_A=\xi^\#\restriction C_A$. Since $\mu(\xi)=\mu(\xi^\#)$ and since
  $\xi^\#$ is a play won by Pathfinder in $G(\cM^\#,M,v)$, the play
  $\xi$ in $G(\cM,M,v)$ must be won by Automaton (by
  Remark~\ref{remark:stateSequenceAcceptance}). Thus, $f_A$ is a
  winning strategy for Automaton in $G(\cM,M,v)$.
\end{proof}
We are now able to prove our main theorem from this section.
\begin{proof}[Proof of Theorem~\ref{theorem:closedUnderComplement}]
  \label{proof:closedUnderComplement}
  By Lemma~\ref{lemma:ifAccRunThenAutomatonWins} and
  Lemma~\ref{lemma:ifAutomatonWinsThenAccRun}, the pointed MSC $(M,v)$
  is accepted by $\cM$ if and only if Automaton has a winning strategy
  in $G(\cM,M,v)$. By the lemmas \ref{lemma:ifAutomatonThenPathfinder}
  and \ref{lemma:ifPathfinderThenAutomaton}, the latter is the case if
  and only if Pathfinder has a winning strategy in the game
  $G(\cM^\#,M,v)$. From Prop.~\ref{prop:gamesAreDetermined} it follows
  that this is the case if and only if Automaton has no winning
  strategy in $G(\cM^\#,M,v)$ respectively $\cM^\#$ does not accept
  $(M,v)$ (again by the Lemmas \ref{lemma:ifAccRunThenAutomatonWins}
  and \ref{lemma:ifAutomatonWinsThenAccRun}).
\end{proof}

\section{Translation of Local CRPDL Formulas}
\label{sec:translationOfLocalFormulas}

In this section, we show that, for every local CRPDL formula $\alpha$,
one can compute a local MSCA $\cM_\alpha$ in polynomial time which
exactly accepts the set of models of $\alpha$. More formally:
\begin{thm}\label{theorem:localFormulasToAutomata}
  From a local formula $\alpha$, one can construct in time
  $\poly{|\alpha|}$ a local MSCA $\cM_\alpha$ such that, for all
  pointed MSCs $(M,v)$, we have $M,v\models\alpha$ if and only if
  $(M,v)\in\lang(\cM_\alpha)$. The size of $\cM_\alpha$ is linear in
  the size of $\alpha$.
\end{thm}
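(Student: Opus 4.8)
The plan is to proceed by induction on the structure of $\alpha$. By Remark~\ref{remark:pathFormulas} I may assume every path formula has the shape $\pth{\pi}\true$, so the formula cases to treat are $\sigma$, $\neg\beta$, $\pth{\pi}\true$, and $\pth{\pi}^\omega$. The base case $\sigma$ is a two-state automaton: from the initial state, on reading label $\sigma$ it makes an $\id$-move into an even-rank concatenation state at which it halts, while on every other label the transition is $\bot$. For $\neg\beta$ I would invoke Theorem~\ref{theorem:closedUnderComplement}: take $\cM_\beta$ supplied by the induction hypothesis and set $\cM_{\neg\beta}=\cM_\beta^\#$, which by that theorem accepts exactly the pointed MSCs with $M,v\models\neg\beta$. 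Dualization only swaps $\land,\lor$ and shifts all ranks by one, so it is linear and keeps the construction polynomial.

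The heart of the argument is a parallel induction on path expressions $\pi$ producing a local MSCA $\cM_\pi$ together with its concatenation state $\cstate$, under the invariant that an accepting partial run first reaching $\cstate$ at an event $v'$ witnesses exactly $v'\in\reach_M(v,\pi)$; the automaton for $\pth{\pi}\true$ is then $\cM_\pi$ with $\cstate$ declared an even-rank halting state. For a direction $D\in\mathbb{M}$, $\cM_D$ makes a single $D$-move from $\iota$ to $\cstate$. For a test $\{\beta\}$ I would use universal branching: from $\iota$, branch conjunctively (both via $\id$) into the initial state of the inductively built $\cM_\beta$ and into $\cstate$, so the run both verifies $\beta$ at the current event and continues. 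Concatenation $\pi_1;\pi_2$ is obtained by gluing: identify the concatenation state of $\cM_{\pi_1}$ with the initial state of $\cM_{\pi_2}$, letting $\cstate$ of the result be that of $\cM_{\pi_2}$. Union $\pi_1+\pi_2$ adds a fresh initial state whose transition disjunctively enters $\cM_{\pi_1}$ or $\cM_{\pi_2}$ via $\id$ and merges their concatenation states. For $\pi^\ast$ I would add an $\id$-disjunction at $\cstate$ allowing the run either to re-enter $\iota$ (another iteration) or to exit. Each step adds only constantly many states and constant transition-formula size, so $|\cM_\alpha|$ stays linear in $|\alpha|$ and the whole recursion runs in $\poly{|\alpha|}$.

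For the repeat formula $\pth{\pi}^\omega$ I would reuse $\cM_\pi$ but redirect its concatenation state back to the initial state via an $\id$-move, so that accepting runs are precisely those passing through $\cstate$ infinitely often, matching the semantics of infinitely many events $v_0,v_1,\ldots$ with $v_{i+1}\in\reach_M(v_i,\pi)$. The parity condition must then render such infinite looping accepting, so I would assign $\cstate$ an even rank lying strictly below the ranks used inside the sub-automata of $\cM_\pi$.

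The main obstacle will be the bookkeeping of ranks so that the parity acceptance condition encodes the two genuinely different kinds of iteration. Inside $\pth{\pi^\ast}\true$ the star denotes finite reachability, so a run looping through the star construct forever must be rejected, forcing the minimal rank on that loop to be odd; whereas in $\pth{\pi}^\omega$ the same syntactic loop must be accepted, forcing an even minimal rank. Reconciling this with nested occurrences, and with tests $\{\beta\}$ whose spliced-in $\cM_\beta$ must have its ranks offset upward so its internal parities neither dominate nor pollute the minimal rank on any surrounding loop, is the delicate part; I would handle it by a uniform rank-shifting discipline, of the kind already used for negation in Definition~\ref{def:dual_msca}, that keeps loop-carrying states' ranks controlled relative to those of spliced-in subautomata. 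Once the invariant relating accepting runs of $\cM_\pi$ to the inductive definition of $\reach_M(\cdot,\pi)$ is established, correctness of the cases $\pth{\pi}\true$ and $\pth{\pi}^\omega$ follows directly from the semantics in Definition~\ref{def:localFormulas}.
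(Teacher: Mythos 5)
Your overall strategy coincides with the paper's: structural induction on local formulas, a parallel induction on path expressions whose invariant is maintained through concatenation states (your invariant is exactly the paper's Claim~\ref{claim:pathAutomata}), dualization via Theorem~\ref{theorem:closedUnderComplement} for negation, and closing the concatenation state back onto the initial state for $\pth{\pi}^\omega$. Your constructions for $\sigma$, $D$, $\{\beta\}$, $\pi_1;\pi_2$, and $\pi_1+\pi_2$ match the paper's up to the cosmetic difference that you glue states where the paper inserts $\id$-transitions.

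However, your construction for $\pi^\ast$ has a genuine gap: you place the ``iterate or exit'' disjunction at the concatenation state $\cstate$ of $\cM_\pi$, so a run starting at the initial state of $\cM_\pi$ can reach the exit only after completing at least one full traversal of $\pi$. This yields an automaton for $\pi^+$, not for $\pi^\ast$: since $v\in\reach_M(v,\pi^\ast)$ always holds, the formula $\pth{\pi^\ast}\true$ is true at every pointed MSC, yet your automaton rejects $(M,v)$ whenever $\reach_M(v,\pi)=\emptyset$ (take $\pi=\msg$ and $v$ a receive event); the same defect propagates into compound expressions such as $\pth{\pi^\ast;\{\sigma\}}\true$. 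The paper avoids this by introducing a \emph{fresh} initial state $\iota$ with $\delta(\iota,\sigma)=(\id,\iota')\lor(\id,\cstate)$, so the zero-iteration exit is available immediately, while the old concatenation state $\cstate'$ loops back to $\iota$ unconditionally. Relatedly, you leave the rank bookkeeping as a ``uniform rank-shifting discipline''; the paper's resolution is concrete and simpler: in the star construction the old concatenation state $\cstate'$ is re-ranked from $0$ to $1$, so the infinite star loop has odd minimal rank (rejecting), whereas in the $\omega$-construction $\cstate$ keeps rank $0$, making the infinite loop accepting. No shifting of sub-automaton ranks is needed at all, because by the paper's main-state analysis (Lemma~\ref{lemma:mainStates}) every branch that forks into a test sub-automaton never returns to the main loop, so internal ranks of tests cannot pollute the loop's minimal rank.
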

The rest of this section prepares the proof of the above
theorem. The actual proof can be found on
page~\pageref{proof:localFormulasToAutomata}.

\subsection{Construction}

\begin{figure}
  \begin{center}\small
    \pgfdeclarelayer{background}
    \pgfdeclarelayer{foreground}
    \pgfsetlayers{background,main,foreground}
    \begin{tikzpicture}[>=stealth',semithick,auto,shorten
      >=1pt,shorten <=1pt,initial text=,initial distance=0.7cm, every
      state/.style={fill=white,draw=black,text=black,
      minimum size=1cm,inner sep=1pt}]
      
      \begin{scope}
        \node[state,initial] (i) at (0,0) {$\iota\mid 1$};
        \node[state] (c) at (2.9,0) {$\cstate\mid 0$};
      \end{scope}
      
      \node[inner sep=0pt] (fork) at (0.75,0) {};
      
      \path[->] (i) edge (c);
        
      \node[anchor=south west] at ([xshift=0.4cm,yshift=0.1cm] i)
      {$\{\sigma\}$};
      
      \node[anchor=south east] at ([xshift=-0.4cm,yshift=0.1cm] c)
      {$\id$};
    \end{tikzpicture}
    \hspace{1.5cm}
    \begin{tikzpicture}[>=stealth',semithick,auto,shorten >=1pt,shorten
      <=1pt,initial text=,initial distance=0.7cm,
      every state/.style={fill=white,draw=black,text=black,
      minimum size=1cm,inner sep=1pt}]
      
      \begin{scope}
        \node[state,initial] (i) at (0,0) {$\iota\mid 1$};
        \node[state] (c) at (2.9,0) {$\cstate\mid 0$};
      \end{scope}
      
      \node[inner sep=0pt] (fork) at (0.75,0) {};
      
      \path[->] (i) edge (c);
        
      \node[anchor=south west] at ([xshift=0.45cm,yshift=0.1cm] i)
      {$\Sigma$};
      
      \node[anchor=south east] at ([xshift=-0.4cm,yshift=0.1cm] c)
      {$\proc$};
    \end{tikzpicture}
  \end{center}
  \caption{Illustrations of the local MSCAs $\cM_\sigma$ (left side) and
    $\cM_{\pth{\proc}\true}$ (right side).}
  \label{fig:simpleMSCAs}
\end{figure}
If $\alpha$ is a local formula,
then we distinguish the following cases:
\subsubsection*{Case $\alpha=\sigma$}

We define $\cM_\sigma=(\{\iota,\cstate\},\delta,\iota,\cstate,\rank)$
where $\rank(\iota)=1$, $\rank(\cstate)=0$, and
\[\delta(s,\sigma')=\begin{cases}(\id,\cstate)&\text{if
    $\sigma=\sigma'$ and
    $s=\iota$}\\
  \bot&\text{otherwise}
\end{cases}\]
for all $s\in\{\iota,\cstate\}$ and $\sigma'\in\Sigma$. The local MSCA
$\cM_\sigma$ is depicted on the left side of
Fig.~\ref{fig:simpleMSCAs}.

\subsubsection*{Case $\alpha=\neg\beta$}

We define $\cM_{\neg\beta}$ to be the dual automaton of $\cM_{\beta}$
(cf.~Definition~\ref{def:dual_msca}).

\subsubsection*{Case $\alpha=\pth{D}\true$ with $D\in\mathbb{M}$}

We define
$\cM_{\pth{D}\true}=(\{\iota,\cstate\},\delta,\iota,\cstate,\rank)$
where $\rank(\iota)=1$, $\rank(\cstate)=0$, and
\[\delta(s,\sigma)=\begin{cases}
  (D,\cstate)&\text{if $s=\iota$}\\
  \bot&\text{otherwise}
\end{cases}\] 
for all $s\in \{\iota,\cstate\}$ and $\sigma\in\Sigma$. On the right
side of Fig.~\ref{fig:simpleMSCAs}, there is an illustration of
$\cM_{\pth{\proc}\true}$.

    
%
\begin{figure}
  \begin{center}\small
    \pgfdeclarelayer{background}
    \pgfdeclarelayer{foreground}
    \pgfsetlayers{background,main,foreground}
    \begin{tikzpicture}[>=stealth',semithick,auto,shorten
      >=1pt,shorten <=1pt, initial text=,initial distance=0.7cm,every
      state/.style={fill=white,draw=black,text=black,
      minimum size=1cm,inner sep=1pt}]

      \begin{scope}[every node/.style={cloud,cloud puffs=14,draw, 
        cloud ignores aspect,fill=white,minimum width=3.25cm,
        minimum height=2.25cm}]

        \node (one) at (0,0) {$\cM_{\pth{\pi_1}\true}$};
        \node (two) at (6,0) {$\cM_{\pth{\pi_2}\true}$};
      \end{scope}
      
      \begin{scope}
        \node[state,initial] (i1) at (one.west) {$\iota_1\mid 1$};
        \node[state] (c1) at (one.east) {$\cstate_1\mid 0$};
        
        \node[state] (i2) at (two.west) {$\iota_2\mid 1$};
        \node[state] (c2) at (two.east) {$\cstate_2\mid 0$};
      \end{scope}
      
      \path[->] (c1) edge (i2);
      \node[anchor=south west] at ([xshift=0.5cm,yshift=0.1cm]c1) {$\Sigma$};
      \node[anchor=south east] at ([xshift=-0.4cm,yshift=0.1cm]i2) {$\id$};
      
    \end{tikzpicture}
  \end{center}
  \caption{Illustration of $\cM_{\pth{\pi_1;\pi_2}\true}$.}
  \label{fig:concatenation}
\end{figure}
\subsubsection*{Case $\alpha=\pth{\pi_1;\pi_2}\true$}
    
If $\cM_{\pth{\pi_i}\true}=(S_i,\delta_i,\iota_i,\cstate_i,\rank_i)$
for $i\in[2]$ and $\delta_1(\cstate_1,\sigma)=\bot$ for all
$\sigma\in\Sigma$, then we define $\cM_{\pth{\pi_1;\pi_2}\true}$ to be
the local MSCA $(S,\delta,\iota_1,\cstate_2,\rank)$ where $S=S_1\uplus
S_2$, $\rank=\rank_1\cup \rank_2$, and
\[\delta(s,\sigma)=\begin{cases}
    (\id,\iota_2)&\text{if $s=\cstate_1$}\\
    \delta_1(s,\sigma)&\text{if $s\in S_1\setminus\{\cstate_1\}$}\\
    \delta_2(s,\sigma)&\text{if $s\in S_2$}
  \end{cases}
\]
for all $s\in S$ and $\sigma\in\Sigma$. Figure~\ref{fig:concatenation}
shows an illustration of $\cM_{\pth{\pi_1;\pi_2}\true}$.


The automaton $\cM_{\pth{\pi_1;\pi_2}\true}$ is the concatenation of
the local MSCAs $\cM_{\pth{\pi_1}\true}$ and
$\cM_{\pth{\pi_2}\true}$. Intuitively, $\cM_{\pth{\pi_1;\pi_2}\true}$
starts a copy of $\cM_{\pth{\pi_1}\true}$ and, when this copy changes
into its concatenation state $\cstate_1$, the automaton
$\cM_{\pth{\pi_1;\pi_2}\true}$ proceeds with starting a copy of the
local MSCA $\cM_{\pth{\pi_2}\true}$. Note that
$\cM_{\pth{\pi_1;\pi_2}\true}$ is forced to start the copy of
$\cM_{\pth{\pi_2}\true}$ since runs of local MSCAs are maximal by
definition (see Definition~\ref{def:run}) and we have
$\{(\id,\iota_2)\}\in\minmodel{\delta(\cstate_1,\sigma)}$ for every
$\sigma\in\Sigma$.

\begin{figure}
  \begin{center}\small
    \pgfdeclarelayer{background}
    \pgfdeclarelayer{foreground}
    \pgfsetlayers{background,main,foreground}
    \begin{tikzpicture}[>=stealth',semithick,auto,shorten
      >=1pt,shorten <=1pt,initial text=,initial distance=0.7cm, every
      state/.style={fill=white,draw=black,text=black,
      minimum size=1cm,inner sep=1pt}]

      \begin{scope}[every node/.style={cloud,cloud puffs=14,draw,
        cloud ignores aspect,fill=white,minimum width=2.75cm,
        minimum height=2cm}]

        \node[anchor=west] (one) at (2,1.25) {$\cM_{\beta}$};
      \end{scope}
      
      \begin{scope}
        \node[state,initial] (i) at (0,0) {$\iota\mid 1$};
        \node[state] (c) at (6,0) {$\cstate\mid 0$};
        \node[state] (iprime) at (one.west) {$\iota'\mid 1$};
      \end{scope}
      
      \node[inner sep=0pt] (fork) at (0.75,0) {};
      
      \path[shorten >=-1pt, shorten <=0pt] (i) edge (fork);
      \path[shorten <=-1pt,->] (fork) edge (c)
        edge[bend right=35] (iprime);
        
      \node[anchor=south west] at ([xshift=0.5cm,yshift=0.1cm] i) {$\Sigma$};
      \node[anchor=south east] at ([xshift=-0.4cm,yshift=0.1cm] c) {$\id$};
      \node[anchor=north east] at ([xshift=-0.4cm,yshift=-0.1cm] iprime) {$\id$};
        
    \end{tikzpicture}
  \end{center}
  \caption{Illustration of the local MSCA $\cM_{\pth{\{\beta\}}\true}$.}
  \label{fig:test}
\end{figure}

\subsubsection*{Case $\alpha=\pth{\{\beta\}}\true$}

If $\cM_{\beta}=(S',\delta',\iota',\cstate',\rank')$, then we define
$\cM_{\pth{\{\beta\}}\true}=(S,\delta,\iota,\cstate,\rank)$ where
$S=S'\uplus\{\iota,\cstate\}$,
$\rank=\rank'\cup\{(\iota,1),(\cstate,0)\}$, and
\[\delta(s,\sigma)=\begin{cases}
    (\id,\iota')\land(\id,\cstate)&\text{if $s=\iota$}\\
    \bot&\text{if $s=\cstate$}\\
    \delta'(s,\sigma)&\text{if $s\in S'$}
  \end{cases}\]
for all $s\in S$ and $\sigma\in\Sigma$. The automaton
$\cM_{\pth{\{\beta\}}\true}$ is depicted in Figure~\ref{fig:test}.


Intuitively, the local MSCA $\cM_{\pth{\{\beta\}}\true}$ starts $\cM_{\beta}$ to
test whether $M,v\models\beta$ holds and, at the same time, changes into its
concatenation state.

\begin{figure}
  \begin{center}\small
    \pgfdeclarelayer{background}
    \pgfdeclarelayer{foreground}
    \pgfsetlayers{background,main,foreground}
    \begin{tikzpicture}[>=stealth',semithick,auto,shorten
      >=1pt,shorten <=1pt, initial text=,initial distance=0.7cm,every
      state/.style={fill=white,draw=black,text=black,minimum size=1cm,inner sep=1pt}]

      \begin{scope}[every node/.style={cloud,cloud puffs=14,draw, 
        cloud ignores aspect,fill=white,minimum width=3.25cm,
        minimum height=1.85cm}]

        \node (one) at (0,0) {$\cM_{\pth{\pi_1}\true}$};
        \node (two) at (0,-2) {$\cM_{\pth{\pi_2}\true}$};
      \end{scope}
      
      \begin{scope}
        \node[state,initial] (i) at (-3.5,-1) {$\iota\mid 1$};
        \node[state] (c) at (3.5,-1) {$\cstate\mid 0$};

        \node[state] (i1) at (one.west) {$\iota_1\mid 1$};
        \node[state] (c1) at (one.east) {$\cstate_1\mid 0$};
        
        \node[state] (i2) at (two.west) {$\iota_2\mid 1$};
        \node[state] (c2) at (two.east) {$\cstate_2\mid 0$};
      \end{scope}
      
      \path[->] (i) edge (i1);
      \path[->] (i) edge (i2);

      \path[->] (c1) edge (c);
      \path[->] (c2) edge (c);

      \node[anchor=south west] at ([xshift=0.25cm,yshift=0.35cm]i) {$\Sigma$};
      \node[anchor=north west] at ([xshift=0.25cm,yshift=-0.35cm]i) {$\Sigma$}; 
      \node[anchor=north east] at ([xshift=-0.5cm,yshift=0.2cm]i1) {$\id$};
      \node[anchor=south east] at ([xshift=-0.5cm,yshift=-0.2cm]i2) {$\id$};

      \node[anchor=south east] at ([xshift=-0.25cm,yshift=0.4cm]c) {$\id$};
      \node[anchor=north east] at ([xshift=-0.25cm,yshift=-0.4cm]c) {$\id$};

      \node[anchor=north west] at ([xshift=0.5cm,yshift=0.2cm]c1) {$\Sigma$};
      \node[anchor=south west] at ([xshift=0.5cm,yshift=-0.2cm]c2) {$\Sigma$};
      
    \end{tikzpicture}
  \end{center}
  \caption{Illustration of $\cM_{\pth{\pi_1+\pi_2}\true}$.}
  \label{fig:plus}
\end{figure}

\subsubsection*{Case $\alpha=\pth{\pi_1+\pi_2}\true$}

If $\cM_{\pth{\pi_i}\true}=(S_i,\delta_i,\iota_i,\cstate_i,\rank_i)$
and $\delta_i(\cstate_i,\sigma)=\bot$ for all $i\in[2]$ and
$\sigma\in\Sigma$, then we define $\cM_{\pth{\pi_1+\pi_2}\true}$ to be
the local MSCA $(S,\delta,\iota,\cstate,\rank)$ where $S=S_1\uplus
S_2\uplus\{\iota,\cstate\}$,
$\rank=\rank_1\cup\rank_2\cup\{(\iota,1),(\cstate,0)\}$, and
\[\delta(s,\sigma)=\begin{cases}
    (\id,\iota_1)\lor(\id,\iota_2)&\text{if $s=\iota$}\\
    (\id,\cstate)&\text{if $s=\cstate_i$ and
    $i\in[2]$}\\
    \delta_i(s,\sigma)&\text{if $s\in S_i\setminus\{\cstate_i\}$ and
  $i\in[2]$}\\
\bot&\text{if $s=\cstate$}
  \end{cases}\]
for all $s\in S$ and $\sigma\in\Sigma$. The local MSCA
$\cM_{\pth{\pi_1+\pi_2}\true}$ is visualized in Fig.~\ref{fig:plus}.

\begin{figure}
  \begin{center}\small
    \pgfdeclarelayer{background}
    \pgfdeclarelayer{foreground}
    \pgfsetlayers{background,main,foreground}
    \begin{tikzpicture}[>=stealth',semithick,auto,shorten >=1pt,shorten
      <=1pt,initial text=,initial distance=0.7cm,
      every state/.style={fill=white,draw=black,text=black,
      minimum size=1cm,inner sep=1pt}]

      \begin{scope}[every node/.style={cloud,cloud puffs=14,draw,
        cloud ignores aspect,fill=white,minimum width=2.75cm,
        minimum height=2cm}]

        \node[anchor=center] (one) at (3,1.75) {$\cM_{\pth{\pi}\true}$};
      \end{scope}
      
      \begin{scope}
        \node[state,initial] (i) at (0,0) {$\iota\mid 1$};
        \node[state] (c) at (6,0) {$\cstate\mid 0$};
        \node[state] (iprime) at (one.west) {$\iota'\,|\,1$};
        \node[state] (cprime) at (one.east) {$\cstate'\mid 1$};
      \end{scope}
      
      \path[->] (i) edge (iprime);
      \path[->] (i) edge[bend right=15] (c);
      \draw[->] (cprime) .. controls +(2,-2) and +(1,0) .. (i);
        
      \node[anchor=south west] at ([xshift=0.5cm,yshift=0.1cm] i) {$\id$};
      \node[anchor=north west] at ([xshift=0.5cm,yshift=-0.3cm] i)
      {$\Sigma$};
      \node[anchor=south west] at ([xshift=0.1cm,yshift=0.5cm] i)
      {$\Sigma$};

      \node[anchor=north east] at ([xshift=-0.5cm,yshift=-0.3cm] c) {$\id$};
      \node[anchor=north east] at ([xshift=-0.5cm,yshift=-0.1cm] iprime)
      {$\id$};

      \node[anchor=north west] at ([xshift=0.6cm,yshift=-0.3cm] cprime)
      {$\Sigma$};
        
    \end{tikzpicture}
  \end{center}
  \caption{Illustration of the local MSCA $\cM_{\pth{\pi^\ast}\true}$.}
  \label{fig:star}
\end{figure}

\subsubsection*{Case $\alpha=\pth{\pi^\ast}\true$}

If $\cM_{\pth{\pi}\true}=(S',\delta',\iota',\cstate',\rank')$ and
$\delta'(\cstate',\sigma)=\bot$ for all $\sigma\in\Sigma$, then we set
$\cM_{\pth{\pi^\ast}\true}=(S,\delta,\iota,\cstate,\rank)$ where
$S=S'\uplus\{\iota,\cstate\}$, $\rank$ and $\rank'$ coincide on
$S'\setminus\{\cstate'\}$, $\rank'(s)=1$ if $s\in\{\iota,\cstate'\}$,
$\rank'(\cstate)=0$, and
\[\delta(s,\sigma)=\begin{cases}
    (\id,\iota)&\text{if $s=\cstate'$}\\
    (\id,\iota')\lor(\id,\cstate)&\text{if $s=\iota$}\\
    \bot&\text{if $s=\cstate$}\\
    \delta'(s,\sigma)&\text{if $s\in S'\setminus\{\cstate'\}$}
  \end{cases}\]
for all $s\in S$ and $\sigma\in\Sigma$. See Fig.~\ref{fig:star} for a
visualization of $\cM_{\pth{\pi^\ast}\true}$.


Intuitively, the local MSCA $\cM_{\pth{\pi^\ast}\true}$ executes a copy of
the automaton $\cM_{\pth{\pi}\true}$ and, every time this copy changes into
its concatenation state $\cstate'$, the local MSCA
$\cM_{\pth{\pi^\ast}\true}$ nondeterministically decides whether it restarts
this copy again or changes into the concatenation state $\cstate$.

\subsubsection*{Case $\alpha=\pth{\pi}^\omega$}
    
If $\cM_{\pth{\pi}\true}=(S,\delta',\iota,\cstate,\rank)$ and
$\delta'(\cstate,\sigma)=\bot$ for all $\sigma\in\Sigma$, then we set
$\cM_{\pth{\pi}^\omega}=(S,\delta,\iota,\cstate,\rank)$ where
\[\delta(s,\sigma)=\begin{cases}
    (\id,\iota)&\text{if $s=\cstate$}\\
    \delta'(s,\sigma)&\text{if $s\in S\setminus\{\cstate\}$}
  \end{cases}\]
for all $s\in S$ and $\sigma\in\Sigma$.

\subsection{Concatenation States}\label{subsec:concatenationStates}

In this section, we prove a technical proposition stating that, for
all path formulas $\alpha$, every accepting run of the local MSCA
$\cM_\alpha$ exhibits exactly one configuration labelled by the
concatenation state. It will be of use in
Sect.~\ref{subsec:correctness} to show the correctness of our
construction.

Firstly, we introduce the notion of main states. A state $s$ is called
a main state if the concatenation state can be reached from
$s$. The intuition of this type of states is the following: If $\pi$
is a path expression and $\rho$ is an accepting run of
$\cM_{\pth{\pi}\true}$, then $\rho$ exhibits one main branch $\branch$
by which $\cM$ ``processes'' the path expression $\pi$. The label of
$\branch$ solely consists of the not yet formally defined main
states. In all the other branches of $\rho$, i.e., in the branches
which fork from $\branch$, $\cM$ basically executes tests of the form
$\{\alpha\}$. All these branches are labelled by non-main states.
\begin{defi}
  Let $\cM=(S,\delta,\iota,\cstate,\rank)$ be a local MSCA and $s\in
  S$. We inductively define the set of \emph{main states}
  $\mainstates(\cM)$ of $\cM$:
  $\mainstates(\cM)$ is the least set such that, for all $s\in S$,
  we have $s\in\mainstates(\cM)$ if and only if
  \begin{enumerate}[(1)]
  \item $s=\cstate$ or
  \item there exist $s'\in\mainstates(\cM)$, $\sigma\in\Sigma$,
    $D\in\mathbb{M}$, and $\tau\in\minmodel{\delta(s,\sigma)}$ such
    that $(D,s')\in\tau$.
  \end{enumerate}
\end{defi}
By examining our construction, one can make the following two simple
observations.
\begin{rem}\label{remark:trivialFacts}
  If $\alpha$ is a path formula and
  $\cM_\alpha=(S,\delta,\iota,\cstate,\rank)$, the following
  conditions hold:
  \begin{enumerate}[(1)]
  \item we have $\delta(\cstate,\sigma)=\bot$ for every
    $\sigma\in\Sigma$
  \item\label{fact:mainStates} for all
    $s\in\mainstates(\cM_\alpha)$, we have
    \[\rank(s)=\begin{cases}
      0&\text{if $s=\cstate$}\\
      1&\text{otherwise}
    \end{cases}\]
  \end{enumerate}
\end{rem}\medskip

\noindent If $\alpha$ is a path formula of the form $\pth{\pi_1;\pi_2}\true$,
then in our construction of $\cM_\alpha$, we required
$\delta_1(\cstate_1,\sigma)=\bot$ for all $\sigma\in\Sigma$ where
$\delta_1$ is the transition relation and $\cstate_1$ is the
concatenation state of $\cM_{\pth{\pi_1}\true}$. It follows from the
above observation (1) that our construction can be applied to all
formulas of the form $\pth{\pi_1;\pi_2}\true$. Similarly, this holds
for our construction of~$\cM_\alpha$ in the cases
$\alpha=\pth{\pi_1+\pi_2}\true$, $\alpha=\pth{\pi^\ast}\true$, and
$\alpha=\pth{\pi}^\omega$.
\begin{lem}\label{lemma:mainStates}
  If $\alpha$ is a path formula and $\cM_\alpha=(S,\delta,\iota,\cstate,\rank)$,
  then the following two conditions hold:
  {\renewcommand*\theenumi{\alph{enumi}}
    \renewcommand*\labelenumi{(\theenumi)}
  \begin{enumerate}[\em(a)]
     \item $\iota\in\mainstates(\cM_\alpha)$
     \item for all $s\in\mainstates(\cM_\alpha)$,
       $\sigma\in\Sigma$, and $\tau\in\minmodel{\delta(s,\sigma)}$, we have
       \begin{equation}\label{eq:singleton}
         |\tau\cap(\mainstates(\cM_\alpha)\times\mathbb{M})|=1
       \end{equation}
   \end{enumerate}
   }
\end{lem}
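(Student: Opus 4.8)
The plan is to argue by structural induction on the path formula $\alpha=\pth{\pi}\true$, following the very case distinction used to build $\cM_\alpha$ in Section~\ref{sec:translationOfLocalFormulas}. The base cases are the atomic path expressions, $\alpha=\pth{D}\true$ and $\alpha=\pth{\{\beta\}}\true$, and the inductive cases are $\alpha=\pth{\pi_1;\pi_2}\true$, $\alpha=\pth{\pi_1+\pi_2}\true$, and $\alpha=\pth{\pi^\ast}\true$; the formula $\pth{\pi}^\omega$ is not a path formula and hence needs no treatment here. In each case I will first pin down $\mainstates(\cM_\alpha)$ explicitly and then read off (a) and (b) directly from the shape of $\delta$, using throughout that the concatenation state always satisfies $\delta(\cstate,\sigma)=\bot$ (Remark~\ref{remark:trivialFacts}), so that $\minmodel{\delta(\cstate,\sigma)}=\emptyset$ makes (b) vacuous at $\cstate$.

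For the base cases the verification is immediate. In $\cM_{\pth{D}\true}$ the only minimal model of $\delta(\iota,\sigma)$ is $\{(D,\cstate)\}$, so $\iota$ reaches $\cstate$ and is a main state, giving (a), and its single movement lands in the main state $\cstate$, giving (b). In $\cM_{\pth{\{\beta\}}\true}$ the crucial point is that no state of the test sub-automaton $\cM_\beta$ can reach the fresh concatenation state $\cstate$, since the internal transitions of $\cM_\beta$ never leave $S'$; thus $S'\cap\mainstates(\cM_{\pth{\{\beta\}}\true})=\emptyset$. The unique minimal model $\{(\id,\iota'),(\id,\cstate)\}$ of $\delta(\iota,\sigma)$ therefore meets $\mainstates(\cM_{\pth{\{\beta\}}\true})\times\mathbb{M}$ only in $(\id,\cstate)$, which is exactly what (b) asks, and $\iota$ is a main state because of this very movement.

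The heart of the inductive step is a single \emph{relativisation claim}: for each sub-automaton $\cM_{\pth{\pi_i}\true}=(S_i,\delta_i,\iota_i,\cstate_i,\rank_i)$ entering the construction of $\cM_\alpha$, one has $\mainstates(\cM_\alpha)\cap S_i=\mainstates(\cM_{\pth{\pi_i}\true})$. Three ingredients establish this. First, the transitions of the states in $S_i\setminus\{\cstate_i\}$ are copied verbatim into $\delta$, so the movements available inside the block are unchanged. Second, the old concatenation state $\cstate_i$, which is a main state of $\cM_{\pth{\pi_i}\true}$ by definition, still reaches the concatenation state of $\cM_\alpha$ after rewiring (in the $+$-case directly through $(\id,\cstate)$; in the $\ast$-case through $\iota$; and in the $;$-case, whose concatenation state is $\cstate_2$, through $\iota_2$ followed by the preserved internal run of the second block reaching $\cstate_2$), hence remains a main state. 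Third, $\cstate_i$ is the only exit from the block $S_i$, so a state of $S_i$ reaches the concatenation state of $\cM_\alpha$ if and only if it already reaches $\cstate_i$ inside $\cM_{\pth{\pi_i}\true}$. I expect proving this claim cleanly to be the main obstacle, as it forces one to verify that rewiring a concatenation state neither creates nor destroys main states within a block.

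Granting the claim, (a) and (b) follow uniformly. For (a): in the $;$-case the initial state is $\iota_1$, which is main by the induction hypothesis (a) and the claim; in the $+$- and $\ast$-cases the fresh initial state $\iota$ has a minimal model pointing to a main state (to $\iota_1$, respectively to $\cstate$), so $\iota$ is main too. For (b), I split on the kind of main state $s$. If $s$ is a fresh bookkeeping state or a rewired former concatenation state, then each of its minimal models is a singleton hitting a main state (namely one of $\{(\id,\iota_1)\}$, $\{(\id,\iota_2)\}$, $\{(\id,\cstate)\}$, $\{(\id,\iota)\}$, $\{(\id,\iota')\}$), so the count is exactly $1$, while $\delta(\cstate,\sigma)=\bot$ is vacuous. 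If $s\in S_i\setminus\{\cstate_i\}$, then $\delta(s,\sigma)=\delta_i(s,\sigma)$ mentions only states of $S_i$, so by the claim $\tau\cap(\mainstates(\cM_\alpha)\times\mathbb{M})=\tau\cap(\mainstates(\cM_{\pth{\pi_i}\true})\times\mathbb{M})$, whose cardinality is $1$ by the induction hypothesis (b). This closes the induction.
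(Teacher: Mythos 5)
Your proof is correct and follows essentially the same route as the paper: a structural induction over the construction of $\cM_\alpha$ (with $\pth{D}\true$ and $\pth{\{\beta\}}\true$ as base cases and $\pth{\pi}^\omega$ rightly excluded), identifying the main states of the composite automaton blockwise in terms of those of the sub-automata, and then deriving (a) and (b) by classifying the minimal models of $\delta$. The only difference is presentational: your explicit ``relativisation claim'' $\mainstates(\cM_\alpha)\cap S_i=\mainstates(\cM_{\pth{\pi_i}\true})$, together with its three supporting observations, is exactly the identification of $\mainstates(\cM_\alpha)$ that the paper asserts as ``easily checked''.
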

Intuitively, the above lemma states that every run of $M_\alpha$
exhibits exactly one branch labelled solely by main states and that
all other configurations of this run which are not part of this path
are labelled by non-main states.
\begin{proof}
  By simple inspection, our claim follows for the cases $\alpha=\pth{D}\true$
  with $D\in\mathbb{M}$ and $\alpha=\pth{\{\beta}\}\true$.
  As our induction hypothesis, let us assume that the above lemma holds for
  $\cM_{\pth{\pi_i}\true}=(S_i,\delta_i,\iota_i,\cstate_i,\rank_i)$ where
  $i\in[2]$. 
  If $\alpha=\pth{\pi_1;\pi_2}\true$, then it can be easily checked
  that
  $\mainstates(\cM_\alpha)=\mainstates(\cM_{\pth{\pi_1}\true})\cup\mainstates(\cM_{\pth{\pi_2}\true})$. Hence,
  $\iota_1\in\mainstates(\cM_\alpha)$ and, therefore, property (a) is
  fulfilled. Now, let $\tau\in\minmodel{\delta(s,\sigma)}$ for some
  $s\in S$ and $\sigma\in\Sigma$. Then $\tau=\{(\id,\iota_2)\}$ (if
  $s=\cstate_1$), $\tau\in\minmodel{\delta_1(s,\sigma)}$, or
  $\tau\in\minmodel{\delta_2(s,\sigma)}$. Together with our induction
  hypothesis it follows that (\ref{eq:singleton}) holds for $\tau$.
  Now, let us consider the case $\alpha=\pth{\pi_1+\pi_2}\true$.  By
  easy inspection it follows that
  \[\mainstates(\cM_\alpha)=\{\iota,\cstate\} \cup
  \mainstates(\cM_{\pth{\pi_1}\true}) \cup
  \mainstates(\cM_{\pth{\pi_2}\true})\,.\] 
  Hence, property (a)
  follows. If $\tau\in\minmodel{\delta(s,\sigma)}$ for some $s\in S$
  and $\sigma\in\Sigma$, then $\tau=\{(\id,\iota_1)\}$,
  $\tau=\{(\id,\iota_2)\}$, $\tau=\{(\id,\cstate)\}$,
  $\tau\in\minmodel{\delta_1(s,\sigma)}$, or
  $\tau\in\minmodel{\delta_2(s,\sigma)}$. Property (b) follows from
  our induction hypothesis.

  Finally, we need to deal with the case
  $\alpha=\pth{\pi^\ast}\true$. For this, we assume that the above
  lemma holds for
  $\cM_{\pth{\pi}\true}=(S',\delta',\iota',\cstate',\rank')$. Again,
  it can be easily verified that
  $\mainstates(\cM_\alpha)=\{\iota,\cstate\} \cup
  \mainstates(\cM_{\pth{\pi}\true})$. Thus, property (a)
  holds. Now, let $\tau\in\minmodel{\delta(s,\sigma)}$ for some $s\in
  S$ and $\sigma\in\Sigma$. We have $\tau=\{(\id,\iota)\}$,
  $\tau=\{(\id,\iota')\}$, $\tau=\{(\id,\cstate)\}$, or
  $\tau\in\minmodel{\delta'(s,\sigma)}$. Property (b) follows from our
  induction hypothesis.
\end{proof}

\begin{prop}\label{prop:exactlyOneConcatenationConfig}
  Let $\alpha$ be a path formula and $\rho=(C,E,r,\mu,\nu)$ be an
  accepting run of
  $\cM_{\alpha}=(S,\delta,\iota,\cstate,\rank)$. There exists exactly
  one configuration from $C$ denoted by $\conc(\rho)$ with
  $\mu(\conc(\rho))=\cstate$.
\end{prop}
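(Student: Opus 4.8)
The plan is to prove that every accepting run of $\cM_\alpha$ contains exactly one configuration labelled by the concatenation state $\cstate$, by combining the structural information from Lemma~\ref{lemma:mainStates} (the existence and uniqueness of a ``main branch'') with the acceptance condition via the rank assignment recorded in Remark~\ref{remark:trivialFacts}. The key observation is that $\cstate$ is the unique main state of rank~$0$, so reaching it and the acceptance/maximality conditions together pin down its occurrences.

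First I would establish \emph{uniqueness of the main branch}. Using Lemma~\ref{lemma:mainStates}(a), the root $r$ is labelled by $\iota\in\mainstates(\cM_\alpha)$. By Lemma~\ref{lemma:mainStates}(b), at every configuration $x$ labelled by a main state with $E_\rho(x)\neq\emptyset$, the transition $\mathsf{tr}_\rho(x)\in\minmodel{\delta(\mu(x),\lambda^M(\nu(x)))}$ contains exactly one pair whose state is again a main state. Hence there is a unique successor of $x$ that is labelled by a main state, and all other successors carry non-main states. By induction along the tree, this singles out a unique maximal path $\branch=x_1x_2\dots$ starting at $r$ all of whose configurations are labelled by main states, while every configuration off this path is labelled by a non-main state. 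I would note that any configuration labelled $\cstate$ lies on $\branch$, since $\cstate$ is a main state (condition (1) of the definition of $\mainstates$) and the main branch contains \emph{all} main-state-labelled configurations reachable from the root.

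Next I would pin down occurrences of $\cstate$ along $\branch$. By Remark~\ref{remark:trivialFacts}(1), $\delta(\cstate,\sigma)=\bot$ for all $\sigma$, so the automaton is stuck at any configuration labelled $\cstate$; thus such a configuration is a leaf, i.e., $E_\rho(x)=\emptyset$, and $\branch$ terminates there. This gives \emph{at most one} occurrence of $\cstate$: once $\branch$ reaches $\cstate$ it ends, and no later main-state configuration exists. It remains to rule out the two failure modes in which $\branch$ never reaches $\cstate$: either $\branch$ is infinite, or $\branch$ is finite and ends in a main-state leaf other than $\cstate$. By Remark~\ref{remark:trivialFacts}(\ref{fact:mainStates}), every main state other than $\cstate$ has rank~$1$ and $\cstate$ has rank~$0$. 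If $\branch$ is infinite, then all states on it are main states of rank~$1$, so $\min\{\rank(s)\mid s\in\mathsf{inf}(\mu(\branch))\}=1$ is odd and $\branch$ is not accepting, contradicting that $\rho$ is accepting. If $\branch$ is finite and ends in a main state $s\neq\cstate$, then $\rank(s)=1$ is odd, so by Definition~\ref{def:accepting} the branch $\branch$ is again not accepting, a contradiction. Therefore $\branch$ must terminate in a configuration labelled $\cstate$, giving \emph{exactly one} occurrence.

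The main obstacle I anticipate is the careful bookkeeping in the uniqueness-of-main-branch step: I must verify that condition~(4) of Definition~\ref{def:labelledTree} together with Lemma~\ref{lemma:mainStates}(b) genuinely forces a \emph{single} main-state successor at each node (not merely at most one), and that no main-state configuration can appear off $\branch$ via some forked test branch re-entering a main state. This is handled by the ``if and only if'' characterisation in the definition of $\mainstates$ and by the fact that, in our construction, the tests introduced by subformulas $\{\beta\}$ spawn copies whose initial states are non-main within the parent automaton; establishing this cleanly is where the argument needs the most care, but it follows directly from the equalities for $\mainstates(\cM_\alpha)$ computed in the proof of Lemma~\ref{lemma:mainStates}.
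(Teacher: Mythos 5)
Your proof is correct and takes essentially the same route as the paper's: Lemma~\ref{lemma:mainStates} gives the unique branch consisting of all main-state configurations, and Remark~\ref{remark:trivialFacts} together with the parity acceptance condition forces that branch to be finite and to end in the unique occurrence of $\cstate$. The additional detail you supply---explicitly ruling out the infinite-branch and wrong-leaf cases, and observing that $\delta(\cstate,\sigma)=\bot$ makes every $\cstate$-labelled configuration a leaf---is precisely what the paper compresses into the single assertion $\mu(\branch)\in(\mainstates(\cM_\alpha)\setminus\{\cstate\})^\ast\{\cstate\}$.
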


\begin{proof}
  It follows from Lemma~\ref{lemma:mainStates} that all configurations
  $x\in C$ with $\mu(x)\in\mainstates(\cM_\alpha)$ form a unique
  branch $\branch=x_1x_2x_3\ldots\in C^\infty$ of $\rho$. Since $\rho$
  is accepting, $\branch$ must be accepting. It follows from
  Remark~\ref{remark:trivialFacts} that $\mu(\branch)\in
  (\mainstates(\cM_\alpha)\setminus\{\cstate\})^\ast\{\cstate\}$.
  Therefore, every accepting run of $\cM_\alpha$ contains exactly one
  configuration labelled by~$\cstate$.
\end{proof}

\subsection{Correctness}\label{subsec:correctness}

Let $\alpha$ be a local formula. We show by induction over the
construction of $\alpha$ that $\lang(\cM_\alpha)=\lang(\alpha)$.  The
following claim is used as the induction hypothesis of our proof.
Recall that $\reach_M(v,\pi)$ is the set of all events which can be
reached from $v$ by a path described by $\pi$ in the MSC $M$.
\begin{myclaim}\label{claim:pathAutomata}
  Let $\alpha$ be a path formula. For all MSCs $M$, events $v,v'\in
  V^M$, we have $v'\in\reach_M(v,\alpha)$ if and only if there exists
  an accepting run $\rho$ of $\cM_\alpha$ on $(M,v)$ with
  $\nu(\conc(\rho))=v'$.
\end{myclaim}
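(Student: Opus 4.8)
The plan is to prove Claim~\ref{claim:pathAutomata} by induction over the structure of the path formula $\alpha$, mirroring exactly the case distinction used in the construction of $\cM_\alpha$. By Remark~\ref{remark:pathFormulas}, it suffices to treat path formulas of the shape $\pth{\pi}\true$. The key structural tool is Proposition~\ref{prop:exactlyOneConcatenationConfig}, which guarantees that every accepting run $\rho$ of $\cM_\alpha$ contains a \emph{unique} configuration $\conc(\rho)$ labelled by the concatenation state $\cstate$; this is what makes the value $\nu(\conc(\rho))$ well-defined and turns the automaton into a device that ``reports'' a single reached event. So the claim really says: the events reportable by accepting runs of $\cM_\alpha$ are precisely $\reach_M(v,\pi)$.

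First I would dispatch the base cases. For $\alpha = \pth{D}\true$ with $D \in \mathbb{M}$, an accepting run must fire $(D,\cstate)$ from $\iota$, which is possible exactly when some $v'$ with $\eta_M(v,v') = D$ exists, and then $\nu(\conc(\rho)) = v'$; this matches $\reach_M(v,D)$. For $\alpha = \pth{\{\beta\}}\true$ the automaton forks into a test-copy of $\cM_\beta$ (which must accept, i.e.\ $M,v \models \beta$, invoking the outer correctness induction on $\lang(\cM_\beta) = \lang(\beta)$) and simultaneously into $\cstate$ still positioned at $v$; so an accepting run exists with $\nu(\conc(\rho)) = v$ iff $M,v \models \beta$, matching $\reach_M(v,\{\beta\})$. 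The inductive cases are where the concatenation-state design pays off. For $\pth{\pi_1;\pi_2}\true$, an accepting run of $\cM_\alpha$ decomposes at the $(\id,\iota_2)$ transition: the $\cM_{\pth{\pi_1}\true}$-part is (by the argument of Prop.~\ref{prop:exactlyOneConcatenationConfig} applied to the first component) an accepting run reporting some $v''$, and the $\cM_{\pth{\pi_2}\true}$-part launched at $v''$ reports $v'$; by the induction hypothesis this is exactly $v'' \in \reach_M(v,\pi_1)$ and $v' \in \reach_M(v'',\pi_2)$, whose union over $v''$ is $\reach_M(v,\pi_1;\pi_2)$. For $\pth{\pi_1+\pi_2}\true$ the initial disjunctive choice $(\id,\iota_1) \lor (\id,\iota_2)$ corresponds directly to the union in the definition of $\reach$. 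For $\pth{\pi^\ast}\true$, I would argue by tracking how many times the run cycles back through $\iota$ via the $(\id,\iota)$ edge from $\cstate'$: an accepting run performing $n$ iterations reports an event in $\reach_M(v,\pi^n)$, and conversely; this requires showing both directions, i.e.\ that any $v' \in \reach_M(v,\pi^n)$ can be witnessed by gluing $n$ accepting sub-runs end to end, and that the finiteness of iterations in an accepting run (guaranteed because $\rank(\iota) = \rank(\cstate') = 1$ forces the main branch to eventually leave the loop and reach $\cstate$) bounds things to some finite $n$.

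The main obstacle I expect is the $\pth{\pi^\ast}\true$ case, and specifically the interaction between acceptance (the parity condition) and the inductive gluing of sub-runs. In each direction I must be careful that the concatenation of $n$ individually accepting runs of $\cM_{\pth{\pi}\true}$ yields a genuinely accepting run of $\cM_{\pth{\pi^\ast}\true}$: the side branches executing tests stay accepting by inductive hypothesis, but the main branch now threads through the added states $\iota, \cstate', \cstate$, so I must verify that the composite main branch's rank sequence still has even minimum recurring rank — here the fact that the loop states carry odd rank $1$ while the final $\cstate$ carries rank $0$ is essential, since it forbids an accepting run from looping forever and forces it to report after finitely many iterations, exactly realizing $\reach_M(v,\pi^\ast) = \{v\} \cup \bigcup_{n \geq 1}\reach_M(v,\pi^n)$. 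I would handle this by stating precisely how the main branch of $\cM_\alpha$ decomposes (via Lemma~\ref{lemma:mainStates} and Prop.~\ref{prop:exactlyOneConcatenationConfig}) into blocks, each block being the main branch of one accepting copy of $\cM_{\pth{\pi}\true}$, and then invoke the induction hypothesis block-by-block. The remaining cases reduce to routine verification once this decomposition lemma for main branches is in hand.
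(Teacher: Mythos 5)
Your proposal is correct and follows essentially the same route as the paper: an induction over the construction with Claim~\ref{claim:pathAutomata} as the inductive invariant for path formulas, base cases by inspection, the composite cases handled by gluing accepting sub-runs (for the ``reach $\Rightarrow$ run'' direction) and by decomposing an accepting run into blocks along the unique main branch identified via Lemma~\ref{lemma:mainStates} and Prop.~\ref{prop:exactlyOneConcatenationConfig} (for the converse), with exactly the rank bookkeeping you describe --- this is the content of the paper's Lemmas~\ref{lemma:correctnessConcatenation1}--\ref{lemma:correctnessStar2}. Your observations that the test case must appeal to the outer induction $\lang(\cM_\beta)=\lang(\beta)$ rather than the claim itself, and that the odd ranks of $\iota$ and $\cstate'$ force termination of the star loop, match the paper's treatment precisely.
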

The following four technical lemmas deal with the correctness of the
constructions of the local MSCAs $\cM_{\pth{\pi_1;\pi_2}\true} $ and
$\cM_{\pth{\pi^\ast}\true}$.
\begin{lem}\label{lemma:correctnessConcatenation1}
  Let $M$ be an MSC, $v_1,v'\in V^M$, and $\pi_1,\pi_2$ be path
  expressions. If Claim~\ref{claim:pathAutomata} holds for
  $\pth{\pi_1}\true$ and $\pth{\pi_2}\true$ and we have
  $v'\in\reach_M(v_1,\pi_1;\pi_2)$, then there exists an accepting run
  $\rho=(C,E,r,\mu,\nu)$ of $\cM_{\pth{\pi_1;\pi_2}\true}$ on
  $(M,v_1)$ with $\nu(\conc(\rho))=v'$.
\end{lem}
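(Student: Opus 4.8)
The plan is to build the required run by gluing together runs supplied by the induction hypothesis, Claim~\ref{claim:pathAutomata}. First I would unfold the definition $\reach_M(v_1,\pi_1;\pi_2)=\bigcup_{w\in\reach_M(v_1,\pi_1)}\reach_M(w,\pi_2)$, so that $v'\in\reach_M(v_1,\pi_1;\pi_2)$ yields an intermediate event $w\in V^M$ with $w\in\reach_M(v_1,\pi_1)$ and $v'\in\reach_M(w,\pi_2)$. Applying Claim~\ref{claim:pathAutomata} to $\pth{\pi_1}\true$ gives an accepting run $\rho_1=(C_1,E_1,r_1,\mu_1,\nu_1)$ of $\cM_{\pth{\pi_1}\true}$ on $(M,v_1)$ with $\nu_1(\conc(\rho_1))=w$, and applying it to $\pth{\pi_2}\true$ gives an accepting run $\rho_2=(C_2,E_2,r_2,\mu_2,\nu_2)$ of $\cM_{\pth{\pi_2}\true}$ on $(M,w)$ with $\nu_2(\conc(\rho_2))=v'$. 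By Proposition~\ref{prop:exactlyOneConcatenationConfig}, $c_1:=\conc(\rho_1)$ is the unique configuration of $\rho_1$ labelled $\cstate_1$, and since $\delta_1(\cstate_1,\sigma)=\bot$ for all $\sigma$, the automaton $\cM_{\pth{\pi_1}\true}$ is stuck there, so $c_1$ is a leaf of $\rho_1$.

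Next I would form $\rho=(C,E,r_1,\mu,\nu)$ by taking the disjoint union $C=C_1\uplus C_2$, setting $E=E_1\cup E_2\cup\{(c_1,r_2)\}$, and letting $\mu=\mu_1\cup\mu_2$ and $\nu=\nu_1\cup\nu_2$. Since $\nu_1(c_1)=w=\nu_2(r_2)$, the new edge carries direction $\eta_M(w,w)=\id$, so conditions (4) and (5) of Definition~\ref{def:labelledTree} hold (because $c_1$ was a leaf, $r_2$ is its only child, so there is no sibling conflict), and $\rho$ is a legal $S$-labelled tree over $(M,v_1)$. To see that it is a run, I would invoke Definition~\ref{def:run}: the transition function of $\cM_{\pth{\pi_1;\pi_2}\true}$ agrees with $\delta_1$ on $S_1\setminus\{\cstate_1\}$ and with $\delta_2$ on $S_2$, so the run condition at every inner configuration inherited from $\rho_1$ or $\rho_2$ holds verbatim. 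At $c_1$ the only outgoing transition is the forced one, $\mathsf{tr}_\rho(c_1)=\{(\id,\iota_2)\}\in\minmodel{\delta(\cstate_1,\sigma)}$. Every leaf of $\rho$ is either a leaf of $\rho_1$ distinct from $c_1$ or a leaf of $\rho_2$, and the automaton is stuck there since stuck-ness depends only on $\delta_1$ respectively $\delta_2$. Here the uniqueness of $c_1$ granted by Proposition~\ref{prop:exactlyOneConcatenationConfig} is essential, as it guarantees that no inherited configuration of $\rho_1$ accidentally triggers the $\cstate_1$-case of $\delta$.

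For acceptance I would argue branch by branch, using that $\rank=\rank_1\cup\rank_2$ leaves every inherited rank unchanged. A branch of $\rho$ avoiding $c_1$ is a branch of $\rho_1$, hence accepting. A branch through $c_1$ is a finite prefix lying in $\rho_1$ (ending in the even-rank state $\cstate_1$, cf.\ Remark~\ref{remark:trivialFacts}) followed by a branch of $\rho_2$. Since the prefix is finite it affects neither the set $\mathsf{inf}$ of infinitely recurring states (in the infinite case) nor the terminal rank (in the finite case), so such a branch is accepting exactly when the corresponding branch of $\rho_2$ is, which holds because $\rho_2$ is accepting. Thus every branch of $\rho$ is accepting. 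Finally, as $\cstate_2\in S_2$, the configurations of $\rho$ labelled $\cstate_2$ are precisely those of the $\rho_2$-part, so by Proposition~\ref{prop:exactlyOneConcatenationConfig} we have $\conc(\rho)=\conc(\rho_2)$ and therefore $\nu(\conc(\rho))=\nu_2(\conc(\rho_2))=v'$, as required.

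I expect the main obstacle to be the acceptance bookkeeping in the third paragraph: one must verify carefully that prepending the finite $\rho_1$-prefix to a branch of $\rho_2$ preserves the parity condition in both the infinite and finite cases, and that the uniqueness of the $\cstate_1$-labelled configuration is exploited correctly so that the case split defining $\delta$ is applied to the intended configurations. The remaining verifications are routine instantiations of Definitions~\ref{def:labelledTree} and \ref{def:run}.
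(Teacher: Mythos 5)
Your proof is correct and follows essentially the same route as the paper: decompose via an intermediate event, glue $\rho_2$ onto $\conc(\rho_1)$ with an $\id$-labelled edge, and verify acceptance branch by branch. You additionally spell out details the paper dismisses as ``easily checked'' (the run-condition verification at $c_1$ and the leaves, and the role of Proposition~\ref{prop:exactlyOneConcatenationConfig} in ensuring no other $\cstate_1$-labelled configuration violates maximality), which is a faithful and welcome elaboration rather than a different approach.
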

\begin{proof}
  Let $M_{\pth{\pi_1;\pi_2}\true}=(S,\delta,\iota,\cstate,\rank)$ and
  $M_{\pth{\pi_i}\true}=(S_i,\delta_i,\iota_i,\cstate_i,\rank_i)$ for
  all $i\in[2]$. If we have $v'\in\reach_M(v_1,\pi_1;\pi_2)$, then, by
  definition, there exists an event $v_2\in V^M$ such that
  $v_2\in\reach_M(v_1,\pi_1)$ and $v'\in\reach_M(v_2,\pi_2)$. It
  follows from our assumption that there exists an accepting run
  $\rho_1=(C_1,E_1,r_1,\mu_1,\nu_1)$ of the local MSCA
  $\cM_{\pth{\pi_1}\true}$ on $(M,v_1)$ with
  $\nu_1(\conc(\rho_1))=v_2$ and that there exists an accepting run
  $\rho_2=(C_2,E_2,r_2,\mu_2,\nu_2)$ of $\cM_{\pth{\pi_2}\true}$ on
  $(M,v_2)$ with $\nu_2(\conc(\rho_2))=v'$. Consider the $S$-labelled
  tree $\rho=(C_1\uplus C_2,E,r_1,\mu,\nu)$ where $E=E_1\cup
  E_2\cup\{(\conc(\rho_1),r_2)\}$, $\mu=\mu_1\cup\mu_2$, and
  $\nu=\nu_1\cup\nu_2$. It can be easily checked that $\rho$ is a run
  of the local MSCA $\cM_{\pth{\pi_1;\pi_2}\true}$ on $(M,v_1)$ with
  $\conc(\rho)=\conc(\rho_2)$. In Fig.~\ref{fig:run_conc}, the run
  $\rho$ is depicted where $\conc(\rho_i)$ is denoted by $x_i$ for
  $i\in[2]$.

  It remains to show that $\rho$ is accepting. Let $\branch$ be a
  branch of $\rho$. We distinguish two cases: If $\branch\in
  C_1^\infty$, then $\branch$ is also a branch from $\rho_1$. Since
  $\rho_1$ is accepting, the branch $\branch$ is accepting in
  $\cM_{\pth{\pi_1}\true}$. Since $\rank_1\subseteq\rank$ and
  $\mu_1\subseteq\mu$, $\branch$ is accepting in
  $\cM_{\pth{\pi_1;\pi_2}\true}$, too. Otherwise (i.e., if $\branch\in
  C_1^+\{r_2\}C_2^\infty$), there exists a suffix of $\branch$ which
  is an accepting branch in $\rho_2$. Because of this fact,
  $\mu_2\subseteq\mu$, and $\rank_2\subseteq\rank$, the branch
  $\branch$ is also accepting in
  $\cM_{\pth{\pi_1;\pi_2}\true}$. Hence, $\rho$ is an accepting run of
  the automaton $\cM_{\pth{\pi_1;\pi_2}\true}$ on $(M,v_1)$ with
  $\nu(\conc(\rho))=\nu(\conc(\rho_2))=v'$.
\end{proof}

\begin{figure}[tb]
  \begin{center}
    \begin{tikzpicture}[scale=0.8,font=\small,text height=1ex,text depth=.25ex]
      \draw (0,0) node[draw, shape=isosceles triangle, shape border rotate=180,
      anchor=apex, inner xsep=0.9 cm] (t1) {$\rho_1$}; 
      
      \draw (t1.apex) node[draw,fill=white,circle,inner sep=0.1ex]  {$r_1$};
      
      \draw (t1.apex) node [above,inner sep=1.45ex] {$\iota_1$};
      
      \draw ([yshift=0.3cm] t1.lower side) node[draw,circle,fill=white,inner
      sep=0.1ex] (x1) {$x_1$}; 
      
      \draw (x1) node [above,fill=white,inner sep=0.4ex,outer sep=1.45ex]
      {$\cstate_1$};
      
      \draw ([xshift=1cm] x1) node[draw, shape=isosceles triangle, shape border
      rotate=180, anchor=apex, inner xsep=0.7cm] (t2) {$\rho_2$};
      
      \draw (t2.apex) node[draw,fill=white,circle,inner sep=0.1ex]  {$r_2$};
      
      \draw (t2.apex) node [above,inner sep=1.45ex] {$\iota_2$};
      
      \draw ([yshift=-0.4cm] t2.lower side) node[draw,circle,fill=white,inner
      sep=0.1ex] (x2) {$x_2$};

      \draw (x2) node [above,fill=white,inner sep=0.4ex,outer sep=1.45ex]
      {$\cstate_2$};
    
      \begin{pgfonlayer}{background}
              \draw (x1) -- (t2.apex);
      \end{pgfonlayer}
    \end{tikzpicture}
  \end{center}
  \caption{The run $\rho$ of $\cM_{\pth{\pi_1;\pi_2}\true}$.}
  \label{fig:run_conc}
\end{figure}
        
\begin{lem}\label{lemma:correctnessConcatenation2}
  Let $M$ be an MSC, $v,v'\in V^M$, and $\pi_1,\pi_2$ be path
  expressions. If Claim~\ref{claim:pathAutomata} holds for
  $\pth{\pi_1}\true$ and $\pth{\pi_2}\true$ and there exists an
  accepting run $\rho=(C,E,r_1,\mu,\nu)$ of
  $\cM_{\pth{\pi_1;\pi_2}\true}$ on $(M,v)$ with
  $\nu(\conc(\rho))=v'$, then $v'\in\reach_M(v,\pi_1;\pi_2)$.
\end{lem}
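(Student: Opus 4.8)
The plan is to reverse the decomposition carried out in the proof of Lemma~\ref{lemma:correctnessConcatenation1}: given the single accepting run $\rho$ of $\cM_{\pth{\pi_1;\pi_2}\true}$, I would cut it into an accepting run $\rho_1$ of $\cM_{\pth{\pi_1}\true}$ and an accepting run $\rho_2$ of $\cM_{\pth{\pi_2}\true}$, recover an intermediate event $v_2$, and then apply Claim~\ref{claim:pathAutomata} to each factor. Writing $\cM_{\pth{\pi_i}\true}=(S_i,\delta_i,\iota_i,\cstate_i,\rank_i)$ and recalling that in the concatenation $S=S_1\uplus S_2$ with the only bridge $\delta(\cstate_1,\sigma)=(\id,\iota_2)$, the first task is to locate the configuration at which $\rho$ switches from $S_1$-labels to $S_2$-labels.

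First I would isolate the unique configuration $c_1\in C$ with $\mu(c_1)=\cstate_1$. Since $\mainstates(\cM_{\pth{\pi_1;\pi_2}\true})=\mainstates(\cM_{\pth{\pi_1}\true})\cup\mainstates(\cM_{\pth{\pi_2}\true})$ (as computed inside the proof of Lemma~\ref{lemma:mainStates}), the state $\cstate_1$ is a main state, so by Lemma~\ref{lemma:mainStates} and the argument used in Proposition~\ref{prop:exactlyOneConcatenationConfig} all configurations labelled by main states lie on one unique branch $\branch$ of $\rho$, ending in $\conc(\rho)$ with label $\cstate_2$. Along $\branch$ the label remains in $S_1$ until it reaches $\cstate_1$, where the only available transition $(\id,\iota_2)$ forces the successor into $S_2$ and it stays there; hence $\branch$ visits $\cstate_1$ exactly once, giving the desired $c_1$ together with its unique child $r_2$ satisfying $\mu(r_2)=\iota_2$ and $\nu(r_2)=\nu(c_1)=:v_2$ (the connecting step is an $\id$-move).

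Next I would set $\rho_2=\rho\restriction C_2$ to be the restriction to the subtree $C_2$ rooted at $r_2$, and $\rho_1=\rho\restriction C_1$ with $C_1=C\setminus C_2$, so that $c_1$ becomes a leaf of $\rho_1$. For $\rho_1$: every configuration carries an $S_1$-label, $\delta$ agrees with $\delta_1$ on $S_1\setminus\{\cstate_1\}$, and $\delta_1(\cstate_1,\sigma)=\bot$ (Remark~\ref{remark:trivialFacts}) makes $\cM_{\pth{\pi_1}\true}$ stuck at $c_1$, so the run condition of Definition~\ref{def:run} is met and $\rho_1$ is a run on $(M,v)$. Its branches are either branches of $\rho$ staying inside $C_1$ (accepting since $\rho$ is and $\rank$ restricts to $\rank_1$ on $S_1$) or the prefix of $\branch$ ending at the leaf $c_1$ labelled $\cstate_1$, which is accepting because $\rank_1(\cstate_1)=0$; thus $\rho_1$ is accepting and, by Proposition~\ref{prop:exactlyOneConcatenationConfig}, $\conc(\rho_1)=c_1$ with $\nu(\conc(\rho_1))=v_2$. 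Symmetrically, $\rho_2$ is a run of $\cM_{\pth{\pi_2}\true}$ on $(M,v_2)$ whose branches are exactly the suffixes (from $r_2$ onward) of branches of $\rho$, so it is accepting with $\nu(\conc(\rho_2))=\nu(\conc(\rho))=v'$.

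Finally I would invoke the hypothesis that Claim~\ref{claim:pathAutomata} holds for $\pth{\pi_1}\true$ and $\pth{\pi_2}\true$: from $\rho_1$ I obtain $v_2\in\reach_M(v,\pi_1)$ and from $\rho_2$ I obtain $v'\in\reach_M(v_2,\pi_2)$, whence $v'\in\reach_M(v,\pi_1;\pi_2)$ by the inductive definition of $\reach_M$. I expect the main obstacle to be the bookkeeping of the splitting step, in particular verifying that cutting at $c_1$ yields two legitimate (maximal) runs: the crucial points are that $\delta_1(\cstate_1,\sigma)=\bot$ legitimises $c_1$ as a stuck leaf of $\rho_1$, and that acceptance is preserved under taking the relevant prefixes and suffixes of branches — trivially for finite branches, and for infinite branches because a suffix has the same set of states occurring infinitely often.
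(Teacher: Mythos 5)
Your proposal is correct and follows essentially the same route as the paper's proof: both split the accepting run $\rho$ at the unique bridge between the $S_1$-part and the $S_2$-part into runs $\rho_1$ and $\rho_2$ of $\cM_{\pth{\pi_1}\true}$ and $\cM_{\pth{\pi_2}\true}$, verify acceptance of each piece (using $\rank_1(\cstate_1)=0$ for the cut-off branch and preservation of infinitely-occurring states for the rest), and conclude via Claim~\ref{claim:pathAutomata}. The only cosmetic difference is how the splitting configuration is pinned down: you locate the unique configuration labelled $\cstate_1$ on the main branch, while the paper first shows uniqueness of the configuration labelled $\iota_2$ (via non-reachability of $\iota_2$ from itself) and takes its parent; these arguments are interchangeable.
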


\begin{proof}
  Let $M_{\pth{\pi_1;\pi_2}\true}=(S,\delta,\iota,\cstate,\rank)$ and
  $M_{\pth{\pi_i}\true}=(S_i,\delta_i,\iota_i,\cstate_i,\rank_i)$ for
  all $i\in[2]$.  Since $\mu(\conc(\rho))=\cstate$ and
  $\cstate=\cstate_2\in S_2$, there has to exist a configuration
  $r_2\in C$ with $\mu(r_2)=\iota_2$. Towards a contradiction, let us
  assume that there exists another configuration $r_3\in C$ with
  $\mu(r_3)=\iota_2$.  Because $\iota_2$ is a main state in
  $\cM_{\pth{\pi_1;\pi_2}\true}$ (see the proof of
  Lemma~\ref{lemma:mainStates}) and due to
  Lemma~\ref{lemma:mainStates}, $r_2$ and $r_3$ must occur in a branch
  of $\rho$. This is a contradiction to $\iota_2\notin
  \mathsf{src}_{\cM_{\pth{\pi_1;\pi_2}\true}}(\iota_2)$, i.e.,
  $\iota_2$ is not reachable from $\iota_2$ in
  $\cM_{\pth{\pi_1;\pi_2}\true}$. The latter fact follows by simple
  inspection of the transition relation of
  $\cM_{\pth{\pi_1;\pi_2}\true}$. Let $C_2 = \{y \in C \mid (r_2,y)
  \in E^\ast \}$ and $C_1 = C \setminus C_2$. It can be easily checked
  that the $S$-labelled tree $\rho_i=\rho\restriction C_i$ is a run of
  $\cM_{\pth{\pi_i}\true}$ for all $i\in[2]$. From the definition of
  the transition function $\delta$, it follows that there exists a
  configuration $x_1\in C_1$ with $\mu(x_1)=\cstate_1$,
  $E_\rho(x_1)=\{r_2\}$, $\nu(x_1)=\nu(r_2)$.
  Figure~\ref{fig:run_conc} shows a depiction of the run $\rho$
  consisting of $\rho_1$ and $\rho_2$ where $\conc(\rho)=x_2$.
        
  If $\branch$ is a branch of $\rho_2$ and $\branch'$ is the unique
  path in $\rho$ from $r_1$ to $x_1$, then $\branch'\branch$ is a
  branch of $\rho$. Since $\rho$ is accepting, $\mu_2\subseteq\mu$,
  and $\rank_2\subseteq\rank$, $\branch$ is accepting in
  $\cM_{\pth{\pi_2}\true}$. Hence, $\rho_2$ is an accepting run of
  $\cM_{\pth{\pi_2}\true}$ on $(M,\nu(x_1))$ with
  $\conc(\rho_2)=\conc(\rho)$.  Now, let $\branch$ be a branch of
  $\rho_1$. If $\branch\in(C\setminus\{x_1\})^\infty$, then $\branch$
  is a branch of $\rho$ with $\mu(\branch)\in S_1^\infty$. Since
  $\rho$ is accepting, $\mu_1\subseteq\mu$ and
  $\rank_1\subseteq\rank$, $\branch$ is accepting in
  $\cM_{\pth{\pi_1}\true}$. Otherwise (i.e., if $\branch\in
  C^\ast\{x_1\}$), $\branch$ ends in a configuration labelled by
  $\cstate_1$. By Remark~\ref{remark:trivialFacts},
  $\rank_1(\cstate_1)$ is even and, therefore, $\branch$ is accepting
  in $\rho_1$. Hence, $\rho_1$ is an accepting run of
  $\cM_{\pth{\pi_1}\true}$ on $(M,v)$ with $\conc(\rho_1)=x_1$. By our
  assumption, it follows that $\nu(x_1)\in\reach_M(v,\pi_1)$ and
  $v'\in\reach_M(\nu(x_1),\pi_2)$. Therefore, we have
  $v'\in\reach_M(v,\pi_1;\pi_2)$.
\end{proof}

\begin{lem}\label{lemma:correctnessStar1}
  Let $M$ be an MSC, $v,v'\in V^M$, and $\pi$ be a path expression. If
  Claim~\ref{claim:pathAutomata} holds for $\pth{\pi}\true$ and we
  have $v'\in\reach_M(v,\pi^\ast)$, then there exists an accepting run
  $\rho=(C,E,r,\mu,\nu)$ of $\cM_{\pth{\pi^\ast}\true}$ on $(M,v)$
  with $\nu(\conc(\rho))=v'$.
\end{lem}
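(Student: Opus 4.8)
The plan is to induct on the least $n\ge 0$ with $v'\in\reach_M(v,\pi^n)$, reading $n=0$ as the case $v'=v$; such an $n$ exists since $\reach_M(v,\pi^\ast)=\{v\}\cup\bigcup_{n\ge1}\reach_M(v,\pi^n)$. Throughout I write $\cM_{\pth{\pi^\ast}\true}=(S,\delta,\iota,\cstate,\rank)$ and $\cM_{\pth{\pi}\true}=(S',\delta',\iota',\cstate',\rank')$, so that $S=S'\uplus\{\iota,\cstate\}$. For the base case $v'=v$ I let $\rho$ consist of a root $r$ with $\mu(r)=\iota$, $\nu(r)=v$, and a single child $x$ with $\mu(x)=\cstate$, $\nu(x)=v$. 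As $\{(\id,\cstate)\}\in\minmodel{\delta(\iota,\lambda^M(v))}$ and $\delta(\cstate,\sigma)=\bot$, this is a run by Definition~\ref{def:run}; its only branch $\iota\,\cstate$ ends in the even-rank state $\cstate$ and is accepting, and $\conc(\rho)=x$ gives $\nu(\conc(\rho))=v=v'$.

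For the step $n\ge1$ I use $\reach_M(v,\pi^n)=\bigcup_{w\in\reach_M(v,\pi)}\reach_M(w,\pi^{n-1})$ to fix $w$ with $w\in\reach_M(v,\pi)$ and $v'\in\reach_M(w,\pi^{n-1})$. Claim~\ref{claim:pathAutomata} for $\pth{\pi}\true$ then supplies an accepting run $\rho_\pi=(C_\pi,E_\pi,r_\pi,\mu_\pi,\nu_\pi)$ of $\cM_{\pth{\pi}\true}$ on $(M,v)$ with $\nu_\pi(\conc(\rho_\pi))=w$, and the induction hypothesis, applied to $v'\in\reach_M(w,\pi^{n-1})\subseteq\reach_M(w,\pi^\ast)$, supplies an accepting run $\rho'=(C',E',r',\mu',\nu')$ of $\cM_{\pth{\pi^\ast}\true}$ on $(M,w)$ with $\nu'(\conc(\rho'))=v'$. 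I assemble $\rho=(C,E,r,\mu,\nu)$ over disjoint copies by putting $C=\{r\}\uplus C_\pi\uplus C'$, $\mu(r)=\iota$, $\nu(r)=v$, extending $\mu,\nu$ by $\mu_\pi,\nu_\pi,\mu',\nu'$, and taking $E=\{(r,r_\pi)\}\cup E_\pi\cup\{(\conc(\rho_\pi),r')\}\cup E'$. Thus $\rho_\pi$ hangs below the root via an $\id$-edge and, at the unique $\cstate'$-labelled leaf of $\rho_\pi$ (unique by Proposition~\ref{prop:exactlyOneConcatenationConfig}, a leaf since $\delta'(\cstate',\sigma)=\bot$), the run continues with $\rho'$ via a second $\id$-edge.

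That $\rho$ is a run needs checking only at the three seam configurations, as elsewhere the condition is inherited: at $r$ we have $\mathsf{tr}_\rho(r)=\{(\id,\iota')\}\in\minmodel{\delta(\iota,\lambda^M(v))}$ since $\nu(r_\pi)=v$; at $\conc(\rho_\pi)$, where $\delta(\cstate',\sigma)=(\id,\iota)$, we have $\mathsf{tr}_\rho(\conc(\rho_\pi))=\{(\id,\iota)\}$ since $\nu(r')=w=\nu_\pi(\conc(\rho_\pi))$; and the configurations of $C'$ already satisfy the condition because $\rho'$ is a run of the very same automaton. The remaining configurations of $C_\pi$ carry states of $S'\setminus\{\cstate'\}$, on which $\delta$ agrees with $\delta'$, so their conditions transfer verbatim.

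Finally, $\conc(\rho)=\conc(\rho')$: by Proposition~\ref{prop:exactlyOneConcatenationConfig} the accepting run $\rho$ has a single $\cstate$-labelled configuration, and since $\cstate\notin S'$ it can only be the one contributed by $\rho'$; hence $\nu(\conc(\rho))=v'$. For acceptance, each branch of $\rho$ either (i) remains in $\{r\}\cup C_\pi$ avoiding $\conc(\rho_\pi)$, so it is $r$ followed by a branch of the accepting run $\rho_\pi$; prepending the single state $\iota$ alters neither the terminal rank of a finite branch nor the set of infinitely-occurring states of an infinite one, and the ranks of the two automata agree on $S'\setminus\{\cstate'\}$; or (ii) it passes through $\conc(\rho_\pi)$, whence it is a finite prefix (through $r$, the main branch of $\rho_\pi$, and the $\iota$-labelled root copy $r'$) followed by a branch of the accepting run $\rho'$, and a finite prefix again changes neither the terminal rank nor the infinitely-occurring set. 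I expect the acceptance bookkeeping at the seam to be the only delicate point: one must note that raising $\rank(\cstate')$ to $1$ in the star automaton is harmless, because $\conc(\rho_\pi)$ is no longer a leaf and $\cstate'$ occurs only finitely often on any branch of $\rho$. This yields that $\rho$ is an accepting run with $\nu(\conc(\rho))=v'$, completing the induction.
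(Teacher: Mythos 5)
Your proof is correct and takes essentially the same approach as the paper: the paper fixes a witnessing sequence $v_1,\ldots,v_{n+1}$ and glues the $n$ accepting runs of $\cM_{\pth{\pi}\true}$ supplied by Claim~\ref{claim:pathAutomata} in one step (inserting fresh $\iota$-labelled configurations between copies and a final $\cstate$-labelled leaf), whereas you build the identical tree recursively by induction on the least $n$, delegating all but one seam to the induction hypothesis. The run-condition checks, the observation that raising $\rank(\cstate')$ to $1$ is harmless, and the branch-by-branch acceptance argument all match the paper's proof.
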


\begin{proof}
  Let $\cM_{\pth{\pi^\ast}\true}=(S,\delta,\iota,\cstate,\rank)$ and
  $\cM_{\pth{\pi}\true}=(S',\delta',\iota',\cstate',\rank')$. Since
  $v'\in\reach_M(v,\pi^\ast)$, there exist an $n\geq0$ and events
  $v_1,v_2,\ldots,v_{n+1}\in V^M$ such that $v_1=v$, $v_{n+1}=v'$, and
  $v_{i+1}\in\reach_M(v_i,\pi)$ for all $i\in[n]$. If $n=0$, then the
  lemma follows by easy inspection of the construction of
  $\cM_{\pth{\pi^\ast}\true}$.  Now, let us assume that
  $n\geq1$. Since Claim~\ref{claim:pathAutomata} holds for
  $\pth{\pi}\true$, we can assume that there exist, for all $i\in[n]$,
  accepting runs $\rho_i=(C_i,E_i,r_i,\mu_i,\nu_i)$ of the automaton
  $\cM_{\pth{\pi}\true}$ on the pointed MSC $(M,v_i)$ with
  $\nu_i(\conc(\rho_i))=v_{i+1}$. Without loss of generality, we may
  assume that $C_i\cap C_j=\emptyset$ for all $i,j\in[n]$ (note that
  we can enforce $C_i \cap C_j = \emptyset$ by renaming the nodes of
  the $C_i$'s). Let $x_i=\conc(\rho_i)$ for all $i\in[n]$. The
  $S$-labelled tree $\rho=(C,E,y_1,\mu,\nu)$ where
  \begin{align*}
    C&=\textstyle\bigcup_{i\in[n]}C_i\uplus\{y_1,y_2,\ldots,y_{n+1},z\}\,\text{,}\\
    E&=\textstyle\bigcup_{i\in[n]}E_i\cup\{(y_i,r_i)\mid i\in[n]\}\cup\{(x_i,y_{i+1})\mid i\in[n]\}\cup\{(y_{n+1},z)\}\,\text{,}\\
    \mu&=\textstyle\bigcup_{i\in[n]}\mu_{i}\cup\{(y_i,\iota)\mid i\in[n+1]\}\cup\{(z,\cstate)\}\,\text{,}\\
    \nu&=\textstyle\bigcup_{i\in[n]}\nu_{i}\cup\{(y_i,v_i)\mid i\in[n+1]\}\cup\{(z,v')\}
  \end{align*}
  is a run of $\cM_{\pth{\pi^\ast}\true}$ on $(M,v)$ with
  $\conc(\rho)=v'$ --- this follows by an easy inspection of the
  construction of $\cM_{\pth{\pi^\ast}\true}$.
  Figure~\ref{fig:run_star} shows a depiction of $\rho$ for the case
  $n=3$.
        
  It remains to show that $\rho$ is accepting. Let $\branch$ be a branch
  of $\rho$. If $\branch\in(C\setminus\{z\})^\infty$, then there exist a
  suffix $\branch'$ of $\branch$ and an index $i\in[n]$ such that $\branch'$
  is a branch of $\rho_i$. Note that we have $\mu(\branch')\in
  (S'\setminus\{\cstate'\})^\infty$. Since $\rho_i$ is accepting,
  $\branch'$ is accepting in $\cM_{\pth{\pi}\true}$. Since
  $\mu_i\subseteq\mu$ and $\rank'\restriction
  (S'\setminus\{\cstate'\})\subseteq\rank$, it follows that $\branch$ is
  accepting in $\cM_{\pth{\pi^\ast}\true}$. Otherwise (i.e., if
  $\branch\in C^\ast\{z\}$), we have
  $\mu(\branch)=S^\ast\{\cstate\}$. Since $\rank(\cstate)$ is even,
  $\branch$ is accepting in $\cM_{\pth{\pi^\ast}\true}$. Hence, $\rho$
  is an accepting run of $\cM_{\pth{\pi^\ast}\true}$ on the
  pointed MSC $(M,v)$ with $\nu(\conc(\rho))=v'$.
\end{proof}

\begin{figure}[tb]
  \begin{center}
     \begin{tikzpicture}[scale=0.8,font=\small,text height=1ex,text depth=.25ex]
       \draw (0,0) node[draw,circle,fill=white,inner sep=0.1ex,minimum size=2.65ex] (y1) {$y_1$};
       \draw (y1) node [above,inner sep=1.75ex] {$\iota$};
  
       \foreach \y/\x/\s/\v in {2/1/0.6/-0.5, 3/2/0.8/0.25, 4/3/0.7/-0.4} {
         \draw ([xshift=0.8cm] y\x) node[draw, shape=isosceles triangle, 
           shape border rotate=180, anchor=apex, inner xsep=\s cm] (t\x) 
           {$\rho_\x$};
         \draw (t\x.apex) node[draw,fill=white,circle,inner sep=0.1ex]  {$r_\x$};
         \draw ([yshift=\v cm] t\x.lower side) node[draw,circle,fill=white,inner sep=0.1ex] (x\x) {$x_\x$};
         \draw ([xshift=0.8cm] x\x) node[draw,circle,fill=white,inner sep=0.1ex] (y\y) {$y_\y$};
         \draw (x\x) node [above,fill=white,inner sep=0.4ex,outer sep=1.45ex]
         {$\cstate'$};
         
         \draw (t\x.apex) node [above,inner sep=1.45ex] {$\iota'$};                              
         \draw (y\y) node [above,inner sep=1.45ex] {$\iota$};
       }
       
       \draw ([xshift=0.8cm] y4) node[draw,circle,fill=white,inner sep=0.35ex] (r4) {$z$};
       \draw (r4) node [above,inner sep=1.5ex] {$\cstate$};
       
       \begin{pgfonlayer}{background}
               \draw (y1) -- (t1.apex);
               \draw (x1) -- (y2) -- (t2.apex);
               \draw (x2) -- (y3) -- (t3.apex);
               \draw (x3) -- (y4) -- (r4);
       \end{pgfonlayer}        
     \end{tikzpicture}
  \end{center}
  \caption{The run $\rho$ of $\cM_{\pth{\pi^\ast}\true}$ consisting of three runs of $\cM_{\pth{\pi}\true}$ ($n=3$).}
  \label{fig:run_star}
\end{figure}

\begin{lem}\label{lemma:correctnessStar2}
  Let $M$ be an MSC, $v,v'\in V^M$, and $\pi$ be a path expression. If
  Claim~\ref{claim:pathAutomata} holds for $\pth{\pi}\true$ and there exists an
  accepting run $\rho=(C,E,y_1,\mu,\nu)$ of $\cM_{\pth{\pi^\ast}\true}$ on
  $(M,v)$ with $\nu(\conc(\rho))=v'$, then $v'\in\reach_M(v,\pi^\ast)$.
\end{lem}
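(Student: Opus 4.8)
The plan is to mirror the proof of Lemma~\ref{lemma:correctnessConcatenation2}, but to account for the fact that an accepting run of $\cM_{\pth{\pi^\ast}\true}$ may traverse several copies of $\cM_{\pth{\pi}\true}$ in sequence. Write $\cM_{\pth{\pi^\ast}\true}=(S,\delta,\iota,\cstate,\rank)$ and $\cM_{\pth{\pi}\true}=(S',\delta',\iota',\cstate',\rank')$. First I would use Proposition~\ref{prop:exactlyOneConcatenationConfig} together with Lemma~\ref{lemma:mainStates} to isolate the unique branch $\branch$ of $\rho$ whose configurations are labelled by main states; it ends in the unique configuration $\conc(\rho)$ with $\mu(\conc(\rho))=\cstate$. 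Since $\mainstates(\cM_{\pth{\pi^\ast}\true})=\{\iota,\cstate\}\cup\mainstates(\cM_{\pth{\pi}\true})$, every configuration labelled $\cstate'$ lies on $\branch$. A key point is that $\branch$ must be \emph{finite}: in the construction the rank of $\cstate'$ is reset to $1$, and by Remark~\ref{remark:trivialFacts} every main state other than $\cstate$ has odd rank, so an infinite main branch would visit only states of rank $1$ infinitely often and hence be rejecting, contradicting that $\rho$ is accepting. Thus $\branch$ terminates in $\conc(\rho)$.

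Next I would read off the shape of $\branch$ from the transition function. The only transition out of a configuration labelled $\iota$ is $\{(\id,\iota')\}$ or $\{(\id,\cstate)\}$, the only one out of $\cstate'$ is $\{(\id,\iota)\}$, and each carries direction $\id$, so the event is never changed by these moves. Hence $\branch$ decomposes as $y_1\,r_1\cdots x_1\,y_2\,r_2\cdots x_2\cdots y_{n+1}\,\conc(\rho)$ where $y_1=r$, each $y_i$ is labelled $\iota$, each $r_i$ is labelled $\iota'$, each $x_i$ is labelled $\cstate'$, and $n\geq 0$ is the number of copies started. Setting $v_i=\nu(y_i)$, the $\id$-moves give $\nu(r_i)=v_i$, $\nu(x_i)=v_{i+1}$, $v_1=v$, and $\nu(\conc(\rho))=v_{n+1}=v'$.

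The main work is to show that, for each $i\in[n]$, the restriction $\rho_i$ of $\rho$ to the subtree rooted at $r_i$ with the child $y_{i+1}$ of $x_i$ removed is an \emph{accepting} run of $\cM_{\pth{\pi}\true}$ on $(M,v_i)$ with $\conc(\rho_i)=x_i$. Because $\cstate'$ is a main state, the only $\cstate'$-labelled configuration in this subtree is $x_i$; every other configuration of $\rho_i$ is labelled by a state of $S'\setminus\{\cstate'\}$, on which $\delta$ and $\delta'$ agree as do $\rank$ and $\rank'$, and the assumption $\delta'(\cstate',\sigma)=\bot$ makes $x_i$ a stuck leaf. This gives that $\rho_i$ is a run of $\cM_{\pth{\pi}\true}$. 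For acceptance I would treat the branch ending in $x_i$ separately (it is accepting since $\rank'(\cstate')=0$ by Remark~\ref{remark:trivialFacts}), while every other branch of $\rho_i$, after prepending the finite path from $r$ to $r_i$, becomes a branch of $\rho$; since such branches never revisit $\cstate'$ and the finite prefix does not affect the infinitely-often ranks, acceptance of $\rho$ under $\rank$ transfers to acceptance of the branch under $\rank'$. By Claim~\ref{claim:pathAutomata} applied to $\pth{\pi}\true$ I then obtain $v_{i+1}=\nu(x_i)\in\reach_M(v_i,\pi)$ for every $i\in[n]$. Chaining these memberships yields $v'=v_{n+1}\in\reach_M(v,\pi^n)\subseteq\reach_M(v,\pi^\ast)$ (and $v'=v$ when $n=0$), as required.

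I expect the main obstacle to be the bookkeeping around acceptance: the state $\cstate'$ has rank $0$ inside $\cM_{\pth{\pi}\true}$ but rank $1$ inside $\cM_{\pth{\pi^\ast}\true}$, so acceptance of each $\rho_i$ must be argued directly rather than simply inherited from $\rho$, and the very same rank reset must be used to prove that the main branch is finite, i.e.\ that the number $n$ of copies is well defined.
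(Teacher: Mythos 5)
Your proposal is correct and follows essentially the same route as the paper's proof: isolate the unique finite main branch, decompose the run into sub-runs $\rho_i$ of $\cM_{\pth{\pi}\true}$ rooted at the $\iota'$-labelled configurations (with the subtree at $y_{i+1}$ removed), argue acceptance of each $\rho_i$ directly because of the rank reset at $\cstate'$, and chain the memberships obtained from Claim~\ref{claim:pathAutomata}. Your explicit parity argument for finiteness of the main branch and your uniform treatment of the $n=0$ case (which the paper handles separately as $R=\emptyset$) are only presentational differences.
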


\begin{proof}
  Let $\cM_{\pth{\pi^\ast}\true}=(S,\delta,\iota,\cstate,\rank)$ and
  $\cM_{\pth{\pi}\true}=(S',\delta',\iota',\cstate',\rank')$. Let $R$ be the set
  of all configurations from $C$ labelled by $\iota'$. If $R=\emptyset$, then
  $\rho$ consists of exactly one branch $\branch=y_1z$ with $\mu(\branch)=\iota
  \cstate$ and $\nu(y_1)=\nu(z)$. This can be easily verified by inspecting the
  transition function $\delta$.  From $\nu(y_1)=v$ and
  $\nu(z)=\nu(\conc(\rho))=v'$, it follows that $v=v'$. Hence,
  $v'\in\reach_M(v,\pi^\ast)$.

  Now, let us assume that $R\neq\emptyset$. It follows from
  $\iota'\in\mainstates(\cM_{\pth{\pi^\ast}\true})$ (see the last
  paragraph of the proof of Lemma~\ref{lemma:mainStates}) and
  Lemma~\ref{lemma:mainStates} that all configurations from $R$ occur
  in a unique finite branch $\branch=z_1z_2\ldots z_\ell$ of
  $\rho$. Without loss of generality, we can assume that
  $R=\{r_1,r_2,\ldots,r_n\}$ such that there exist
  $i_1<i_2<\ldots<i_n$ with $z_{i_k}=r_k$ for all $k\in[n]$. By
  examining the transition function $\delta$, one can see that:
  \begin{iteMize}{$\bullet$}
    \item For every $i\in[n]$, there exists a
      $y_i\in C$ with $E_\rho(y_i)=\{r_i\}$ and $\mu(y_i)=\iota$.
    \item There exists a configuration $y_{n+1}\in C$ with
      $E_\rho(y_{n+1})=\{\conc(\rho)\}$ and $\mu(y_{n+1})=\iota$. 
    \item For every $i\in[n]$, there exists a configuration
      $x_i$ with $E_\rho(x_i)=\{y_{i+1}\}$ and $\mu(x_i)=\cstate'$.
  \end{iteMize}
  Let $C_i = (\{ x \in C \mid (r_i, x) \in E^\ast\} \setminus \{ x \in
  C \mid (y_{i+1}, x) \in E^\ast\})$ for all $i\in[n]$. It can be
  easily verified that the $S$-labelled tree
  $\rho_i=(C_i,E_i,r_i,\mu_i,\nu_i)=\rho\restriction C_i$ is a run of
  $\cM_{\pth{\pi}\true}$ on $(M,\nu(r_i))$ with $\conc(\rho_i)=x_i$.
  Figure~\ref{fig:run_star} shows a depiction of the runs
  $\rho_1,\rho_2,\ldots,\rho_n$ forming the run $\rho$ for the case
  $n=3$.
        
  We now show that $\rho_i$ is an accepting run for every $i\in[n]$. Let
  $i\in[n]$ and $\branch$ be a branch of $\rho_i$. If $\branch\in
  (C_i\setminus\{x_i\})$, then there exists a path $\branch'$ from $y_1$ to $y_i$
  in $\rho$ such that $\branch'\branch$ is a branch of $\rho$. Since
  $\rho$ is accepting, $\branch'\branch$ is accepting in
  $\cM_{\pth{\pi^\ast}\true}$. Because of $\mu_i\subseteq\mu$, $\mu(\branch)\in
  (S'\setminus\{\cstate'\})^\infty$ and $\rank'\restriction
  (S'\setminus\{\cstate'\})\subseteq\rank$, $\branch$ is accepting
  in $\cM_{\pth{\pi}\true}$. If $\branch\in C^\ast\{x_i\}$, then $\mu(\branch)\in
  S'^\ast\{\cstate'\}$. By Remark~\ref{remark:trivialFacts},
  $\rank'(\cstate')$ is even and, therefore, $\branch$ is accepting in
  $\cM_{\pth{\pi}\true}$. Hence $\rho_i$ is an accepting run of
  $\cM_{\pth{\pi}\true}$ on $(M,\nu(r_i))$ with $\conc(\rho_i)=x_i$.
        
  Since Claim~\ref{claim:pathAutomata} holds for $\pth{\pi}\true$, we
  can assume that $\nu(x_i)\in\reach_M(\nu(r_i),\pi)$. By checking the
  definition of the transition function $\delta$, one can easily
  verify that $\nu(x_i)=\nu(y_{i+1})$ and $\nu(y_i)=\nu(r_i)$ holds
  for every $i\in[n]$. Hence, we have
  $\nu(y_{i+1})\in\reach_M(\nu(y_i),\pi)$ for every $i\in[n]$. From
  $\nu(y_{n+1})=\nu(\conc(\rho))$ it follows that
  $v'\in\reach_M(v,\pi^\ast)$.
\end{proof}
The following lemma dealing with the correctness of the construction
of $\cM_{\pth{\pi}^\omega}$ finishes the preparatory work needed in
order to proof Theorem~\ref{theorem:localFormulasToAutomata}.
\begin{lem}\label{lemma:correctnessOmega}
  Let $M$ be an MSC, $v\in V^M$, and $\pi$ be a path expression. If
  Claim~\ref{claim:pathAutomata} holds for $\pth{\pi}\true$, then
  \[M,v\models\pth{\pi}^\omega\iff (M,v)\in \lang(\cM_{\pth{\pi}^\omega})\]
\end{lem}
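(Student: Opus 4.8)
The plan is to reduce both directions to Claim~\ref{claim:pathAutomata} for $\pth{\pi}\true$ by viewing an accepting run of $\cM_{\pth{\pi}^\omega}$ as an infinite chain of accepting runs of $\cM_{\pth{\pi}\true}$ glued at their concatenation configurations, in close analogy with the star case (Lemmas~\ref{lemma:correctnessStar1} and~\ref{lemma:correctnessStar2}). Write $\cM_{\pth{\pi}\true}=(S,\delta',\iota,\cstate,\rank)$ and $\cM_{\pth{\pi}^\omega}=(S,\delta,\iota,\cstate,\rank)$. The first step is to observe that the main-state machinery of Section~\ref{subsec:concatenationStates} survives the passage to $\cM_{\pth{\pi}^\omega}$, even though $\pth{\pi}^\omega$ is not a path formula. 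Since $\delta$ and $\delta'$ agree on $S\setminus\{\cstate\}$ and $\delta(\cstate,\sigma)=(\id,\iota)$ has the unique minimal model $\{(\id,\iota)\}$ in which $\iota$ is a main state, one checks that $\mainstates(\cM_{\pth{\pi}^\omega})=\mainstates(\cM_{\pth{\pi}\true})$ and that the conclusion of Lemma~\ref{lemma:mainStates} holds verbatim for $\cM_{\pth{\pi}^\omega}$; the rank statement (2) of Remark~\ref{remark:trivialFacts} also carries over because the ranking function is inherited unchanged (fact (1) does not, as now $\delta(\cstate,\sigma)=(\id,\iota)$). Exactly as in the proof of Proposition~\ref{prop:exactlyOneConcatenationConfig}, it then follows that in every accepting run $\rho=(C,E,r,\mu,\nu)$ of $\cM_{\pth{\pi}^\omega}$ the configurations carrying a main state form a single maximal branch $\branch$, and that every configuration labelled $\cstate$ lies on $\branch$.

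For the direction ``$\Leftarrow$'', let $\rho$ be an accepting run with main branch $\branch$. I first argue that $\branch$ is infinite and visits $\cstate$ infinitely often: a finite accepting branch must end in an even rank, hence (by the rank structure of main states) in $\cstate$, but $\cstate$ is never stuck in $\cM_{\pth{\pi}^\omega}$ since the move $(\id,\iota)$ is always available, so $\branch$ cannot be finite; and if $\cstate$ occurred only finitely often on $\branch$, the ranks seen infinitely often would all equal $1$, contradicting acceptance. Enumerate the $\cstate$-configurations on $\branch$ as $x_1,x_2,\dots$, set $v_0=\nu(r)=v$ and $v_i=\nu(x_i)$, and let $y_i$ be the unique $\iota$-labelled child reached from $x_i$ via the $(\id,\iota)$-edge (with $y_0=r$). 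Cutting $\rho$ into rounds by $C_i=\{x\mid (y_{i-1},x)\in E^\ast\}\setminus\{x\mid (y_i,x)\in E^\ast\}$, exactly as in the proof of Lemma~\ref{lemma:correctnessStar2}, yields $S$-labelled trees $\rho_i=\rho\restriction C_i$ which are runs of $\cM_{\pth{\pi}\true}$ on $(M,v_{i-1})$ with $\conc(\rho_i)=x_i$; here $x_i$ becomes a stuck leaf because $\delta'(\cstate,\sigma)=\bot$. Each $\rho_i$ is accepting: its branch ending in $x_i$ ends in $\cstate$ (even rank), while every side branch avoids $\cstate$ and therefore coincides with a corresponding branch of $\rho$, since $\delta$ agrees with $\delta'$ off $\cstate$ and both automata share $\rank$. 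Claim~\ref{claim:pathAutomata} now gives $v_i\in\reach_M(v_{i-1},\pi)$ for all $i\geq1$, so $(v_i)_{i\geq0}$ witnesses $M,v\models\pth{\pi}^\omega$.

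For the direction ``$\Rightarrow$'', suppose $M,v\models\pth{\pi}^\omega$, witnessed by an infinite sequence $v_0=v,v_1,v_2,\dots$ with $v_{i+1}\in\reach_M(v_i,\pi)$. By Claim~\ref{claim:pathAutomata} there are accepting runs $\rho_i$ of $\cM_{\pth{\pi}\true}$ on $(M,v_i)$ with $\nu_i(\conc(\rho_i))=v_{i+1}$, and by Proposition~\ref{prop:exactlyOneConcatenationConfig} each $\rho_i$ has a single $\cstate$-configuration. After renaming configurations to be disjoint, I glue these runs in the style of Lemma~\ref{lemma:correctnessStar1}: add an edge from $\conc(\rho_i)$ to the root $r_{i+1}$ of $\rho_{i+1}$, which is legal because both lie at $v_{i+1}$ so $\eta_M$ returns $\id$, and because $\{(\id,\iota)\}$ is the unique transition of $\cM_{\pth{\pi}^\omega}$ at $\cstate$. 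The result $\rho$ is a run of $\cM_{\pth{\pi}^\omega}$ on $(M,v)$: every non-$\cstate$ configuration uses a transition shared by $\delta$ and $\delta'$, and each former leaf $\conc(\rho_i)$ now performs its forced $\id$-move. For acceptance, each side branch stays inside a single $\rho_i$ and remains accepting (transitions and ranks coincide off $\cstate$), while the unique infinite main branch threads through the main branches of all the $\rho_i$ and thus meets $\cstate$ (rank $0$) infinitely often, so the least rank seen infinitely often is $0$; hence every branch of $\rho$ is accepting.

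The main obstacle is the parity analysis of the single main branch in the ``$\Leftarrow$'' direction: one must first transfer the main-state apparatus (Lemma~\ref{lemma:mainStates}, Proposition~\ref{prop:exactlyOneConcatenationConfig}) to $\cM_{\pth{\pi}^\omega}$, which is not a path-formula automaton, and then match the parity acceptance condition precisely with the ``infinitely many events'' semantics of the repeat operator---namely, that acceptance forces $\cstate$, and only $\cstate$, to recur infinitely often, thereby producing exactly the infinitely many rounds needed to extract the witnessing sequence.
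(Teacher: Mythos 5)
Your proof is correct and follows essentially the same route as the paper: your forward direction is exactly the paper's gluing construction (disjoint accepting runs $\rho_i$ of $\cM_{\pth{\pi}\true}$ chained by edges from $\conc(\rho_i)$ to $r_{i+1}$, with the same case split between branches that eventually stay inside one $\rho_i$ and the unique main branch hitting $\cstate$ with rank $0$ infinitely often), and your backward direction is the round-decomposition in the style of Lemma~\ref{lemma:correctnessStar2} that the paper compresses into the single sentence ``the converse can be shown analogously.'' Your explicit transfer of the main-state machinery (Lemma~\ref{lemma:mainStates}, Remark~\ref{remark:trivialFacts}, Proposition~\ref{prop:exactlyOneConcatenationConfig}) to $\cM_{\pth{\pi}^\omega}$, which is not literally a path-formula automaton, is a faithful and welcome filling-in of a detail the paper leaves implicit.
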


\begin{proof}
  Let us assume that $M,v\models\pth{\pi}^\omega$. There exist
  $v_1,v_2,v_3,\ldots\in V^M$ such that $v_1=v$ and
  $v_{i+1}\in\mathsf{reach}(v_i,\pi)$ for all $i\geq1$. Since
  Claim~\ref{claim:pathAutomata} holds for $\pth{\pi}\true$, there
  exists an accepting run $\rho_i=(C_i,E_i,r_i,\mu_i,\nu_i)$ of
  $\cM_{\pth{\pi}\true}$ on $(M,v_i)$ with
  $\nu(\conc(\rho_i))=v_{i+1}$ for every $i\geq1$. The $S$-labelled
  tree $\rho=(C,E,r_1,\mu,\nu)$ with $C=\biguplus_{i\geq1}C_i$,
  $E=\bigcup_{i\geq}E_i\cup\{(\conc(\rho_i),r_{i+1})\mid i\geq1\}$,
  $\mu=\bigcup_{i\geq1}\mu_i$, and $\nu=\bigcup_{i\geq1}\nu_i$ is a
  run of $\cM_{\pth{\pi}^\omega}$ on $(M,v)$. Let $\branch$ be a
  branch of $\rho$. If there exists an $i>1$ such that
  $\branch\in(C\setminus\{r_i\})^\infty$, then there exists a suffix
  $\branch'$ of $\branch$ such that $\branch'$ is an accepting branch
  of $\rho_j$ for some $j$ with $1\leq j<i$. Hence, $\branch$ is
  accepting in $\rho$. Otherwise (i.e., $\branch$ is a branch going
  through $r_i$ for every $i\geq1$), it follows from
  Remark~\ref{remark:trivialFacts} that we have $\min\{\rank(s)\mid
  s\in\mathsf{inf}(\branch)\}=\rank(\cstate)=0$. Hence, $\branch$ is
  accepting and, therefore, $\rho$ is accepting.
  The converse can be shown analogously.
\end{proof}

\bigskip

\noindent We are now able to prove our main theorem of this section.

\begin{proof}[Proof of
  Theorem~\ref{theorem:localFormulasToAutomata}]\label{proof:localFormulasToAutomata}
  By an easy analysis of our construction, one can see that, for all
  local formulas $\alpha$, the automaton $\cM_\alpha$ can be
  constructed in polynomial time and that its size is linear in the
  size of $\alpha$.

  Now, we inductively show that $\lang(\alpha)=\lang(\cM_\alpha)$ for
  every local formula $\alpha$.  Let us first consider the base
  cases. If $\alpha=\sigma$ with $\sigma\in\Sigma$, then it is easily
  checked that $\lang(\alpha)=\lang(\cM_\alpha)$. By simple
  inspection, it also follows that Claim~\ref{claim:pathAutomata}
  holds for $\alpha=\pth{D}\true$ with $D\in\mathbb{M}$.
  Regarding the induction step, we need to distinguish the following
  cases: If $\alpha=\neg\beta$, the claim follows from
  Theorem~\ref{theorem:closedUnderComplement}. By
  Lemma~\ref{lemma:correctnessOmega}, we have
  $\lang(\alpha)=\lang(\cM_\alpha)$ for $\alpha=\pth{\pi}^\omega$. 
  Claim~\ref{claim:pathAutomata} holds for
  $\alpha=\pth{\pi_1;\pi_2}\true$ and $\alpha=\pth{\pi^\ast}\true$
  because of the lemmas~\ref{lemma:correctnessConcatenation1},
  \ref{lemma:correctnessConcatenation2}, \ref{lemma:correctnessStar1},
  and \ref{lemma:correctnessStar2}. Analogously, it can be shown that
  Claim~\ref{claim:pathAutomata} is also true for the cases
  $\alpha=\pth{\{\beta\}}\true$ and
  $\alpha=\pth{\pi_1+\pi_2}\true$. Note that we have
  $M,v\models\alpha$ if and only if $\reach_M(v,\alpha)\neq\emptyset$
  for all pointed MSCs $(M,v)$. Hence,
  $\lang(\alpha)=\lang(\cM_\alpha)$ holds for the above path formulas.
\end{proof}

\section{Translation of Global CRPDL Formulas}
\label{sec:translationOfGlobalFormulas}

In this section, we demonstrate that, for every global
CRPDL formula~$\varphi$ of the form $\mathsf{E}\alpha$ or
$\mathsf{A}\alpha$, one can compute a global MSCA $\cG_\varphi$ in
polynomial time which exactly accepts the set of models of $\varphi$.
Let $\varphi$ be a global formula of the above form.

\begin{figure}
  \begin{center}\small
    \pgfdeclarelayer{background}
    \pgfdeclarelayer{foreground}
    \pgfsetlayers{background,main,foreground}
    \begin{tikzpicture}[>=stealth',semithick,auto,shorten >=1pt,shorten
      <=1pt,initial text=,initial distance=0.7cm,
      every state/.style={fill=white,draw=black,text=black,
      minimum size=1cm,inner sep=1pt}]

      \begin{scope}[every node/.style={cloud,cloud puffs=14,draw,
        cloud ignores aspect,fill=white,minimum width=2.75cm,
        minimum height=2cm}]

        \node[anchor=west] (one) at (3,0) {$\cM_{\alpha}$};
      \end{scope}
      
      \begin{scope}
        \node[state,initial] (i) at (0,0) {$\iota\mid 1$};
        \node[state] (iprime) at (one.west) {$\iota'\mid 1$};
        \node[state] (f) at (8,0) {$f\mid 0$};
      \end{scope}
      
      \path[->] (i) edge[loop above] node {$\Sigma,\proc$} (i);
      \path[->] (i) edge (iprime);
        
      \node[anchor=north west] at ([xshift=0.4cm,yshift=-0.1cm] i)
      {$\Sigma$};
      \node[anchor=north east] at ([xshift=-0.4cm,yshift=-0.1cm]
      iprime) {$\id$};
        
    \end{tikzpicture}
  \end{center}
  \caption{Illustration of the local MSCA $\cM_{\mathsf{E}\alpha}$.}
  \label{fig:mscaExists}
\end{figure}

\subsubsection*{Case $\varphi=\mathsf{E}\alpha$}

If $\cM_{\alpha}=(S',\delta',\iota',\cstate,\rank')$, then we set
$\cM_{\mathsf{E}\alpha}=(S,\delta,\iota,\cstate,\rank)$ where
$S=S'\uplus\{\iota,f\}$, $\rank(s)=\rank'(s)$ for all $s\in S'$,
$\rank(\iota)=1$, $\rank(f)=0$, and, for all $s\in S$ and
$\sigma\in\Sigma$,
\[\delta(s,\sigma)=\begin{cases}
  (\proc,\iota)\lor(\id,\iota')&\text{if $s=\iota$}\\
  \bot&\text{if $s=f$}\\
  \delta'(s,\sigma)&\text{otherwise}
\end{cases}\] 
Intuitively, the automaton $\cM_{\mathsf{E}\alpha}$ (depicted in
Fig.~\ref{fig:mscaExists}) moves forward on a process finitely many
times. At some event $v$, it nondeterministically decides to start the
automaton~$\cM_\alpha$ to check whether $(M,v)\models\alpha$ holds.


Now, $\cG_{\mathsf{E}\alpha}=(\cM,I)$ is meant to work as follows: it
nondeterministically chooses a process on which it executes a copy of
$\cM_{\mathsf{E}\alpha}$ in state $\iota$. On all the other processes
it accepts immediately by starting $\cM_{\mathsf{E}\alpha}$ in the
sink state $f$ with rank $0$. More formally, we let
$\cG_{\mathsf{E}\alpha}=(\cM_{\mathsf{E}\alpha},I)$ where
\[
I=\{(s_1,s_2,\ldots,s_{|\mathbb{P}|})\mid\text{there exists $p\in\mathbb{P}$ such
  that $s_p=\iota$ and $s_q=f$ for all $p\neq q$}\}\,.
\]

\begin{figure}
  \begin{center}\small
    \pgfdeclarelayer{background}
    \pgfdeclarelayer{foreground}
    \pgfsetlayers{background,main,foreground}
    \begin{tikzpicture}[>=stealth',semithick,auto,shorten >=1pt,shorten
      <=1pt,initial text=,initial distance=0.7cm,
      every state/.style={fill=white,draw=black,text=black,
      minimum size=1cm,inner sep=1pt}]

      \begin{scope}[every node/.style={cloud,cloud puffs=14,draw,
        cloud ignores aspect,fill=white,minimum width=2.75cm,
        minimum height=1.9cm}]

        \node[anchor=west] (one) at (2,1.75) {$\cM_{\alpha}$};
      \end{scope}
      
      \begin{scope}
        \node[state,initial] (i) at (0,0) {$\iota_1\mid 1$};
        \node[state] (c) at (6,0) {$\iota_2\mid 0$};
        \node[state] (iprime) at (one.west) {$\iota'\mid 1$};
      \end{scope}
      
      \node[inner sep=0pt] (fork) at (0.75,0.2) {};
      
      \path[shorten >=-1pt, shorten <=0pt] (i) edge (fork);
      \path[shorten <=-1pt,->] (fork) edge[bend left=15] (c)
        edge[bend right=20] (iprime);
      \path[->] (c) edge[bend left=15] (i);
        
      \node[anchor=south west] at ([xshift=0.4cm,yshift=0.2cm] i)
      {$\Sigma$};
      \node[anchor=north west] at ([xshift=0.4cm,yshift=-0.3cm] i) {$\proc$};
      \node[anchor=south east] at ([xshift=-0.4cm,yshift=0.2cm] c) {$\id$};
      \node[anchor=north east] at ([xshift=-0.4cm,yshift=-0.1cm]
      iprime) {$\id$};
      \node[anchor=north east] at ([xshift=-0.4cm,yshift=-0.2cm] c) {$\Sigma$};
        
    \end{tikzpicture}
  \end{center}
  \caption{Illustration of the local MSCA $\cM_{\mathsf{A}\alpha}$.}
  \label{fig:mscaForAll}
\end{figure}

\subsubsection*{Case $\varphi=\mathsf{A}\alpha$}
    
If $\cM_\alpha=(S',\delta',\iota',\cstate,\rank')$, we set
$\cM_{\mathsf{A}\alpha}=(S,\delta,\iota_1,\cstate,\rank)$ where
$S=S'\uplus\{\iota_1,\iota_2\}$, $\rank(s)=\rank'(s)$ for all $s\in
S$, $\rank(\iota_1)=1$, $\rank(\iota_2)=0$, and
\[\delta(s,\sigma)=\begin{cases}
  (\id,\iota_2)\land(\id,\iota')&\text{if $s=\iota_1$}\\
  (\proc,\iota_1)&\text{if $s=\iota_2$}\\
  \delta(s,\sigma)&\text{otherwise}
\end{cases}\]
Informally speaking, the automaton $\cM_{\mathsf{A}\alpha}$ (depicted
in Fig.~\ref{fig:mscaForAll}) moves forward on a certain process $p$
and checks, for every event $v\in V_p^M$ of this process, if
$(M,v)\models\alpha$ holds. Note that, if $\cM_{\mathsf{A}\alpha}$ is
in state $\iota_2$ at an event $v$ such that there exists a successor
$v'$ of $v$ on the same process, then $\cM_{\mathsf{A}\alpha}$ is
forced to move to $v'$ and to change into the state $\iota_1$. That is
due to the fact that runs of local MSCAs are maximal by definition
(see Definition~\ref{def:run}) and because we have
$\{(\proc,\iota_1)\}\in\minmodel{\delta(\iota_2,\sigma)}$ for every
$\sigma\in\Sigma$.


We define $\cG_{\mathsf{A}\alpha}=(\cM_{\mathsf{A}\alpha},I)$ where
$I=\{(\iota_1,\iota_1,\ldots,\iota_1)\}$. That means
$\cG_{\mathsf{A}\alpha}$ ensures $(M,v)\models\alpha$ for every $v\in
M$ by starting $\cM_{\mathsf{A}\alpha}$ in the state $\iota_1$ on
every process.

\bigskip

\noindent Using Theorem~\ref{theorem:localFormulasToAutomata} and by
simple inspection of the above construction, the following theorem can
be shown.
\begin{thm}\label{theorem:globalFormulasToAutomata}
  From a global formula $\varphi$ of the form
  $\varphi=\mathsf{E}\alpha$ or $\varphi=\mathsf{A}\alpha$, one can
  construct in time $\poly{|\varphi|}$ a global MSCA $\cG_\varphi$ such
  that, for all MSCs $M$, we have $M\models\varphi$ if and only if
  $M\in\lang(\cG_\varphi)$. The size of $\cG_\varphi$ is linear in
  the size of $\varphi$.
\end{thm}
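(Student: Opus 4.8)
The plan is to prove Theorem~\ref{theorem:globalFormulasToAutomata} by combining Theorem~\ref{theorem:localFormulasToAutomata} with a direct verification that the two global MSCA constructions for $\mathsf{E}\alpha$ and $\mathsf{A}\alpha$ behave as claimed. Since the theorem asserts a polynomial-time bound, a linear size bound, and semantic correctness, I would treat these three aspects separately, handling the two cases $\varphi=\mathsf{E}\alpha$ and $\varphi=\mathsf{A}\alpha$ in parallel for each aspect.

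First I would dispatch the complexity and size claims. By Theorem~\ref{theorem:localFormulasToAutomata}, the local MSCA $\cM_\alpha$ is computable in time $\poly{|\alpha|}$ and has size linear in $|\alpha|$. Each of the two global constructions merely adjoins a constant number of fresh states ($\iota,f$ in the $\mathsf{E}\alpha$ case; $\iota_1,\iota_2$ in the $\mathsf{A}\alpha$ case), extends the transition function by a constant number of entries, and augments the ranking function by constants. Hence $\cM_{\mathsf{E}\alpha}$ and $\cM_{\mathsf{A}\alpha}$ are still computable in polynomial time and their sizes remain linear in $|\alpha|$, and therefore in $|\varphi|$. The set $I$ of global initial states is specified by a simple predicate and its description is polynomially bounded, so the whole global MSCA $\cG_\varphi$ inherits the polynomial-time and linear-size bounds.

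The semantic correctness is the substantive part, and I would argue it via the definition of $\lang(\cG)$ as $\bigcup_{(s_1,\dots,s_{|\mathbb{P}|})\in I}\bigcap_{p\in\mathbb{P}}\lang_p(S,\delta,s_p,\rank)$. For $\varphi=\mathsf{E}\alpha$, I would show that, for a fixed process $p$ and the minimal event $v_p$ on $p$, the local MSCA $\cM_{\mathsf{E}\alpha}$ started in $\iota$ accepts $(M,v_p)$ if and only if there exists an event $w$ on process $p$ with $M,w\models\alpha$: the automaton moves forward along $\proc$ a finite number of times (this branch is accepting, as the loop state $\iota$ can be left) and at some event guesses to invoke $\cM_\alpha$ via the $(\id,\iota')$ disjunct, whose acceptance is equivalent to $M,w\models\alpha$ by Theorem~\ref{theorem:localFormulasToAutomata}. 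Started in the sink state $f$ (rank $0$, transition $\bot$), the automaton accepts $(M,v_q)$ trivially for every $M$. Since $I$ ranges over all tuples placing $\iota$ on exactly one process $p$ and $f$ elsewhere, the intersection over processes collapses to the single nontrivial conjunct on $p$, and the union over $p$ yields exactly those $M$ having \emph{some} event satisfying $\alpha$, i.e.\ $M\models\mathsf{E}\alpha$. For $\varphi=\mathsf{A}\alpha$, I would show that $\cM_{\mathsf{A}\alpha}$ started in $\iota_1$ at the minimal event $v_p$ accepts $(M,v_p)$ iff $M,w\models\alpha$ for \emph{every} event $w$ on process $p$: the universal branching $(\id,\iota_2)\land(\id,\iota')$ forks at each event into a test branch invoking $\cM_\alpha$ and a continuation branch that is forced (by maximality of runs, as noted after the construction) to step forward via $(\proc,\iota_1)$, thereby traversing all events of $p$; the spine branch through $\iota_1,\iota_2$ has minimum infinitely-recurring rank $0$ and so is accepting on infinite processes, while on finite processes it terminates in an even-rank state. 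Since $I=\{(\iota_1,\dots,\iota_1)\}$, the intersection over all processes forces $M,w\models\alpha$ for every event $w$ of $M$, which is exactly $M\models\mathsf{A}\alpha$.

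The main obstacle I anticipate is verifying the acceptance condition for the \emph{infinite}-process case in the $\mathsf{A}\alpha$ construction, where one must confirm that the single spine branch labelled $(\iota_1\iota_2)^\omega$ is accepting under the parity condition while simultaneously guaranteeing that every forked test branch into $\cM_\alpha$ is accepting; this requires carefully tracking that $\rank(\iota_2)=0$ is the minimal rank recurring infinitely often and that the maximality requirement on runs genuinely forces the continuation at each event rather than allowing the run to stall. Once this case-analysis of branch acceptance is in place, assembling the runs across processes and matching the intersection/union structure of $\lang(\cG_\varphi)$ to the semantics of $\mathsf{E}$ and $\mathsf{A}$ is routine, and the theorem follows.
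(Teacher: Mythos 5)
Your proposal is correct and follows essentially the same route as the paper: the paper's proof of Theorem~\ref{theorem:globalFormulasToAutomata} consists precisely of invoking Theorem~\ref{theorem:localFormulasToAutomata} together with ``simple inspection'' of the constructions of $\cG_{\mathsf{E}\alpha}$ and $\cG_{\mathsf{A}\alpha}$, and your case analysis (guess-and-test via $\iota,f$ for $\mathsf{E}\alpha$; universal spine through $\iota_1,\iota_2$ with forked tests for $\mathsf{A}\alpha$, matched against the union/intersection structure of $\lang(\cG)$) is exactly that inspection carried out in detail. Your treatment of the parity condition on the infinite spine and of the stuck-state acceptance in $f$ and $\iota_2$ is accurate, so the argument goes through.
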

If $\varphi$ is an arbitrary global formula, then we can also
construct an equivalent global MSCA $\cG_\varphi=(\cM,I)$. However,
this time the space needed for our construction is exponential in the
number of ``global'' conjunctions occurring in $\varphi$. In fact, the
size of $\cM$ is still linear in $\varphi$ but $|I|$ is exponential in
the number of conjunctive connectives occurring outside of subformulas
of the form $\mathsf{E}\alpha$ and $\mathsf{A}\alpha$, respectively.

When constructing a global MSCA from an arbitrary global formula, we
need to distinguish the following two additional cases:

\subsubsection*{Case $\varphi=\varphi_1\lor\varphi_2$}
    
Let $\cG_{\varphi_i}=(\cM_i,I_i)$ and $\cM_i=(S_i, \delta_i, \iota_i,
\rank_i)$ for all $i\in[2]$. Then we define
$\cG_{\varphi_1\lor\varphi_2}=(\cM,I)$ where $\cM=(S, \delta, \iota_1,
\rank)$, $S=S_1\uplus S_2$, $\delta=\delta_1\cup\delta_2$,
$\rank=\rank_1\cup\rank_2$, and $I=I_1\cup I_2$.

\subsubsection*{Case $\varphi=\varphi_1\land\varphi_2$}
  
Let $\cG_{\varphi_i}=(\cM_i,I_i)$ and $\cM_i=(S_i, \delta_i, \iota_i,
\cstate_i, \rank_i)$ for all $i\in[2]$.
We define $\cG_{\varphi_1\land\varphi_2}=(\cM,I)$ where
\[I=\big\{ \big( (s_1, s_1'), (s_2, s_2'), \ldots, (s_{|\mathbb{P}|},
s_{|\mathbb{P}|}') \big) \mid (s_1, s_2, \ldots, s_{|\mathbb{P}|})\in I_1, (s_1',
s_2', \ldots, s_{|\mathbb{P}|}') \in I_2 \big\} \, ,\]
$\cM=(S_1\uplus S_2\uplus S, \delta, \iota_1, \cstate_1, \rank)$,
$S=\{(s_1,s_2) \mid s_1\in S_1, s_2\in S_2\}$, $\rank = \rank_1 \cup
\rank_2 \cup \{(s,1) \mid s \in S\}$, and, for all $s\in S_1\cup
S_2\cup S$, $s_1 \in S_1$, $s_2 \in S_2$, and $\sigma\in\Sigma$:
\[
\delta(s,\sigma)=\begin{cases}
  (\id,s_1)\land(\id,s_2)&\text{if $s=(s_1, s_2) \in S$}\\
  \delta_1(s,\sigma)&\text{if $s\in S_1$}\\
  \delta_2(s,\sigma)&\text{if $s\in S_2$}
  \end{cases}
\]
Together with Theorem~\ref{theorem:globalFormulasToAutomata}, we obtain:
\begin{cor}
  From a global formula $\varphi$, one can construct in time
  $2^{\poly{|\varphi|}}$ a global MSCA~$\cG_\varphi$ such that, for
  all MSCs $M$, we have $M\models\varphi$ if and only if
  $M\in\lang(\cG_\varphi)$. The size of $\cG_\varphi$ is exponential
  in the size of $\varphi$.
\end{cor}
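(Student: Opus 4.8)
The plan is to prove the corollary by structural induction on the global formula $\varphi$, taking the constructions for $\mathsf{E}\alpha$ and $\mathsf{A}\alpha$ as base cases and the two constructions for $\varphi_1\lor\varphi_2$ and $\varphi_1\land\varphi_2$ given just above as the induction step. For the base cases, Theorem~\ref{theorem:globalFormulasToAutomata} already supplies a global MSCA $\cG_\varphi$ with $\lang(\cG_\varphi)=\lang(\varphi)$ that is of size linear in $|\varphi|$ and computable in time $\poly{|\varphi|}$, so there is nothing more to do there.

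For the disjunctive case $\varphi=\varphi_1\lor\varphi_2$, the key observation is that the state sets $S_1$ and $S_2$ are disjoint and $\delta$ agrees with $\delta_i$ on $S_i$; since $\delta_i$ maps $S_i$ into $\cB^+(\mathbb{M}\times S_i)$, every run of $\cM$ starting in a state of $S_i$ stays inside $S_i$ forever. Hence $\lang_p(S,\delta,s,\rank)=\lang_p(S_i,\delta_i,s,\rank_i)$ for every $s\in S_i$ and every $p\in\mathbb{P}$. Because $I=I_1\cup I_2$ and each global initial state of $I_i$ is a tuple over $S_i$, unfolding the definition of $\lang(\cG)$ immediately yields $\lang(\cG_{\varphi_1\lor\varphi_2})=\lang(\cG_{\varphi_1})\cup\lang(\cG_{\varphi_2})$, which by the induction hypothesis equals $\lang(\varphi_1)\cup\lang(\varphi_2)=\lang(\varphi)$.

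For the conjunctive case $\varphi=\varphi_1\land\varphi_2$, I would analyse the product states in $S=\{(s_1,s_2)\mid s_1\in S_1,\ s_2\in S_2\}$. A product state is never a \emph{target} of any transition of $\delta$, so it can occur only at the root of a process-run, namely as a component of a global initial state. Started at an event $v$ in $(s_1,s_2)$, the automaton is forced (runs being maximal) to take the unique transition $(\id,s_1)\land(\id,s_2)$, which leaves it at $v$ and splits the run into an $S_1$-subtree behaving exactly like $\cM_1$ from $s_1$ and an $S_2$-subtree behaving exactly like $\cM_2$ from $s_2$. Since a run accepts iff all its branches accept, and since the single rank-$1$ product configuration at the root occurs only once (hence affects neither the final-state condition of a finite branch nor the $\inf$-condition of an infinite branch), such a run accepts iff both subruns accept; that is, $\bigcap_{p\in\mathbb{P}}\lang_p(S,\delta,(s_p,s_p'),\rank)$ equals the intersection of $\bigcap_{p}\lang_p(S_1,\delta_1,s_p,\rank_1)$ with $\bigcap_{p}\lang_p(S_2,\delta_2,s_p',\rank_2)$. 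Writing $A_a=\bigcap_p\lang_p(S_1,\delta_1,s_p,\rank_1)$ for $a=(s_p)_p\in I_1$ and $B_b=\bigcap_p\lang_p(S_2,\delta_2,s_p',\rank_2)$ for $b=(s_p')_p\in I_2$, the definition of $I$ together with the distributive law
\[(\textstyle\bigcup_{a\in I_1}A_a)\cap(\textstyle\bigcup_{b\in I_2}B_b)=\textstyle\bigcup_{a\in I_1,\,b\in I_2}(A_a\cap B_b)\]
gives $\lang(\cG_{\varphi_1\land\varphi_2})=\lang(\cG_{\varphi_1})\cap\lang(\cG_{\varphi_2})=\lang(\varphi)$ by the induction hypothesis.

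Finally, I would track sizes along the induction to obtain the complexity bounds. The disjunctive step only juxtaposes the two inputs, whereas the conjunctive step introduces up to $|S_1|\cdot|S_2|$ product states and a set $I$ of size $|I_1|\cdot|I_2|$; over the at most $|\varphi|$ global connectives these products keep $|\cM|$ and $|I|$ bounded by $2^{\poly{|\varphi|}}$, and since every single construction step runs in time polynomial in the sizes of its inputs, the whole automaton is produced in time $2^{\poly{|\varphi|}}$. I expect the conjunctive case to be the main obstacle: one must check carefully that disjointness of the state sets prevents any interference between the two simulated automata, and that the rank-$1$ product state sitting at the root does not corrupt the parity acceptance condition, so that acceptance genuinely factors into the intersection required above.
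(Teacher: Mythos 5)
Your proof is correct and takes essentially the same route as the paper: the corollary there is obtained directly from Theorem~\ref{theorem:globalFormulasToAutomata} together with the $\lor$- and $\land$-constructions stated immediately before it, with the correctness verification left implicit, and your induction simply supplies those details (runs starting in a product state splitting into independent $\cM_1$- and $\cM_2$-runs, the distributive law, and the size bookkeeping). The only divergence is cosmetic: the paper asserts that $\cM$ stays linear and only $|I|$ grows exponentially, while your more cautious accounting lets both grow up to $2^{\poly{|\varphi|}}$ --- the construction as written does add $S_1\times S_2$ product states, so your bound is the safe one, and the stated exponential bound holds either way.
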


\section{The Satisfiability Problem}\label{sec:satisfiability}
We strive for an algorithm that decides, given a global formula
$\varphi$, whether $\lang(\varphi)\neq\emptyset$ holds. Unfortunately, the
satisfiability problem of CRPDL is undecidable. This follows from
results concerning Lamport diagrams which can be easily transferred to
MSCs \cite{DBLP:journals/cl/MeenakshiR04}. However, if one only
considers existentially $B$-bounded MSCs
\cite{DBLP:conf/forte/Peled00,DBLP:conf/fsttcs/MadhusudanM01,DBLP:journals/jcss/GenestMSZ06,DBLP:journals/iandc/GenestKM06},
then the problem becomes decidable. Intuitively, an MSC $M$ is
\emph{existentially $B$-bounded} if its events can be scheduled in
such a way that at every moment no communication channel contains more
than $B$ pending messages (see definition below). The rest of this
section prepares the proof of our main theorem which is stated
in the following. The proof itself can be found on
page~\pageref{page:proofMainTheorem}.
\begin{thm}\label{theorem:satisfiability}
  The following problem is PSPACE-complete:
  \begin{quote}
    \noindent Input: $B\in\N$ (given in unary) and a global CRPDL
    formula $\varphi$
  
    \noindent Question: Is there an existentially $B$-bounded MSC
    satisfying $\varphi$?
  \end{quote}
\end{thm}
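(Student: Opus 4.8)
The plan is to prove the two directions of PSPACE-completeness separately. For membership in PSPACE I would follow the automata-theoretic route sketched in the introduction: turn $\varphi$ into a global MSCA, fix the channel bound $B$, and reduce the existence of an existentially $B$-bounded model to non-emptiness of a word automaton reading $B$-bounded linearizations. For PSPACE-hardness I would give a polynomial-time reduction from a standard PSPACE-complete problem.

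For the upper bound, first apply Theorem~\ref{theorem:globalFormulasToAutomata} together with the subsequent corollary for the Boolean cases to obtain a global MSCA $\cG_\varphi=(\cM,I)$ with $\lang(\cG_\varphi)=\lang(\varphi)$. Here $\cM$ and $I$ may be of exponential size, but every state of $\cM$ is a tuple indexed by the top-level conjunctions of $\varphi$ whose entries are states of the polynomially sized automata for the subformulas $\mathsf{E}\alpha$ and $\mathsf{A}\alpha$; hence each state, and likewise each global initial state $(s_1,\ldots,s_{|\mathbb{P}|})\in S^{|\mathbb{P}|}$, is representable in polynomial space, and $\delta$ is computable on the fly in polynomial time. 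Since $\lang(\cG_\varphi)=\bigcup_{(s_1,\ldots,s_{|\mathbb{P}|})\in I}\bigcap_{p\in\mathbb{P}}\lang_p(S,\delta,s_p,\rank)$, a nondeterministic polynomial-space procedure may guess one global initial state and then test whether some existentially $B$-bounded MSC $M$ is accepted from $s_p$ on each process $p$ simultaneously.

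The core of the upper bound is this last test. Because the set $\mathbb{P}$ (and hence $\mathsf{Ch}$) is fixed and $B$ is given in unary, a $B$-bounded linearization of an MSC is a word over a fixed finite alphabet from which the relations $\proc^M$ and $\msg^M$ can be recovered by maintaining, for each channel, its content of at most $B$ pending messages --- only polynomially much information. Reading such a linearization, the multi-way moves of $\cM$ along $\proc,\proci,\msg,\msgi$ become the moves of a two-way alternating parity automaton, and the parity acceptance of $\cM$ transfers verbatim. Using the standard simulation of two-way alternating parity automata by one-way nondeterministic B\"uchi automata, one obtains a B\"uchi automaton $\cB$, exponential in $|\varphi|$ and polynomial in $B$, whose non-emptiness is equivalent to the existence of an existentially $B$-bounded model of $\varphi$. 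The states of $\cB$ are polynomially representable and its transition relation is checkable in polynomial space, so the existence of a reachable accepting lasso can be decided in nondeterministic polynomial space; by Savitch's theorem this places the problem in PSPACE.

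For PSPACE-hardness I would reduce from a canonical PSPACE-complete problem, for instance the word problem of a polynomially space-bounded Turing machine. The idea is to force a single process to carry a word-shaped model encoding a length-$n$ tape, to use the path expressions of CRPDL --- in particular iteration $\_^\ast$ with tests $\{\alpha\}$ and the converse operator to relate a cell to the same cell in the successor configuration --- and to let a global $\mathsf{A}$-formula assert that every position respects the transition relation while an $\mathsf{E}$-formula asserts that an accepting configuration is reached; this formula is computable in polynomial time and is satisfiable by a $B$-bounded MSC (with $B=1$) if and only if the machine accepts. The hard part will be the first step of the upper bound's core: verifying precisely that $B$-bounded linearizations are in exact correspondence with genuine MSCs and that the two-way automaton can faithfully simulate the message moves $\msg,\msgi$ using only the bounded channel contents, so that the two languages coincide. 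Once this correspondence is established, the conversion to $\cB$ and the on-the-fly non-emptiness check are routine.
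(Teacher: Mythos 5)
Your membership argument follows the same overall route as the paper: pass to $B$-bounded linearizations, read them with a two-way alternating parity automaton, eliminate alternation to get a B\"uchi automaton, and decide emptiness on the fly in polynomial space. The one substantive divergence is how the Boolean structure of $\varphi$ is handled, and this is where your version has a soft spot. The paper never touches the exponential global MSCA of the corollary: it applies Theorem~\ref{theorem:globalFormulasToAutomata} only to the atomic subformulas $\mathsf{E}\alpha_i$ and $\mathsf{A}\alpha_i$, turns each of the resulting polynomial-size global MSCAs into a 2APA $\cP_i$ via Theorem~\ref{theorem:linearizationsAutomaton}, and only then combines $\cP_1,\ldots,\cP_n$ according to the Boolean structure of $\varphi$, which for alternating automata costs nothing (disjunction is existential branching, conjunction is universal branching at the start). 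You instead build the exponential $\cG_\varphi$ and guess one global initial state. That can be made to work, but your justification --- ``states are polynomially representable and $\delta$ is computable on the fly'' --- is not by itself sufficient: the alternation-elimination step (Theorem~\ref{thm:alternation_elimination} after Remark~\ref{rem:parity_to_buechi}) produces B\"uchi states that are \emph{sets} of pairs of 2APA states, so if the 2APA genuinely had exponentially many states, those sets would not be polynomially representable and your on-the-fly emptiness check would need exponential space. What saves you is a stronger structural fact you should state explicitly: in the conjunction construction a product state $(s_1,s_2)$ immediately dissolves into its components via $(\id,s_1)\land(\id,s_2)$, so once a single global initial state is fixed, the \emph{reachable} part of $\cM$ (hence of the 2APA) is polynomial in $|\varphi|$ and $B$ --- at which point you have, in effect, reconstructed the paper's polynomial 2APA.

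On hardness the two proofs genuinely differ: the paper disposes of it in one line by invoking PSPACE-hardness of LTL satisfiability, while you sketch a direct encoding of a polynomially space-bounded Turing machine. Your encoding idea (iterated $\proc$-steps to reach the same cell in the successor configuration, tests $\{\alpha\}$, $\mathsf{A}$ for local consistency, $\mathsf{E}$ for acceptance) is the standard one and can be completed, but two frictions of this paper's setting deserve mention: the process set $\mathbb{P}$, and hence $\Sigma$, is a fixed alphabet of communication actions only, so tape symbols must be encoded by blocks of events rather than by labels; and every event of an MSC must have a matching message partner (and every process is non-empty), so the ``word-shaped'' process needs a partner process receiving its messages. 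With these adjustments your reduction with $B=1$ goes through, so both routes are valid; the paper's is shorter, yours is self-contained.
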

\subsection{From MSCAs to Word Automata}\label{subsec:from_mscas_to_word_automata}
In order to be able to give uniform definitions of automata over MSCs
and words, respectively, we also consider words over an alphabet
$\Gamma$ as labelled relational structures. For this, we fix the set
$\mathbb{W}=\{\pr,\nx,\id\}$ of directions.
\begin{defi}
  Let $\Gamma$ be an arbitrary alphabet. A \emph{word-like structure
    over $\Gamma$} is a structure $W=(V^W,\nx^W,\lambda^W)$ where
  \begin{iteMize}{$\bullet$}
  \item $V^W$ is a set of \emph{positions}, 
  \item $\nx^W\subseteq (V^W \times V^W)$,
  \item $\lambda^W\colon V^W \to \Gamma$ is a labeling function,
  \item $\nx^W$ is the direct successor relation of a linear order
    $\preceq^W$ on $V^W$,
  \item $(V^W,\preceq^W)$ is finite or isomorphic to
    $(\mathbb{N},\leq)$
  \end{iteMize}
  The word-like structure $W$ induces a partial function $\eta_W\colon
  (V^W \times V^W) \to \mathbb{W}$. For all $v,v'\in V^W$, we define
  \[\eta_W(v,v')=\begin{cases}
    \nx&\text{if $(v,v')\in\nx^W$}\\
    \pr&\text{if $(v',v)\in\nx^W$}\\
    \id&\text{if $v=v'$}\\
    \text{undefined}&\text{otherwise}
    \end{cases}\]
\end{defi}
Every finite word $W = \gamma_1 \gamma_2 \ldots
\gamma_n\in\Gamma^\ast$ gives rise to a unique (up to isomorphism)
word-like structure $\overline{W}$ where $V^{\overline{W}} = [n]$,
$\nx^{\overline{W}}=\{(i,i+1)\mid 1\leq i<n\}$, and
$\lambda^{\overline{W}}(i) = \gamma_i$ for all $i\in[n]$. Analogously,
every infinite word $W \in \Gamma^\omega$ induces a word-like
structure $\overline{W}$. In the following, we identify $W$ and
$\overline{W}$ for every word $W\in\Gamma^\infty$.

We now formalize the notion of existentially $B$-bounded MSCs.
\begin{defi}
  If $M$ is an MSC and $W$ is a word, then $W$ is a
  \emph{linearization} of $M$ if $V^M = V^W$, $\lambda^M=\lambda^W$,
  and $\mathord(\msg^M\cup\bigcup_{p\in\mathbb{P}}\proc_p^M)^\ast \subseteq
  \mathord{\preceq^W}$.  The word $W$ is \emph{$B$-bounded} if we have
  \[|\{v'\mid v'\preceq^W v,\lambda^W(v')=p!q\}|-|\{v'\mid v'\preceq^W
  v,\lambda^W(v')=q?p\}|\,\leq\, B\,\text{,}\] 
  for every $v\in V^W$ and
  $(p,q)\in\mathsf{Ch}$. An MSC $M$ is \emph{existentially
    $B$-bounded} if there exists a $B$-bounded linearization of $M$,
  i.e., if it allows for an execution with $B$-bounded channels.
\end{defi}
\begin{exa}\label{ex:linearizations}
  Let $M$ be the MSC from Fig.~\ref{fig:MSC}. The word
  \[W = (1!2) \; (1!2) \; (1!2) \; (2?1) \; (2!1) \; (2?1) \; (2!1) \;
  (2?1) \; (2!1) \; (1?2) \; (1?2) \; (1?2) \; \in \; \Sigma^\ast\]
  is a $3$-bounded linearization of $M$. Note that parentheses have
  been introduced for readability. There is even a $1$-bounded
  linearization of $M$:
  \[W' = (1!2) \; (2?1) \; (1!2) \; (2!1) \; (2?1) \; (1!2) \; (1?2)
  \; (2!1) \; (2?1) \; (1?2) \; (2!1) \; (1?2) \; \in \; \Sigma^\ast\]
  Hence, $W'$ witnesses the fact that $M$ is existentially
  $1$-bounded.
\end{exa}
We define two-way alternating automata over words in the style of
local MSCAs.
\begin{defi}
  A \emph{two-way alternating parity automaton} (or \emph{2APA} for
  short) is a quadruple $\cP=(S,\delta,\iota,\rank)$ where
  \begin{iteMize}{$\bullet$}
  \item $S$ is a finite set of states,
  \item $\delta\colon(S\times\Sigma)\to\cB^+(\mathbb{W}\times S)$ is
    a transition function,
  \item $\iota\in S$ is an initial state, and
  \item $\rank\colon S\to\{0,1,\ldots,m-1\}$ is a ranking function
    with $m\in\mathbb{N}$.
  \end{iteMize}
  The \emph{size} of $\cP$ is $|S| + |\delta|$. If $|\tau| = 1$ for
  all $\tau \in \minmodel{\delta(s, \sigma)}$, $s \in S$, and $\sigma
  \in \Sigma$ (i.e., $\cP$ does not make use of universal branching),
  then $\cP$ is called a \emph{two-way parity automaton} (or
  \emph{2PA}). If $W$ is a word, then the definition of an
  $S$-labelled tree over $W$ is analogous to the definition of an
  $S$-labelled tree over a pointed MSC
  (cf. Definition~\ref{def:labelledTree}). Furthermore, an (accepting)
  run of a 2APA is defined in a similar way as it is defined for a
  local MSCA (cf.~Definitions~\ref{def:stuck}, \ref{def:run}, and
  \ref{def:accepting}).
  By $\lang(\cP)$, we denote the set of words~$W$ for which there
  exists an accepting run of $\cP$ on $(W,v)$ where $v$ is the minimal
  element from $V^W$ with respect to $\preceq^W$.
\end{defi}
Now, let us fix a channel bound $B\in\N$ and the alphabet
$\Gamma=\Sigma\times\{0,1,\ldots,B-1\}$.
\begin{defi}
  If $W$ is a $B$-bounded word over $\Sigma$, then we associate with
  $W$ the unique $B$-bounded word $W_B$ over $\Gamma$ where $V^W =
  V^{W_B}$, $\nx^W=\nx^{W_B}$, and, for every $v\in V^W$, we have
  $\lambda^{W_B}(v)=(\lambda^W(v),i)$ with $i=|\{v'\in V^W\mid
  v'\prec^W v,\lambda^W(v)=\lambda^W(v')\}|\bmod B$.
\end{defi}
That means that, in the second component of the labels in $W_B$,
we count events labelled by the same action modulo $B$. 
\begin{exa}
  Let $W$ and $W'$ be the words from
  Example~\ref{ex:linearizations}. For instance, $W_3$ is the word
  \[
    (1!2,0) \, (1!2,1) \, (1!2,2) \, (2?1,0) \, (2!1,0) \,
    (2?1,1) \, (2!1,1) \, (2?12) \, (2!1,2) \,
    (1?2,0) \, (1?2,1) \, (1?2,2)
  \]
  whereas $W_2'$ is given by:
  \[
    (1!2,0) \, (2?1,0) \, (1!2,1) \, (2!1,0) \, (2?1,1) \,
    (1!2,0) \, (1?2,0) \, (2!1,1) \, (2?1,0) \,
    (1?2,1) \, (2!1,0) \, (1?2,0)
  \]
\end{exa}
In $W_B$, we are able to quickly locate matching send and receive
events.  For example, if $v$ is a send event of $W_B$ labelled by
$(p!q,i)$, we just need to move to the smallest event $v'\in V^{W_B}$
(with respect to $\preceq^W$) with $v\preceq^{W_B} v'$ and
$\lambda^{W_B}(v')=(q?p,i)$.

\medskip

Let $\cG=(\cM,I)$ be a global MSCA. We can construct a 2APA
$\cP_{\cG}=(S,\delta,\iota,\rank)$ that accepts exactly the set of
words $W_B$ where $W$ is a $B$-bounded linearization of an MSC
from $\lang(\cG)$. In order to construct $\cP_\cG$, there is one issue
which needs to be addressed.
%
%
Let $M$ be an MSC and $W$ be a $B$-bounded linearization of $M$. If
$v,v'\in V^M$ with $\eta_M(v,v')=\proc$, then a local MSCA is capable
of directly moving to $v'$. In general, this cannot be accomplished by
a 2APA running on $W_B$ since there may exist events $v''\in V^M$ with
$v\prec^{W_B} v''\prec^{W_B} v'$. To circumvent this limitation, the
idea is to introduce transitions which allow the 2APA to move forward
on $W_B$ and skip non-relevant events until it reaches the event
$v'$. Of course, we have to analogously deal with $\proci$, $\msg$,
and $\msgi$ transitions of local MSCAs.

More precisely, regarding the 2APA $\cP_\cG$, we use states of the
form $(s,p,\nx)$ to remember that we are searching for the next event
on process $p$ in the $\nx$-direction. In contrast, a state of the
form $(s,p!q,i,\pr)$ means that we are looking for the nearest send
event $p!q$ indexed by $i$ in the $\pr$-direction. The first component
is always used to remember the state from which we need to continue
the simulation of the local MSCA $\cM$ after finding the correct
event. If $\cM=(S',\delta',\iota',\rank')$, then the set of
states of $\cP_\cG$ is the following:
\begin{align*}
  S=&\{\iota,t\}\cup S' \cup \{(s,p,\pr),(s,p,\nx)\mid s\in S',p\in\mathbb{P}\}\\
  &\quad\cup\{(s,\sigma,i,\pr),(s,\sigma,i,\nx)\mid s\in
  S',\sigma\in\Sigma,0\leq i<B\}
\end{align*}
The intuition for the states from $I$ is as
follows: From the initial state $\iota$, the 2APA $\cP_\cG$
nondeterministically changes into a global initial state
$(\iota_1,\iota_2,\ldots,\iota_{|\mathbb{P}|})$ from $I$. That way, it simulates
$|\mathbb{P}|$ many copies of $\cM$ where the $p$-th copy
of $\cM$ is started in the state $\iota_p$ in the minimal event of
process $p$ (with respect to $\preceq_p^M$). More formally, for all
$\gamma\in\Gamma$, we define
\[\delta(\iota,\gamma)=\bigvee_{(\iota_1,\ldots,\iota_{|\mathbb{P}|})\in
  I}\big(\id,(\iota_1,1,\nx)\big)\land\big(\id,(\iota_2,2,\nx)\big)\land\ldots\land\big(\id,(\iota_{|\mathbb{P}|},|\mathbb{P}|,\nx)\big)\,.\]
Assume that the automaton $\cP_\cG$ is in a state of the form
$(s,p,D)$ resp. $(s,\sigma,i,D)$ at an event $v$. If
$\lambda^M(v)\notin\Sigma_p\times\{0,\ldots,B-1\}$
resp.~$\lambda^M(v)\neq(\sigma,i)$, i.e., if $v$ is not the event at
which the simulation of $\cM$ needs to be continued, then we stay in
the current state and move into direction $D$. Otherwise, we simulate
a transition $\tau\in\minmodel{\delta'(s,\lambda^M(v))}$ of the local
MSCA $\cM$ in the following manner: If $(\proc,s)\in\tau$, then we
change into the state $(s,p,\nx)$ and move along the
$\nx$-direction. If $(\proci,s)\in\tau$, then we act analogously in
the $\pr$-direction. Now, let us assume that $(\msg,s)\in\tau$. If
$\lambda^M(v)$ is of the form $(p!q,i)$, then we change into
$(s,q?p,i,\nx)$ and move along the $\nx$-direction. In contrast, if
$v$ is a receive event, then the local MSCA $\cM$ is unable to execute
the movement $(\msg,s)$. To simulate this behavior, we change into the
sink state $t$ and stay at $v$. From state $t$, the 2APA $\cP_\cG$ is
unable to accept. If $(\msgi,s)\in\tau$, then we proceed
similarly. Formally, for all $s\in S'$, $p\in\mathbb{P}$, $\sigma\in\Sigma$,
$i\in\{0,\ldots,B-1\}$, $D\in\mathbb{W}$, and $\gamma\in\Gamma$, we
have
\begin{align*}
  \delta\big((s,p,D),\gamma\big)&=\begin{cases}
    \big(D,(s,p,D)\big)&\text{if $\gamma\notin\Sigma_p\times\{0,\ldots,B-1\}$}\\
    (\id,s)&\text{if $\gamma\in\Sigma_p\times\{0,\ldots,B-1\}$}
  \end{cases}\\
  \delta\big((s,\sigma,i,D),\gamma\big)&=\begin{cases}
    \big(D,(s,\sigma,i,D)\big)&\text{if $\gamma\neq (\sigma,i)$}\\
    (\id,s)&\text{if $\gamma=(\sigma,i)$}
  \end{cases}\\
  \delta(s,\gamma)&=g(s,\gamma)\\
  \delta(t,\gamma)&=\bot
\end{align*}
where, for all $s\in S'$ and $\gamma=(p\theta q,i)\in\Gamma$,
$g(s,\gamma)$ is the positive Boolean expression which is obtained
from $\delta'(s,p\theta q)$ by applying the following substitutions:
for all $s'\in S'$, we exchange
\begin{iteMize}{$\bullet$}
\item $(\proc,s')$ by $\big(\nx,(s',p,\nx)\big)$,
\item $(\proci,s')$ by $\big(\pr,(s',p,\pr)\big)$,
\item $(\msg,s')$ by $(\id,t)$ if $\theta=?$,
\item  $(\msg,s')$ by $\big(\nx,(s',q?p,i,\nx)\big)$ if $\theta=!$,
\item $(\msgi,s')$ by $(\id,t)$ if $\theta=!$, and
\item  $(\msg,s')$ by $\big(\pr,(s',q!p,i,\pr)\big)$ if $\theta=?$.
\end{iteMize}
It remains to define the ranking function $\rank$ of $\cP_\cG$. For
all $s\in S'$, we define $\rank(s)=\rank'(s)$. If $s\in S\setminus
S'$, then we set $\rank(s)=m$ where $m$ is the smallest odd natural
number larger than $\max_{s\in S'}\rank'(s)$.
\begin{thm}\label{theorem:linearizationsAutomaton}
  Let $M$ be an MSC and $W$ some $B$-bounded linearization of $M$. We
  have $M\in\lang(\cG)$ if and only if $W_B\in\lang(\cP_{\cG})$. The size of
  $\cP_{\cG}$ is polynomial in $B$ and the size of~$\cG$.
\end{thm}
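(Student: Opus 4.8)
The plan is to reduce the statement to a single \emph{simulation lemma} relating runs of the local MSCA $\cM=(S',\delta',\iota',\rank')$ underlying $\cG$ to runs of $\cP_\cG$, and then to assemble the global picture from the initial fork of $\cP_\cG$. Concretely, I would first prove the following correspondence: for every state $s\in S'$ and every event $v\in V^M=V^{W_B}$, the local MSCA $(S',\delta',s,\rank')$ has an accepting run on $(M,v)$ if and only if $\cP_\cG$ has an accepting run when started in state $s$ at the position $v$ of $W_B$. Granting this, the theorem follows quickly. By definition $M\in\lang(\cG)$ means that some tuple $(\iota_1,\dots,\iota_{|\mathbb{P}|})\in I$ is such that, for every $p$, the automaton $(S',\delta',\iota_p,\rank')$ accepts $(M,v_p)$, where $v_p$ is the minimal event of process $p$. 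On the other side, a first move of $\cP_\cG$ from $\iota$ selects such a tuple from $I$ and forks into the states $(\iota_p,p,\nx)$; since each process is non-empty, a short argument shows that the search from the global minimal position reaches $s=\iota_p$ exactly at $v_p$ after finitely many $\nx$-steps. Thus an accepting run of $\cP_\cG$ decomposes into $|\mathbb{P}|$ independent simulations, one per process, and the simulation lemma finishes both directions.

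For the simulation lemma I would argue by translating runs back and forth. For the direction from $\cM$ to $\cP_\cG$, I take an accepting run and replace every edge that realizes a movement $(D,s')$ from a configuration sitting at $w$ by a finite \emph{macro-path} of search configurations: a $\proc$-move becomes a chain of $(s',p,\nx)$-configurations running forward until the first process-$p$ event after $w$, where the automaton switches to $s'$; a $\proci$-move is symmetric; and, using the indexing of $W_B$, a $\msg$-move from a send event labelled $(p!q,i)$ becomes a forward chain of $(s',q?p,i,\nx)$-configurations ending at the matching receive, which by the observation preceding the theorem is exactly the first $(q?p,i)$-labelled position after $w$ (with $\msgi$ handled symmetrically). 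The reverse direction starts from an accepting run $\rho'$ of $\cP_\cG$ and contracts every maximal search-chain to a single $\cM$-edge; the content here is that in an \emph{accepting} run every such chain must terminate, so that the contraction is well defined and lands on the correct $\proc$- or $\msg$-neighbour.

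The decisive technical point, and the step I expect to be the main obstacle, is the rank bookkeeping together with the exact matching of stuck configurations. The construction gives every auxiliary state, in particular every search state, the odd rank $m$ strictly above all ranks of $S'$, and I would exploit this in two complementary ways. First, on a genuinely infinite branch all search-chains are finite, so $S'$-labelled configurations recur infinitely often; since every rank of $S'$ is smaller than $m$, the minimal rank seen infinitely often is inherited from the $\cM$-branch, and acceptance of infinite branches is preserved in both directions. Second, a search that never finds its target produces an infinite branch whose only infinitely recurring rank is $m$, which is odd and hence rejecting; this is precisely what encodes that the corresponding movement of $\cM$ was impossible. I would then check the finite, stuck cases carefully: because applying the substitution defining $g$ to $\bot$ again yields $\bot$, the automaton $\cP_\cG$ satisfies $\delta(\cstate,\cdot)=\bot$ and is therefore stuck exactly at those $S'$-states where $\cM$ is stuck via $\bot$, with matching ranks, so finite accepting branches correspond as well. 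Verifying that this correspondence of stuck configurations is tight for \emph{every} $s\in S'$ is the part that requires the most care.

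Finally, the size bound is immediate from the construction: the state set $S$ consists of $S'$ together with $O(|S'|\cdot|\mathbb{P}|)$ process-search states and $O(|S'|\cdot|\Sigma|\cdot B)$ message-search states, plus the two auxiliary states $\iota$ and $t$; the transition function is obtained by the local substitution $g$ and the uniform rules for the search states, each of bounded description size. Hence $|S|+|\delta|$ is polynomial in $B$ and in the size of $\cG$, as claimed.
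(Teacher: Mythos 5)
Your plan follows essentially the same route as the paper's proof sketch: decompose an accepting run at the initial fork into one simulation per process, expand each movement of $\cM$ into a finite search-chain of $\cP_\cG$ (the paper's ``padding''), and contract maximal search-chains of an accepting $\cP_\cG$-run back into single $\cM$-edges (the paper's ``pruning''); the size bound is read off the state set exactly as in the paper. However, there is a genuine gap at precisely the step you yourself single out as the crux, and your resolution of it is insufficient. In the MSCA semantics a finite branch may terminate in a configuration where $\cM$ is stuck \emph{not} via $\bot$: the expression $\delta'(s,\lambda^M(v))$ is not $\bot$, but every minimal model contains a movement that is impossible at $v$; such a branch is accepting iff $\rank'(s)$ is even. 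Your argument (``$g$ maps $\bot$ to $\bot$, hence stuck configurations correspond, hence finite accepting branches correspond'') covers only the $\bot$ case. Even-rank states that get stuck the other way genuinely occur, namely after dualization: $\cM_{\neg\pth{\proc}\true}=(\cM_{\pth{\proc}\true})^\#$ has $\delta^\#(\iota,\sigma)=(\proc,\cstate)$ and $\rank^\#(\iota)=2$, so at the last event $v$ of a process it accepts by being stuck at $\iota$. Your expansion translation has nothing to map this branch to: in $\cP_\cG$, unless $v$ happens to be the last position of $W_B$, the configuration at $v$ in state $\iota$ is \emph{not} stuck --- its unique transition $(\nx,(\cstate,p,\nx))$ is firable, run maximality forces it, and since no later position carries a $\Sigma_p$-label the search either runs forever or dies at the end of the word, in both cases in the odd rank $m$. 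So every run of $\cP_\cG$ through this configuration has a rejecting branch, and the forward direction of your simulation lemma is false.

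The failure is concrete and survives to the level of the statement itself. Take $\mathbb{P}=\{1,2\}$, let $\cM$ be the two-state automaton above, $\cG=(\cM,\{(\iota,\iota)\})$, and let $M$ be the MSC consisting of a single message from process $1$ to process $2$, with linearization $W=(1!2)(2?1)$ and $B=1$. Then $M\in\lang(\cG)$ (each process has a unique event, maximal on its process, so both copies accept by stuckness in rank $2$), but the unique run of $\cP_\cG$ on $W_B$ is rejecting: the process-$1$ copy reaches state $\iota$ at the first position, is forced to launch the search $(\cstate,1,\nx)$, and that search dies at the last position in odd rank. Note the asymmetry: your contraction direction is actually sound, because a $\cP_\cG$-configuration stuck at a word boundary does correspond to a stuck $\cM$-configuration (the last position of a linearization is always a receive that is maximal on its process, the first always a send that is minimal on its process). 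The gap sits exactly in the direction the paper waves off with ``the converse can be shown analogously'', so it cannot be closed by writing your plan out more carefully: one must either show that every reachable non-$\bot$ stuck state has odd rank (false once negation/duals enter), or modify the construction --- e.g.\ enrich the alphabet $\Gamma$ with flags marking positions that are maximal/minimal on their process and on their message-matching, so that an impossible movement of $\cM$ is compiled into an \emph{unavailable} transition of $\cP_\cG$ rather than into a doomed search.
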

\begin{proof}[Proof sketch]
  If $\rho$ is a successful run of $\cP_{\cG}$ on an MSC $M$, then
  $\rho$ immediately splits into $|\mathbb{P}|$ many subtrees
  $\rho_q$. By easy inspection of the transition function of $\cP_\cG$
  it follows that there exists a global initial state
  $(\iota_1,\ldots,\iota_{|\mathbb{P}|})\in I$ such that, for every
  $q\in\mathbb{P}$, there is exactly one subtree
  $\rho_q=(C_q,E_q,r_q,\mu_q,\nu_q)$ with
  $\mu_q(r_q)=(\iota_q,q,\nx)$. Each of these subtrees~$\rho_q$ can be
  pruned in such a way that one obtains an accepting run $\rho_q'$ of
  $\cM$ starting in state~$\iota_q$ from the minimal event of $V^M_q$
  (with respect to $\preceq_q^M$). Thus, $M$ is accepted by
  $\cG$. Note that we obtain $\rho_q'$ from $\rho_q$ by essentially
  removing all configurations $x$ with $\mu_q(x)\notin S'$; of course,
  we need to update $E_q$ accordingly.

  The converse can be shown analogously. Basically, one only needs to
  pad and combine the accepting runs of the local MSCA $\cM$ on the
  different processes in order to obtain a successful run of
  $\cP_{\cG}$.
\end{proof}
\subsection{Checking the Emptiness of 2APAs}
In order to solve the emptiness problem for a 2APA $\cP$, we transform
$\cP$ into a B\"uchi automaton.
\begin{defi}
  Formally, a \emph{B\"uchi automaton} (or \emph{BA}) over the
  alphabet $\Sigma$ is a tuple $\cB = (S, \Delta, \iota, F)$ where $S$
  is a finite set of states, $\iota$ is the initial state, $F
  \subseteq S$ is the set of final states, and $\Delta \subseteq S
  \times \Sigma \times S$ is the transition relation. The \emph{size}
  of $\cB$ is $|S| + |\Delta|$. Let $W = \sigma_0 \sigma_1 \ldots \in
  \Sigma^\infty$ be a word of length $n \in \mathbb{N} \cup
  \{\infty\}$. The mapping $r \colon \mathbb{N} \to S$ is a \emph{run
    of~$\cB$ on $W$} if $r(0) = \iota$ and $(r(i), \sigma_i, r(i+1)\big)
  \in \Delta$ for all $i < n$. A word $W$ is \emph{accepted} by $\cB$
  if there exists a run $r$ such that $r(0)r(1)r(2)\ldots \in
  S^\infty$ is \emph{B\"uchi accepting}, i.e., if one of the following
  conditions is fulfilled:
  \begin{enumerate}[(1)]
    \item $n \in \mathbb{N}$ and $r(n) \in F$
    \item $\mathsf{inf}\big(r(0)r(1)r(2)\ldots\big) \cap F \neq \emptyset$
  \end{enumerate}
  By $\lang(\cB)$, we denote the set of words which are accepted by
  $\cB$.
\end{defi}
In contrast to common definitions of B\"uchi automata, item (1) allows
$\cB$ to accept finite words as well.
In the following, we also need to deal with two-way (alternating)
B\"uchi automata (2ABA and 2BA for short) which are defined analogously
to 2APA and 2PA but implement the B\"uchi acceptance condition instead
of the parity acceptance condition.

\begin{defi}
  More precisely, a \emph{two-way alternating B\"uchi automaton}
  (\emph{2ABA} for short) is a tuple $\cB = (S, \delta, \iota, F)$
  where $S$, $\delta$, and $\iota$ are defined as for 2APA's and $F
  \subseteq S$ is the set of final states. An (accepting) run of $\cB$
  is defined in a similar way as it is defined for a 2APA with the
  following modification: A sequence of states $(s_i)_{i\geq1} \in
  S^\infty$ is accepting if and only if it is B\"uchi accepting. A
  \emph{two-way B\"uchi automaton} (\emph{2BA}) is defined analogously
  to a 2PA.
\end{defi}
\begin{rem}\label{rem:parity_to_buechi}
Note that using the ideas from
\cite{DBLP:conf/fossacs/KingKV01}, a 2APA $\cP$ can be transformed
into a 2ABA $\cB$ in polynomial space such that the size of $\cB$
is polynomial in the size of $\cP$ and $\lang(\cP) = \lang(\cB)$.
\end{rem}
In \cite{DBLP:conf/lpar/DaxK08}, Dax and Klaedtke showed the following:
\begin{thm}[\cite{DBLP:conf/lpar/DaxK08}]\label{thm:alternation_elimination}
  From a 2APA $\cP$, one can construct a BA $\cB$ whose size is
  exponential in the size of $\cP$ such that $\lang(\cP) =
  \lang(\cB)$.
\end{thm}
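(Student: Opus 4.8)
The plan is to assemble $\cB$ from three transformations, chained so that the only exponential blowup is incurred by the last, alternation-eliminating step.

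First I would invoke Remark~\ref{rem:parity_to_buechi} to trade the parity condition for a B\"uchi condition, obtaining a 2ABA $\cB_0$ of size polynomial in $|\cP|$ with $\lang(\cB_0)=\lang(\cP)$. It then suffices to turn a two-way alternating B\"uchi automaton into an ordinary (one-way, nondeterministic) B\"uchi automaton with a single exponential blowup.

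Second I would remove the two-wayness while \emph{keeping} the automaton alternating, producing a one-way alternating B\"uchi automaton $\cB_1$ of size polynomial in $|\cB_0|$. The mechanism (``from bidirectionality to alternation'') is to read the input left to right and to discharge backward moves by alternation rather than by actually walking back: when a thread in state $s$ at position $i$ would move left, the one-way automaton instead guesses the state $s'$ in which the backward computation re-enters position $i$ from the left and spawns, via a conjunctive transition, an obligation that is checked against the summary of the prefix already read. Because each backward obligation becomes a separate conjunctive thread, no explicit powerset of states is ever formed, which is exactly what keeps $|\cB_1|$ polynomial.

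Third I would apply the Miyano--Hayashi breakpoint construction to eliminate alternation from $\cB_1$, yielding a nondeterministic B\"uchi automaton $\cB$ of size $2^{O(|\cB_1|)}$, and hence exponential in $|\cP|$, with $\lang(\cB)=\lang(\cB_1)=\lang(\cP)$. The breakpoint component records the states still owing a visit to a final state, and the B\"uchi set of $\cB$ is entered whenever this component empties; since $\cB$ is one-way and nondeterministic, it is a BA in the sense of the present paper. The main obstacle is the accounting of the blowup: two-way removal and alternation removal are each potentially exponential, so the single-exponential bound rests on performing the two-way removal \emph{on the alternating automaton} (Step two, polynomial) and letting only Step three contribute an exponential; doing the two-way removal after a determinisation would compound two exponentials. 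A secondary, more technical point is that the automata of this paper also accept finite words (item~(1) of the BA definition and the analogous run condition for 2APA), whereas the three transformations are normally stated for $\omega$-words; I would therefore check that finite accepting runs are preserved throughout --- e.g.\ by marking terminal configurations and adding the corresponding states to the final set --- so that $\lang(\cP)=\lang(\cB)$ holds over all of $\Sigma^\infty$ and not merely over $\Sigma^\omega$.
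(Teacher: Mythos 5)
Your steps (1) and (3) are fine, but the argument collapses at step (2), and by your own accounting the whole single-exponential bound rests on precisely that step. The polynomial trade of bidirectionality for alternation is only available when the two-way automaton you start from is \emph{nondeterministic}: the Piterman--Vardi translation (``From bidirectionality to alternation'') turns a 2BA into a one-way alternating B\"uchi automaton with polynomial blowup, but no such translation is available when the input is itself \emph{alternating}, which is what Remark~\ref{rem:parity_to_buechi} hands you. Your sketch breaks exactly where it is vaguest: an obligation ``checked against the summary of the prefix already read'' cannot be discharged by a one-way automaton, alternating or not, since every thread spawned at position $i$ reads only the suffix from $i$ onward and can never re-inspect positions $<i$. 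So the summary must either be carried in the state --- and the summary needed is the backward-loop information, a subset of $S^2\times\{0,1\}$, i.e.\ exactly the sets $\mathsf{bwl}(\cB)$ and the $2^{S^2}$-components that make the state set of the automaton recognizing the $\cB$-legal words in Vardi's construction (recalled in this section) exponential --- or it must be anticipated by threads spawned earlier, which again means guessing and storing a set of states. The underlying obstruction is that for an alternating automaton a backward excursion of the run DAG does not return to position $i$ in a \emph{single} state (as a single run path of a nondeterministic 2BA does) but in a \emph{set} of states, one per branch of the excursion; after the conjunctive split the excursion-verifying thread and the pre-spawned continuation threads can no longer communicate, so that set has to live in somebody's state space. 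Consequently step (2) is not polynomial as described, and with an exponential step (2) your pipeline yields a doubly exponential BA, missing the claimed bound.

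This is also where your route genuinely diverges from the proof the theorem actually refers to (Dax--Klaedtke): there the \emph{alternation} is eliminated first, not the two-wayness. One annotates the input with a candidate run DAG, working over the alphabet $\Sigma\times\Gamma$ with $\Gamma$ the function space $S\to 2^{\mathbb{W}\times S}$, and builds a two-way \emph{nondeterministic} B\"uchi automaton $\cB'$ with the \emph{same} state set $S$ that accepts exactly those annotated words which fail to encode an accepting run; the single exponential is then paid once, in Vardi's combined complementation and one-way construction applied to $\cB'$ (where the crossing/backward-loop summaries legitimately occupy the state space), followed by projection back to $\Sigma$. If you want to keep your decomposition, you would have to supply a genuinely new polynomial translation from two-way alternating to one-way alternating B\"uchi automata; otherwise the complementation-based detour is needed.
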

%
%
Note that Dax and Klaedtke actually stated that one can construct a BA
of size $2^{O((nk)^2)}$ where $n$ is the size of $\cP$ and $2k$ is the
maximal rank of a state from $\cP$. Since we can assume that the
maximal rank of a state of $\cP$ is linear in the number of states
of $\cP$, it follows that the size of $\cB$ is exponential in the size
of $\cP$. Furthermore, in \cite{DBLP:conf/lpar/DaxK08}, only infinite
words are considered. Nevertheless, it can be easily seen that the
result also applies to automata recognizing infinite and finite words
at the same time.

In the following, we recall parts of the proof of
Theorem~\ref{thm:alternation_elimination} and adapt it to our setting
in order to be able to prove Prop.~\ref{prop:emptiness}. Let $\cB =
(S, \delta, \iota, F)$ be an 2ABA over the alphabet $\Sigma$ and let
$\Gamma$ be an abbreviation of the function space $S \to 2^{\mathbb{W}
  \times S}$.  If $W$ is a word and $\rho = (C, E, r, \mu, \nu)$ is an
accepting run of $\cP$ on $W$, then the authors of
\cite{DBLP:conf/lpar/DaxK08} argue that we can assume without loss of
generality that all nodes $x$ and $y$ of $\rho$ with $\mu(x) = \mu(y)$
and $\nu(x) = \nu(y)$ exhibit isomorphic subtrees. Hence, $\rho$ can
be thought of as a directed acyclic graph (DAG) which can be
represented as a (possibly infinite) word of functions $f = f_1f_2
\ldots \in \Gamma^\infty$ where $f_j(q) = \mathsf{tr}_\rho(x)$
(cf.~Def.~\ref{def:run}), $\mu(x) = q$, and $\nu(x)$ is the $j$-th
position of $W$ with respect to $\nx^W$. From $\cB$, an intermediate
2BA $\cB' = ( S, \delta', \iota, S \setminus F)$ over the alphabet
$\Sigma \times \Gamma$ is constructed where, for all $s \in S$,
$\sigma \in \Sigma$, and $f \in \Gamma$, we have
\[\delta'\big(s, (\sigma, f)\big) =
\begin{cases}
  \bigvee_{(D, s') \in f(s)}(D, s') & \text{if }f(s) \in \minmodel{\delta(s, \sigma)}\\
  (\nx, s) & \text{otherwise.}
\end{cases}
\]
Note that the automaton $\cB'$ is of exponential size since the size
of the alphabet $\Gamma$ is exponential in the size of $\cB$. However,
the set of states of $\cB'$ equals the set of states of $\cB$. It is
shown that $\cB'$ rejects exactly those words $(\sigma_0,
f_0)(\sigma_1, f_1) \ldots \in (\Sigma \times \Gamma)^\infty$ where
the function word $(f_i)_{i\geq0}$ represents an accepting run of
$\cB$ on $(\sigma_i)_{i\geq0}$. In the course of the proof of
Theorem~\ref{thm:alternation_elimination}, using \cite[Theorem
4.3]{DBLP:conf/popl/Vardi88}, a B\"uchi automaton $\cB''$ whose size
is exponential in $\cB$ with $\lang(\cB'') = (\Sigma \times
\Gamma)^\infty \setminus \lang(\cB')$ is constructed. It is shown that
the projection of $\lang(\cB'')$ to the alphabet $\Sigma$ equals
$L(\cB)$.

We also recall the essential parts of the proof of Theorem 4.3 of
\cite{DBLP:conf/popl/Vardi88}, apply a minor correction and adapt it
to our setting.
%
%
Let $\cB = (S, \delta, \iota, F)$ be a 2BA. By $\mathsf{bwl}(\cB)$, we
denote the set $S^2 \times \{0, 1\}$. Intuitively, a triple $(s, t, b)
\in \mathsf{bwl}(\cB)$ expresses that at the current position there is
a backward loop starting in state $s$ and ending in state $t$. We have
$b = 1$ if and only if this loop visits a final state. A word
$(\sigma_1 \sigma_2
\ldots, m_0m_1\ldots, n_0n_1\ldots) \in (\Sigma \times
\mathsf{bwl}(\cB) \times 2^S)^\infty$ of length $h \in \mathbb{N} \cup
\{\infty\}$ is \emph{$\cB$-legal} if and only if there exists a
sequence $\ell_0\ell_1\ldots \in \mathsf{bwl}(\cB)^\infty$ of length
$h$ such that the following conditions are fulfilled:
\begin{iteMize}{$\bullet$}
\item $(s, t, 0) \in \ell_i$ if and only if either $\{(\id, t)\} \in
  \minmodel{\delta(s, \sigma_i)}$ or $i \geq 1$ and there are states $s',
  t' \in S$ and $b \in \{0, 1\}$ such that $(s', t', b) \in m_{i-1}$,
  $\{(s', \pr)\} \in \minmodel{\delta(s, \sigma_i)}$, and $\{(t, \nx)\} \in
  \minmodel{\delta(t', \sigma_{i-1})}$
\item $(s, t, 1) \in \ell_i$ if and only if either $\{(\id, t)\} \in
  \minmodel{\delta(s, \sigma_i)}$ and $t \in F$ or $i \geq 1$ and
  there are states $s', t' \in S$ and $b \in \{0, 1\}$ such that $(s',
  t', b) \in m_{i-1}$, $\{(s', \pr)\} \in \minmodel{\delta(s,
    \sigma_i)}$, $\{(t, \nx)\} \in \minmodel{\delta(t',
    \sigma_{i-1})}$, and in addition either $b = 1$ or $\{s', t', t\}
  \cap F \neq \emptyset$
\item $(s, t, 0) \in m_i$ if and only if there are $s_0, s_1,
  \ldots, s_k \in S$ and $b_0,b_1, \ldots, b_{k-1} \in \{0,1\}$ with
  $k > 0$ such that $s_0 = s$, $s_k = t$, and $(s_j, s_{j+1}, b_j) \in
  \ell_i$ for all $0 \geq j > k$
\item $(s, t, 1) \in m_i$ if and only if there are $s_0, s_1, \ldots,
  s_k \in S$ and $b_0, b_1, \ldots, b_{k-1} \in \{0,1\}$ with $k > 0$
  such that $s_0 = s$, $s_k = t$, $(s_j, s_{j+1}, b_j) \in \ell_i$ for
  all $0 \geq j > k$, and 
  $\{b_0, b_1, \ldots, b_{k_1}\} \cap \{1\} \neq \emptyset$
\item $s \in n_i$ if and only if there exists a state $s' \in S$ and $b
  \in \{0, 1\}$ such that $(s, s', b) \in m_i$ and one of the
  following conditions holds:
  \begin{iteMize}{$-$}
  \item $(s', s', 1) \in m_i$
  \item $s' \in F$ and $\cB$ cannot make a transition at position $i$
    in state $s'$
  \item $i \geq 1$ and there exists a state $s'' \in S$ such that
    $\{(s'', \pr)\} \in \minmodel{\delta(s', \sigma_i)}$ and $s'' \in
    n_{i-1}$
  \end{iteMize}
\end{iteMize}
Note that the $\ell_i$'s are only used to simplify the definition of
the $m_i$'s. The introduction of the $n_i$'s is a minor correction of
the proof of Theorem 4.3. Intuitively, we have $s \in n_i$ if there
exists a position $j \leq i$ and a state $s' \in S$ such that there
exists a backward run starting in $s$ allowing $\cB$ to visit the
$j$-th position of the input word in state $s'$ such that the
following holds: either $s' \in F$ and $\cB$ cannot make a transition
at position $j$ in state $s'$ or, at position $j$ in state $s'$, the
automaton $\cB$ can enter infinitely often a loop containing a final
state. Note that without the information contained in the $n_i$'s, we
would not capture accepting runs of $\cB$ which do not visit all
positions of the input word but, at some position $i$, go backward and
then accept without returning to $i$ again.

From the 2BA $\cB = (S, \delta, \iota, F)$, we can construct a BA
$\cB_1$ recognizing the set of all $\cB$-legal words. Let $\cB_1 =
(S_1, \Delta_1, \iota_1, F_1)$ be the BA where $S_1 = 2^{S^2} \times
\mathsf{bwl}(\cB) \times 2^S$, $\iota_1 = (\emptyset, \emptyset,
\emptyset)$, $F_1 = S_1$ and, for all $(p', \overline{m}',
\overline{n}'), (p, \overline{m}, \overline{n}) \in S_1$ and $(\sigma,
m, n) \in \Sigma\times \mathsf{bwl}(\cB) \times S$, we have $\big((p',
\overline{m}', \overline{n}'), (\sigma, m, n), (p, \overline{m},
\overline{n})\big) \in \Delta_1$ if and only if there exists $\ell
\subseteq \mathsf{bwl}(\cB)$ such that the following conditions hold:
\begin{iteMize}{$\bullet$}
\item $\overline{m} = m$, $\overline{n} = n$,
\item $(s,t) \in p$ if and only if $\{(\nx, t)\} \in
  \minmodel{\delta(s, \sigma)}$
\item $(s, t, 0) \in \ell$ if and only if $\{(\id, t)\} \in
  \minmodel{\delta(s, \sigma)}$ or there are states $s',t' \in S$ and $b
  \in \{0,1\}$ such that $(s', t', b) \in \overline{m}'$, $\{(\pr,
  s')\} \in \minmodel{\delta(s, \sigma)}$, and $(t', t) \in p'$
\item $(s, t, 1) \in \ell$ if and only if $\{(\id, t)\} \in
  \minmodel{\delta(s, \sigma)}$ and $t \in F$ or there are states
  $s',t' \in S$ and $b \in \{0,1\}$ such that $(s', t', b) \in
  \overline{m}'$, $\{(\pr, s')\} \in \minmodel{\delta(s, \sigma)}$,
  $(t', t) \in p'$, and in addition either $b = 1$ or $\{s', t', t\}
  \cap F \neq \emptyset$
\item $(s, t, 0) \in m$ if and only if there are $s_0, s_1, \ldots,
  s_k \in S$ and $b_0, b_1, \ldots, b_{k-1} \in \{0,1\}$ with $k > 0$
  such that $s_0 = s$, $s_k = t$, and $(s_j, s_{j+1}, b_j) \in \ell$
  for all $0 \geq j > k$
\item $(s, t, 1) \in m$ if and only if there are $s_0, s_1, \ldots,
  s_k \in S$ and $b_0,b_1, \ldots, b_{k-1} \in \{0,1\}$ with $k > 0$
  such that $s_0 = s$, $s_k = t$, $(s_j, s_{j+1}, b_j) \in \ell$ for
  all $0 \geq j > k$, and
  $\{b_0, b_1, \ldots, b_{k-1}\} \cap \{1\} \neq \emptyset$
\item $s \in n$ if and only if there exists a state $s' \in S$ and $b
  \in \{0, 1\}$ such that $(s, s', b) \in m$ and one of the following holds:
  \begin{iteMize}{$-$}
  \item $(s', s', 1) \in m$
  \item $s' \in F$ and $\cB$ cannot make a transition at the current
    position in state $s'$
  \item there exists a state $s'' \in S$ such that $\{(s'', \pr)\} \in
    \minmodel{\delta(s', \sigma)}$ and $s'' \in \overline{n}'$
  \end{iteMize}
\end{iteMize}
It remains to specify a BA $\cB_2$ such that a $\cB$-legal word
$(\sigma_0 \sigma_1 \ldots, m_0m_1\ldots, n_0 n_1 \ldots) \in (\Sigma
\times \mathsf{bwl}(\cB) \times 2^S)^\infty$ is accepted by $\cB_2$ if
and only if $(\sigma_i)_{i \geq 0}$ is accepted by the 2BA $\cB$. Let
$\cB_2 = (S_2, \Delta_2, \iota_2, F_2)$ where 
\begin{iteMize}{$\bullet$}
\item $S_2 = (S \cup \{\bot\})
\times \{0, 1\}$, 
\item $\iota_2 = (\iota, b)$ with $b = 1$ if and only if $\iota \in
  F$,
\item $F_2 = (S \cup \{\bot\}) \times \{1\}$, and, 
\item 
  we have $\big((s, b), (\sigma, m, n), (s', b')\big) \in \Delta_2$ if
  and only if one of the following conditions is fulfilled:
  \begin{iteMize}{$-$}
  \item there exist $s'' \in S$ and $b'' \in \{0, 1\}$ such that $(s,
    s'', b'') \in m$, $\{(s', \nx)\} \in \minmodel{\delta(s'',
      \sigma)}$, and ($b' = 1$ if and only if $b'' = 1$ or $s' \in F$)
  \item $s \in n$ and $s' = \bot$
  \item $s' = t' = \bot$
  \end{iteMize}
\end{iteMize}
Intuitively, the states of $\cB_2$ come with a flag. The flag is set
to $1$ if and only if the simulated automaton $\cB$ just visited a
final state. It can be shown that the projection of $\lang(\cB_1) \cap
\lang(\cB_2)$ to the alphabet $\Sigma$ equals the language of $\cB$.
\begin{prop}\label{prop:emptiness}
  If $\cP$ is a 2APA, then one can check the emptiness of $\lang(\cP)$ in
  polynomial space.
\end{prop}

\begin{proof}[Proof sketch]
  By Remark~\ref{rem:parity_to_buechi}, we can transform $\cP$ in
  polynomial space into a 2ABA recognizing the same language. By
  Theorem~\ref{thm:alternation_elimination}, we can construct a BA
  $\cB' = (S, \Delta, \iota, F)$ whose size is exponential in the size
  of $\cP$ such that $\lang(\cB') = \lang(\cP)$. Clearly, remembering
  a state of $\cB'$ requires only polynomial space. By inspecting the
  construction of $\cB'$, one can see that $\cB'$ can be obtained in
  space polynomial in the size of $\cP$. This means in particular:
  given two states $s, t \in S$ and $\sigma \in \Sigma$, one can check
  in polynomial space whether $(s, \sigma, t) \in \Delta$ holds. Since
  $\lang(\cB')$ is non-empty if there exists a final state $s \in F$
  which is reachable from $\iota$ (recall that our B\"uchi automata
  also accept finite words), the emptiness problem of $\cP$ can be
  solved in polynomial space.
\end{proof}
\subsection{The Decision Procedure}
We are now able to prove our main theorem:
\begin{proof}[Proof of
  Theorem~\ref{theorem:satisfiability}]\label{page:proofMainTheorem}
  The global formula $\varphi$ is a positive Boolean combination of
  global formulas $\varphi_1,\ldots,\varphi_n$ where, for every
  $i\in[n]$, $\varphi_i$ is of the form $\mathsf{A}\alpha_i$ or
  $\mathsf{E}\alpha_i$ for some local formula $\alpha_i$. It follows
  from Theorem~\ref{theorem:globalFormulasToAutomata} that we can
  construct in polynomial space a global MSCAs $\cG_i$ such that
  $\lang(\varphi_i)=\lang(\cG_i)$ and the size of $\cG_i$ is linear in
  the size of $\varphi_i$ for every $i\in[n]$. By
  Theorem~\ref{theorem:linearizationsAutomaton}, we can construct, for
  every $i\in[n]$, a 2APA $\cP_i$ such that, for all MSCs $M$ and
  $B$-bounded linearizations $W$ of $M$, we have
  $M\in\lang(\varphi_i)$ if and only if $W_B\in\lang(\cP_i)$. By
  simple inspection of the construction of
  Sect.~\ref{subsec:from_mscas_to_word_automata}, one can see that
  $\cP_i$ can be obtained in polynomial space. The number of states of
  $\cP_i$ is also polynomial.
  Using standard automata constructions for alternating automata, we
  can combine the automata $\cP_1,\ldots,\cP_n$ according to the
  construction of $\varphi$ to obtain a 2APA $\cP_\varphi$ such that,
  for all MSCs $M$ and $B$-bounded linearizations $W$ of $M$, we have
  $W_B\in\lang(\cP_\varphi)$ if and only if $M\in\lang(\varphi)$. This
  can be accomplished in polynomial space and the number of states of
  $\cP_\varphi$ is also polynomial in $B$ and the size of
  $\varphi$. Clearly, $\varphi$ is satisfiable by an existentially
  $B$-bounded MSC if and only if $\lang(\cP)$ is non-empty. Hence, by
  Prop.~\ref{prop:emptiness}, the satisfiability problem of $\varphi$
  can be decided in polynomial space.
  The hardness result follows from the
  PSPACE-hardness of the satisfiability problem of LTL.
\end{proof}

\section{The Model Checking Problem}\label{sec:modelChecking}

A communicating finite-state machine (also known as message-passing
automaton) is well suited to model the behavior of a distributed
system. It consists of a finite number of finite automata
communicating using order-preserving channels. To be more precise, we
recapitulate the definition from
\cite{DBLP:journals/corr/abs-1007-4764}.
\begin{defi}
  A \emph{communicating finite-state machine} (or \emph{CFM} for
  short) is a structure $\Cc=(H,(\mathcal{T}_p)_{p\in\mathbb{P}},F)$ where
  \begin{iteMize}{$\bullet$}
  \item $H$ is a finite set of \emph{message contents},
  \item for every $p\in\mathbb{P}$, $\mathcal{T}_p=(S_p,\to_p,\iota_p)$ is a
    finite labelled transition system over the alphabet
    $\Sigma_p\times H$ (i.e., $\mathop{\to_p}\subseteq
    S_p\times\Sigma_p\times H\times S_p$) with initial state
    $\iota_p\in S_p$,
  \item $F\subseteq\prod_{p\in\mathbb{P}}S_p$ is a set of global final
    states.
  \end{iteMize}
  Let $\Cc$ be a CFM and $M$ be an MSC. A \emph{run} of $\Cc$ on $M$
  is a pair $(\zeta,\chi)$ of mappings
  $\zeta:V^M\to\bigcup_{p\in\mathbb{P}}S_p$ and $\chi:V^M \to H$ such that,
  for all $v\in V^M$,
  \begin{iteMize}{$\bullet$}
  \item $\chi(v)=\chi(v')$ if there exists $v'\in V^M$ with
    $\eta_M(v,v')=\msg$,
  \item
    $(\zeta(v'),\lambda(v),\chi(v),\zeta(v))\in\mathop{\to}_{P_M(v)}$
    if there exists $v'\in V^M$ with $\eta_M(v',v)=\proc$, and
    $(\iota_p,\lambda(v),\chi(v),\zeta(v))\in\mathop{\to}_{P_M(v)}$
    otherwise.
  \end{iteMize}
  Let $\mathsf{cofin}_\zeta(p)=\{s\in S_p\mid\forall v\in V_p^M\exists
  v'\in V_p^M:v\prec_p^M v' \land \zeta(v')=s\}$. The run $(\zeta,\chi)$ is
  \emph{accepting} if there is some $(s_p)_{p\in\mathbb{P}}\in F$ such that
  $s_p\in\mathsf{cofin}_\zeta(p)$ for all $p\in\mathbb{P}$. The
  \emph{language} of $\Cc$ is the set $\lang(\Cc)$ of all MSCs $M$ for
  which there exists an accepting run.
\end{defi}
We now demonstrate that the bounded model checking problem for CFMs
and CRPDL is PSPACE-complete.
\begin{thm}\label{theorem:modelChecking}
  The following problem is PSPACE-complete:
  \begin{quote}
    \noindent Input: $B\in\N$ (given in unary), CFM $\Cc$,  and a global CRPDL formula $\varphi$.
  
    \noindent Question: Is there an existentially $B$-bounded MSC
  $M\in\lang(\Cc)$ with $M\models\varphi$?
  \end{quote}
\end{thm}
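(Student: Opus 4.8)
The plan is to mirror the proof of Theorem~\ref{theorem:satisfiability} and reduce the model checking problem to a non-emptiness test for a B\"uchi automaton over the alphabet $\Gamma=\Sigma\times\{0,\ldots,B-1\}$ that, reading a word $W_B$, simultaneously tracks a run of the CFM $\Cc$ and the satisfaction of $\varphi$ along the $B$-bounded linearization $W$. For membership in PSPACE I would first reuse the pipeline of Theorems~\ref{theorem:globalFormulasToAutomata} and~\ref{theorem:linearizationsAutomaton} as assembled in the proof of Theorem~\ref{theorem:satisfiability}: from $\varphi$ I obtain a 2APA $\cP_\varphi$ with polynomially many states, constructible in polynomial space, such that for every MSC $M$ and every $B$-bounded linearization $W$ of $M$ we have $W_B\in\lang(\cP_\varphi)$ if and only if $M\models\varphi$. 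Applying the alternation-elimination machinery recalled before Proposition~\ref{prop:emptiness} (Theorem~\ref{thm:alternation_elimination}), I would turn $\cP_\varphi$ into a B\"uchi automaton $\cB_\varphi$ of exponential size but whose states are representable in polynomial space and whose transition relation is decidable in polynomial space, with $\lang(\cB_\varphi)=\lang(\cP_\varphi)$.

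Next I would construct a B\"uchi automaton $\cA_\Cc$ over $\Gamma$ recognizing exactly those words $W_B$ for which $W$ is a $B$-bounded linearization of some MSC $M\in\lang(\Cc)$. Scanning $W_B$ left to right, $\cA_\Cc$ simulates $\Cc$: it stores the current local state of every process together with, for each channel $(p,q)\in\mathsf{Ch}$, the sequence of at most $B$ message contents currently in transit. On a symbol $(p!q,i)$ it lets process $p$ take a sending transition, guesses a message content $h\in H$, and appends $h$ to channel $(p,q)$; on a symbol $(q?p,i)$ it consumes the oldest content of $(p,q)$ in FIFO order and lets process $q$ take the matching receiving transition. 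Using the indices $i$ it verifies that $W_B$ genuinely encodes a $B$-bounded linearization (correct counters, FIFO matching, at most $B$ pending messages per channel), and its B\"uchi/final condition encodes the acceptance ($\mathsf{cofin}$) condition of $\Cc$ for both finite and infinite MSCs. This is the point where the main obstacle appears: since $\cA_\Cc$ must remember up to $B$ contents from $H$ on every channel, it has exponentially many states. The resolution is never to materialize these automata: every state of $\cA_\Cc$ is representable in space polynomial in $B$, $|H|$, $|\mathbb{P}|$ and the size of $\Cc$, and its transitions are checkable in polynomial space.

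Finally I would form the product B\"uchi automaton $\cB=\cB_\varphi\times\cA_\Cc$ by the standard intersection construction (with an extra flag bit to synchronize the two B\"uchi conditions). By the two correctness statements, $\lang(\cB)$ consists precisely of the words $W_B$ such that $W$ is a $B$-bounded linearization of an MSC $M\in\lang(\Cc)$ with $M\models\varphi$; note that $\cA_\Cc$ already restricts the input to valid encodings, so on the words surviving the product $\cB_\varphi$ correctly decides $M\models\varphi$. Hence $\lang(\cB)\neq\emptyset$ if and only if there is an existentially $B$-bounded $M\in\lang(\Cc)$ with $M\models\varphi$. The states of $\cB$ are triples, still polynomial-space-representable, and its transitions remain polynomial-space-checkable. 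Since our B\"uchi automata also accept finite words, non-emptiness amounts to searching for a reachable final state or for a reachable accepting state lying on a cycle; this reachability problem over an at most exponentially large graph whose vertices are polynomial-space-representable and whose edges are checked on the fly is solvable nondeterministically in polynomial space, hence in PSPACE by Savitch's theorem.

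The matching PSPACE lower bound is inherited from Theorem~\ref{theorem:satisfiability}: taking for $\Cc$ the CFM with a single message content whose processes loop on and accept every action, one has $\lang(\Cc)=$ the set of all MSCs, so the present problem specializes to deciding whether some existentially $B$-bounded MSC satisfies $\varphi$, which is exactly the bounded satisfiability problem and therefore PSPACE-hard. Combining both directions yields that the bounded model checking problem for CFMs and CRPDL is PSPACE-complete.
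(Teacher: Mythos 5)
Your proposal is correct and takes essentially the same route as the paper: pair the exponential-size but polynomial-space-constructible B\"uchi automaton $\cB_\varphi$ for the $B$-bounded linearizations of models of $\varphi$ (from the proof of Theorem~\ref{theorem:satisfiability}) with a B\"uchi automaton for the $B$-bounded linearizations of $\lang(\Cc)$, decide emptiness of the intersection on the fly in polynomial space, and inherit hardness from the satisfiability problem. The only difference is presentational: the paper simply cites \cite{DBLP:journals/corr/abs-1007-4764} for the CFM-to-B\"uchi construction and leaves the hardness reduction implicit, whereas you spell out both (your channel-tracking simulation $\cA_\Cc$ and the trivial all-accepting CFM are exactly what those appeals amount to).
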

\begin{proof}
  In \cite{DBLP:journals/corr/abs-1007-4764}, it was shown that one
  can construct in polynomial space a Büchi automaton $\cB_\Cc$ from
  $\Cc$ which recognizes exactly the set of all $B$-bounded
  linearizations of the MSCs from $\lang(\Cc)$. Its number of states
  is polynomial in the maximal number of local states a transition
  system of~$\mathcal{C}$ has and exponential in $B$. In the proof of
  Theorem~\ref{theorem:satisfiability}, we already constructed in
  polynomial space a B\"uchi automaton $\cB_\varphi$ of exponential
  size accepting the set of all $B$-bounded linearizations of the MSCs
  satisfying $\varphi$. Hence, the model checking problem can be
  decided in polynomial space. The PSPACE-hardness follows from the
  PSPACE-hardness of the satisfiability problem.
\end{proof}
\begin{rem}
  The model checking problem for CRPDL and high-level message sequence
  charts (HMSCs) asks, given an HMSC $\mathcal{H}$ and a global CRPDL
  formula~$\varphi$, is there an MSC $M\in\lang(\mathcal{H})$ with
  $M\models\varphi$.  Using techniques from
  \cite{DBLP:journals/corr/abs-1007-4764} and the ideas from the proof
  of Theorem~\ref{theorem:modelChecking}, it can be shown that this
  problem is also PSPACE-complete.
\end{rem}

\section{Open Questions}

It is an interesting open question whether the bounded model checking
problem of CFMs and CRPDL enriched with the intersection
operator~\cite{HKT00,DBLP:journals/corr/abs-1007-4764} is still in
PSPACE.  It also needs to be investigated whether PDL is a proper
fragment of CRPDL and if CRPDL and global MSCAs are expressively
equivalent.  Furthermore, we would like to know more about the
expressive power of CRPDL and global MSCAs in general, especially in
comparison with the existential fragment of monadic second-order logic
(EMSO).

\bibliographystyle{abbrv}
\bibliography{lit}

\end{document}